\documentclass[a4paper,11pt]{article}
\usepackage{fullpage}
\usepackage{libertine}

\usepackage{microtype}
\usepackage{graphicx}
\usepackage{booktabs} % for professional tables

\usepackage{subcaption}

\usepackage{algorithm} % For algorithms
\usepackage{algorithmic}
\usepackage{bm}
\usepackage{thm-restate}
% \renewcommand{\algorithmcfname}{ALGORITHM}

% \SetAlFnt{\small}
% \SetAlCapFnt{\small}
% \SetAlCapNameFnt{\small}
% \SetAlCapHSkip{0pt}
% \IncMargin{-\parindent}

\usepackage{amssymb,amsmath,amsfonts,amstext,amsthm}
\usepackage{thmtools} % for restate the theorems
\usepackage{mathtools} % for coloneqq:=

\usepackage{arydshln}

\usepackage[dvipsnames]{xcolor}         % colors
\definecolor{pku-red}{RGB}{139,0,18}

\usepackage{url}
\urlstyle{same}
\renewcommand\url[1]{#1}

\usepackage[backref, colorlinks,citecolor=blue,linkcolor=pku-red,urlcolor=pku-red,bookmarks=true]{hyperref}
\usepackage[nameinlink]{cleveref}
% \usepackage[vlined,ruled,linesnumbered]{algorithm2e} % For algorithms
% \SetCommentSty{small}

\usepackage{enumitem}

\usepackage{pgfplots}
\pgfplotsset{compat=newest}
\usepgfplotslibrary{ternary}
\usepackage{tikz}
\usetikzlibrary{patterns}

\usepackage{bbm}
\usepackage{natbib}%
\setcitestyle{sort,authoryear,square,citesep={;},aysep={,},yysep={;}}

%%%%%%%%%%%% OTHER MACROS %%%%%%%%%%%%%%%%%%%%%%

% ------------------------------

% Calligraphy

\newcommand{\calC}{\mathcal{C}}

\newcommand{\calH}{\mathcal{H}}

\newcommand{\calK}{\mathcal{K}}

\newcommand{\calN}{\mathcal{N}}

\newcommand{\calU}{\mathcal{U}}

%\mathbb

\newcommand{\bbE}{\mathbb{E}}

\newcommand{\bbP}{\mathbb{P}}

\newcommand{\bbR}{\mathbb{R}}

% vectors

\newcommand{\vb}{\bm{b}}

\newcommand{\dd}{\bm{d}}

\newcommand{\xx}{\bm{x}}
\newcommand{\yy}{\bm{y}}
\newcommand{\zz}{\bm{z}}

\newcommand{\bv}{\bm{v}}

\newcommand{\bx}{\bm{x}}

% bold alphas

\newcommand{\blambda}{\boldsymbol{\lambda}}
\newcommand{\bmu}{\boldsymbol{\mu}}

\newcommand{\bxi}{\boldsymbol{\xi}}
\newcommand{\brho}{\boldsymbol{\rho}}

% mathrm characters
\newcommand{\rmC}{\mathrm{C}}
\newcommand{\rmIP}{\mathrm{IP}}
\newcommand{\rmCP}{\mathrm{CP}}
\newcommand{\rmHP}{\mathrm{HP}}

% utility

\newcommand{\utilityC}{\Pi^\rmC_k}
\newcommand{\utilityIP}{\Pi^\rmIP_k}
\newcommand{\utilityCP}{\Pi^\rmCP_k}
\newcommand{\utilityHP}{\Pi^\rmHP_k}
\newcommand{\EXPutility}{U_k}
\newcommand{\EXPutilityCP}{U^\rmCP_k}
\newcommand{\EXPutilityHP}{U^\rmHP_k}
\newcommand{\EXPcost}{G_k}
\newcommand{\EXPcostCP}{G^\rmCP_k}
\newcommand{\EXPcostHP}{G^\rmHP_k}
\newcommand{\EXPvalue}{V_k}
\newcommand{\EXPvalueCP}{V^\rmCP_k}
\newcommand{\EXPvalueHP}{V^\rmHP_k}
\newcommand{\EXPSumUtility}{U^{\rm{S}}_k}
\newcommand{\EXPSumCost}{G^{\rm{S}}_k}

\newcommand{\wt}{\widetilde}

\DeclareMathOperator{\argmax}{argmax}

\DeclareMathOperator{\supp}{Supp}

\crefname{assumption}{assumption}{assumptions}
\Crefname{assumption}{Assumption}{Assumptions}

\newtheorem{proposition}{Proposition}[section]
\newtheorem{lemma}[proposition]{Lemma}

\newtheorem*{theorem*}{Theorem}
\newtheorem{corollary}[proposition]{Corollary}

\newtheorem{definition}{Definition}

\newtheorem{assumption}{Assumption}
\newtheorem{claim}{Claim}
\newtheorem{example}{Example}

% reference in approach descriptions

\usepackage{mdframed}
\newmdenv[skipabove=2mm, skipbelow=2mm, backgroundcolor=black!10]{boxedtext*}
\newmdenv[skipabove=2mm, skipbelow=2mm, rightline=true, leftline=true]{boxedtext}
% \usepackage{wrapfig}

%%%%%%%%%%%%%%%%%%%%%%%%%%%%%%%%%%%%%%%%%%%%%%%%%%%%%%%

\newcommand{\email}[1]{\href{mailto:#1}{#1}}

\newcommand\blfootnote[1]{%
  \begingroup
  \renewcommand\thefootnote{}\footnote{#1}%
  \addtocounter{footnote}{-1}%
  \endgroup
}

\Crefname{ALC@unique}{Line}{Lines}
\newcounter{myalg}
\AtBeginEnvironment{algorithmic}{\refstepcounter{myalg}}
\makeatletter
\@addtoreset{ALC@unique}{myalg}
\makeatother

%%%%%%%%%

\title{Coordinated Dynamic Bidding in Repeated Second-Price Auctions with Budgets}

\author{
\textbf{Yurong Chen}$^{1,*}$, 
\textbf{Qian Wang}$^{1,*}$\blfootnote{Equal contribution.}, 
\textbf{Zhijian Duan}$^{1}$,
\textbf{Haoran Sun}$^{1}$,\\
\textbf{Zhaohua Chen}$^{1}$,
\textbf{Xiang Yan}$^{2}$,
\textbf{Xiaotie Deng}$^{1}$
\\
$^{1}$Peking University
$^{2}$Huawei TCS Lab
\\
\email{\{chenyurong, charlie, zjduan\}@pku.edu.cn},~
\email{sunhaoran0301@stu.pku.edu.cn},\\
\email{chenzhaohua@pku.edu.cn},~
\email{xyansjtu@163.com},~
\email{xiaotie@pku.edu.cn}
}

\date{}

\begin{document}
\maketitle

\begin{abstract}
  In online ad markets, 
  a rising number of advertisers
  are employing bidding agencies to participate in ad auctions. 
  These agencies are specialized in designing online 
  algorithms and bidding on behalf of their clients. 
  Typically, an agency usually has information on multiple advertisers, so she can potentially coordinate bids to help her clients achieve higher utilities than those under independent bidding. 
  
In this paper,  we study coordinated online bidding algorithms in repeated second-price auctions with budgets. 
  We propose algorithms that guarantee every client a higher utility than the best she can get under independent bidding. 
 We show that these algorithms achieve maximal coalition welfare and discuss bidders' incentives to misreport their budgets, in symmetric cases. Our proofs combine the techniques of online learning and equilibrium analysis,  overcoming the difficulty of competing with a multi-dimensional benchmark. The performance of our algorithms is further evaluated by experiments on both synthetic and real data.  To the best of our knowledge, 
   we are the first to consider bidder coordination in online repeated auctions with constraints. 
\end{abstract}

% Paper body
% \pagestyle{headings}
% \pagenumbering{arabic}
\section{Introduction}

Online advertising has been the main source of revenue for important Internet companies~\citep{iab2021interactive}. A transaction in the market usually goes as follows:
upon the arrival of an ad query, an auction is held by the advertising platform among all advertisers, and the winner gets the opportunity to display her ad. Thus, advertisers engaging in such repeated auctions are confronted with an online decision problem. That is, they could not 
%predict
know the values of incoming queries, but have to decide their bids based on historical information to maximize their accumulated utilities. 

Recently, more and more advertisers are delegating their bidding tasks to specialized market agencies~\citep{Decarolis20marketing}. Advertisers only need to report their objectives and constraints within a given number of rounds. Typically, an advertiser aims to maximize her total utility subject to a budget constraint that limits her total payment.
With the requests received, these agencies will design corresponding online bidding algorithms and bid on behalf of these advertisers.
A bidding agency usually has information of all her clients, which leaves her a chance to coordinate bids to improve every client's utility. 

However, most existing online bidding algorithms 
focus on bidding for one bidder in the campaign~\citep{han2020optimal, golrezaei2021bidding, balseiro2022best, chen2022dynamic, feng2022online, wang2023learning}. The fundamental question of bidder coordination in online settings remains unanswered: can we design a coordinated online bidding strategy, such that each 
client advertiser enjoys a higher utility compared to the best they can get under independent bidding? 

 In this paper, we study bidder coordination 
 in repeated second-price auctions with budgets. The above question can be modeled as an online multi-objective optimization:  to show that an algorithm has the desired performance, one needs to compare each bidder's utility  with her corresponding benchmark utility. The main difficulty, 
compared to the problem of designing individual online bidding algorithms, is how we deal with the interplay of bidders in dynamic repeated settings. 
Since auctions are multi-player games, 
a bidder's utility is influenced by others' bids.

Indeed, one way to overcome the problem is to assume that members who gain more than their respective benchmark utility are able to compensate other members with money. 
In this way, optimizing the welfare of the whole coalition is sufficient~\citep{bachrach2011cooperative, allouah2017auctions}.  However, we argue that monetary transfer is unrealistic in the ad auction scenario. 
In reality, advertisers' first goal is usually to enlarge the influence of their products.
They participate in auctions 
% would enter the coalition aim 
to win more impressions for their ads, rather than simply conducting 
% money-to-money 
financial investments.

Therefore, we focus on coordinated online bidding algorithms without monetary transfer. 
We propose an algorithm, 
\textbf{H}ybrid coordinated \textbf{P}acing (HP), under which each bidder's expected utility is theoretically guaranteed to outperform her corresponding benchmark utility asymptotically. In designing HP, an important observation we follow is that the problem of coordinated online bidding strategy design can be reduced to two design tasks: the selection rule inside the coalition and the budget management strategies. A simple application of this principle, \textbf{C}oordinated \textbf{P}acing (CP), is another algorithm proposed in the paper. It already has the desired utility guarantees when bidders in the coalition share the same value distribution and the same amount of budget (i.e., symmetric case). 

In proving our theorems and overcoming the aforementioned difficulty, we combine the techniques of online learning and equilibrium analysis. We first show that these algorithms' performance converges. The problem then reduces to the comparison of utilities between the benchmark and the equilibrium these algorithms converge to. The analysis of equilibria, which are defined by non-linear complementarity problems and have no closed-form solutions, allows us to show that our algorithms maximize coalition welfare in the symmetric case. We believe all these techniques and results will be of generality and important use in future studies. Finally, we run experiments on both synthetic and real data to further illustrate how our algorithms improve every member's utility through coordination. 

\subsection{Related Work}

In recent years, individual online bidding problems with budget constraints in repeated auctions have been widely discussed in literature, promoted by the surging of auto-bidding services in the industry~\citep{aggarwal2019autobidding}. 
Theoretically, this problem is closely related to contextual bandits with knapsacks 
(CBwK)
~\citep{badanidiyuru2014resourceful, agrawal2016efficient, agrawal2016linear, sivakumar2022smoothed} as well as its specifications, e.g., online allocation problems~\citep{li2021online, balseiro2022best}. 
For the dynamic bidding problem, 
\citet{balseiro2019learning} set the foundation of pacing in repeated second-price auctions, by which a bidder would shade values by a constant factor as her bids. 
They proposed an optimal online bidding algorithm that adaptively adjusts the pacing parameter.
Subsequent works have extended the above results in multiple ways~\citep{golrezaei2021bidding,gaitonde2022budget, celli2022parity, gaitonde2022budget}. 
Our work considers the coordinated bidding scenario, in which bidders could cooperate to increase everyone's utility. 

\citet{graham1987collusive} studied coordinated strategies in second-price auctions and first proposed to select only one member in the coalition as the representative to bid in the auction. 
\citet{pesendorfer2000study, leyton2002bidding, aoyagi2003bid, skrzypacz2004tacit, romano2022power} also consider models that do not allow
% forbid
monetary transfer in the coalition. 
Among these researches, \citet{Decarolis20marketing, romano2022power} studied the computation problem of optimizing each bidder's utility in GSP and VCG auctions. 
In comparison, we consider the coordination problem in online repeated second-price auctions with budgets, which is one of the most practical scenarios in current ad markets.  

Technically, 
for the budget management strategy of each individual bidder, we adopt the adaptive pacing strategy introduced by \citet{balseiro2019learning}, which is based on dual gradient descent.
While algorithms and analysis based on duality are standard in online convex optimization~\citep{nemirovski2009robust, li2021online, golrezaei2021bidding, feng2022online}, due to the introduction of inner dual variables, our analysis of HP does not follow the standard dual analysis. 
The idea of selecting a representative to bid is from \citet{graham1987collusive} and confirmed to be applicable to the online setting in this paper. Moreover, it is important and nontrivial to design a fair adaptive selection rule, so as to make sure that everyone has fairly equal chances to bid:
While CP can be regarded as a natural application of individual pacing to coordinated bidding, it can only guarantee performance in symmetric cases. \citet{zhou2017mirror} also studied the case when all players learn simultaneously in repeated games. They established the last-iterate convergence of mirror descent strategies when the one-shot game satisfies the property of variational stability. However, their results do not apply directly to the settings when players' actions across rounds satisfy certain budget constraints. 
Moreover, our analysis includes an equilibrium analysis that compares per-person utilities in two equilibria.

Other works 
utilize the multi-agent reinforcement learning framework to design algorithms
\citep{JinSLGWZ18, guan21multi, chao22a, tan22learning}. They usually need to explore beforehand to get enough samples. Compared to them, 
the algorithms we propose do not need samples. They consider accumulated utilities and achieve asymptotically better performance than the benchmark.

\subsection{Paper Organization}
In \Cref{sec: model}, we introduce our model and benchmark. In \Cref{sec: cp}, we confirm the idea of selecting representatives to bid, and present the results of CP. \Cref{sec: hp} presents the results of HP. In \Cref{sec: discussion}, we show other properties our algorithms possess in symmetric cases. \Cref{sec: experiments} show experiment results of our algorithms. \Cref{sec: conclusion} concludes.

\section{Model and Benchmark}
\label{sec: model}
In this work, we consider the scenario in which $N$ bidders participate in $T$ rounds of repeated second-price auctions. 
In each round $t = 1, \ldots, T$, there is an available ad slot, auctioned by the advertising platform. 
There are $K$ among $N$ bidders who form a coalition $\calK = \{1, \cdots, K\}$. 
The value bidder $k\in \calK$ perceives for the ad slot in round $t$ is denoted by $v_{k, t}$ and is assumed to be i.i.d. sampled from a distribution $F_k$. 
We assume that $F_k$ has a bounded density function $f_k$, with a support over $[0, \bar{v}_k] \subseteq \bbR_+$. 
Each bidder $k\in \calK$ has a budget constraint $B_k$, which limits the sum of her payments throughout the period,
and we denote bidder $k$'s target expenditure rate by $\rho_k \coloneqq B_k/T\in (0, \bar{v}_k)$. 
As usual, we use bold symbols for vectors (or matrices), e.g., using $\bv_t$ without subscript $k$ to denote the vector $(v_{1, t}, \ldots, v_{K, t})$, and $\bv$ to denote $(\bv_{1}, \ldots, \bv_{T})$; the same goes for $\brho$ as well as other variables to be defined.

For $k \in \calK$, denote bidder $k$'s bid at round $t$ by $b_{k, t}$. For bidders outside the coalition, we assume their highest bid in round $t$, denoted by $d^O_t$, is sampled i.i.d. from a distribution $H$. This modeling follows the standard mean-field approximation \citep{iyer2014mean} when the number of bidders outside the coalition is large. Similar to $F_k$, $H$ is also assumed to have a bounded density function $h$.
Thus, the 
highest 
competing bid that bidder $k$ faces is:
\[
d_{k, t} \coloneqq \max \left\{\max_{i\in \calK: i\ne k}b_{i, t}, \ d^O_t\right\}.
\]
Let $x_{k, t} \coloneqq \bm{1}\left\{b_{k, t} \geq d_{k, t}\right\} \in \{0,1\}$ indicate whether bidder $k$ wins the ad slot in round $t$. 
We denote by $u_{k, t} \coloneqq x_{k, t}(v_{k, t} - d_{k, t})$ bidder $k$'s utility in round $t$ and by $z_{k, t} \coloneqq x_{k, t}d_{k, t}$ her corresponding expenditure for a second-price auction.

Now we formally model the coordination among bidders in $\calK$, which can also be viewed as a third-party agency bidding on behalf of all bidders in $\calK$, given their budget and value information.
We denote by $\calH^t_{\calK}$ the history available to the coalition before posting the bids in round $t$, defined as:
\[\calH^t_{\calK} \coloneqq \left\{\left\{\bv_{\tau}, \xx_{\tau}, \zz_{\tau}\right\}_{\tau=1}^{t-1}, \bv_t\right\}.\] 
% With the notation, 
A coordinated strategy maps $\calH^t_{\calK}$ to a (possibly random) bid vector $\vb_t$ for each $t$. 
We use $\calC$ to denote the set of all coordinated strategies, 
strategy $\rmC \in \calC$ feasible if it guarantees that for each bidder, her cumulative expenditures never exceed her budget for any realizations of values and competing bids, i.e., $\forall\bv, \dd^O$,
\[
    \sum^T_{t = 1} z^{\rmC}_{k,t} = \sum^T_{t = 1} \bm{1}\left\{b^{\rmC}_{k, t} \geq d^{\rmC}_{k, t}\right\} d^{\rmC}_{k, t} \leq B_k, \forall k\in \calK.
\]
We denote by $\Pi^{\rmC}_k$ bidder $k$'s expected payoff under coordinated algorithm $\rmC \in \calC$:
\begin{align*}
\Pi^{\rmC}_k &\coloneqq \bbE^{\rmC}_{\bv, \dd^O}\left[\sum^T_{t=1}u^{\rmC}_{k,t}\right] \\ 
&= \bbE^{\rmC}_{\bv, \dd^O}\left[\sum^T_{t=1} \bm{1}\left\{b^{\rmC}_{k, t} \geq d^{\rmC}_{k, t}\right\} \left(v_{k, t} - d^{\rmC}_{k, t}\right) \right],
\end{align*}
 where the expectation is taken with respect to the randomness of algorithm $\rmC$, the values of all coalition members $\bv$, and the highest bids outside the coalition $\dd^O$.

\subsection{Optimal Individual Budget Management Algorithm and Benchmark} 

A desirable coordinated bidding algorithm should benefit every participant.
That is, all participants should gain more utility than what they could obtain when they bid independently. 
This is also a requirement for stable coordination, as a bidder should have left the coalition if she gained less in it. A natural measurement is 
% to consider
bidders' utilities when everyone uses optimal individual bidding algorithms independently. 

As for individual bidding algorithms, it is known that adaptive pacing strategy \citep{balseiro2019learning} is asymptotically optimal in both stochastic and adversarial environments. The main idea of this strategy is to maintain a shading parameter $\lambda$, and bid $v / (1+\lambda)$ when the true value is $v$. The algorithm updates $\lambda$ according to each round's cost and tries to keep the bidder's average expenditure per auction close to the target expenditure rate. By doing so, the algorithm shares the risk as well as the opportunity among all rounds. 
For the completeness of our work, we state the individual adaptive pacing algorithm in \Cref{sec: ip}.

When all bidders follow adaptive pacing simultaneously, their shading parameters converge, assuming that the multi-valued function of their expected expenditures per round is strongly monotone (defined below). At the same time, their utilities also converge to the corresponding 
% market
equilibrium
utilities, 
which we regard as our benchmark.

Specifically,  consider $K$ bidders shade bids according to a profile of parameters $\blambda \in \bbR^K_+$. Denote bidder $k$'s expected expenditure per auction under $\blambda$ by
\begin{align}
\label{eqn: ip exp cost definition}
    \EXPcost(\blambda)\coloneqq  \bbE_{\bv, d^O}\left[\bm{1}\{v_k\geq (1 + \lambda_k) d_k\}\cdot d_k\right],
\end{align}
and bidder $k$'s expected utility over $T$ rounds by 
\begin{align*}
    \EXPutility(\blambda) \coloneqq T\cdot \bbE_{\bv, d^O}\left[\bm{1}\left\{v_k \geq \left(1+\lambda_k\right)d_k\right\}\cdot \left(v_k - d_k\right)\right],
\end{align*}
where we omit the subscript $t$ since the values and the highest bids are independent across rounds. 
We say function $\bm{F}: \mathbb{R}^K \rightarrow \mathbb{R}^K$ is $\gamma$-strongly monotone over a set $\mathcal{U} \subset \mathbb{R}^K$ if
\[\left(\blambda-\blambda^{\prime}\right)^{\top}\left(\bm{F}(\blambda^{\prime})-\bm{F}(\blambda)\right) \geq \gamma\left\|\blambda-\blambda^{\prime}\right\|_2^2, ~\forall \blambda, \blambda^{\prime} \in \mathcal{U}. \]

\begin{assumption}\label{asm: strongly-monotone}
There exists a constant $\gamma>0$ such that $\bm{G}=(G_k)^K_{k=1}$ is a $\gamma$-strongly monotone function over $\prod_{k=1}^K[0,\bar{v}_k/\rho_k]$.
\end{assumption}

We call the special coordinated algorithm, the simultaneous adoption of individual adaptive pacing, IP for short.
When \Cref{asm: strongly-monotone} holds, parameters under 
IP
converge to a fixed vector $\blambda^*$, defined by the following complementarity conditions:
\begin{equation}
\label{eqn: lambda star definition}
    \lambda^*_k\geq 0 \perp G_k(\blambda^*)\leq \rho_k, \quad \forall k\in \calK,
\end{equation}
where $\perp$ means that at least one condition holds with equality. The existence and uniqueness of $\blambda^*$ are guaranteed by the result of \citet{facchinei2003finite}. Moreover, its long-run average performance 
% under simultaneous adoption of IP 
converges to the expected utilities achieved with vector $\blambda^*$, i.e.,
\begin{align}
    \lim_{T\to \infty} \frac{1}{T}\left(\utilityIP - \EXPutility(\blambda^*) \right) = 0, \quad \forall k\in \calK.
\end{align}

We consider $\bm U(\blambda^*)$ as our benchmark. 
Notably, our goal is to design coordinated algorithms with better performance for every bidder inside the coalition, i.e., finding an algorithm $\rmC$ such that 
\[\liminf_{T \to \infty} \frac{1}{T}\left(\utilityC - U_k(\blambda^*)\right) \geq 0, \quad \forall k\in \calK.\]

\section{First Steps for Coordinated Strategy Design}
\label{sec: cp}

In this section, we present two important elements used in our algorithms. 
We first show that we only need to consider a class of ``bid rotation'' strategies to design satisfying coordinated bidding algorithms. 
We then use a simple algorithm 
to illustrate the non-triviality of the problem and the importance of maintaining competition inside the coalition. 

In a one-shot second-price auction where bidders have no constraint, as is revealed, the best collusive strategy is to select a representative to bid truthfully and let the other bidders inside the coalition bid zero. 
Such a strategy decreases the second-highest bid, alleviates the inner competition, and notably, improves the representative's utility without making others worse~\citep{graham1987collusive}. 
When it comes to the multi-round scenario with budget constraints, it is no longer optimal for bidders to bid truthfully. In fact, they tend to underbid to control their expenditures, making sure not to exceed their budgets~\citep{aggarwal2019autobidding}. However, we show that the idea of selecting representatives still works. 

\begin{definition}
	A coordinated bidding strategy is a \textit{bid rotation}, if it lets at most one bidder bid non-zero in each round.
\end{definition}

\Cref{lemma: bid rotation strategy} indicates that there always exists a best coordinated bidding algorithm that is a bid rotation. Therefore, it suffices to consider bid rotation strategies only. 
\begin{restatable}{lemma}{lemmabidrotationstrategy}
\label{lemma: bid rotation strategy}
    For any feasible coordinated strategy $\mathrm{C}$, there exists a feasible bid rotation strategy $\mathrm{C}^{\rm{P}}$ such that every member's expected utility under algorithm $\mathrm{C}^{\rm{P}}$ is at least as good as that under algorithm $\mathrm{C}$.
\end{restatable}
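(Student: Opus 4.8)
The plan is to turn an arbitrary feasible coordinated strategy $\rmC$ into a bid rotation by, in every round, retaining only the bid of the coalition member who would get the slot among the coalition under $\rmC$ and lowering every other coalition bid to $0$. The structural fact that drives everything is elementary: in a second-price auction whose external competition is summarized by the single bid $d^O_t$, if all coalition bids except the largest one are set to $0$, then the winner is unchanged --- it is still the top coalition bidder $k_t$ when $\tilde b_{k_t,t}\ge d^O_t$, and still nobody in the coalition otherwise --- while the price the winner pays drops from $\max\{(\text{runner-up coalition bid}),\,d^O_t\}$ down to just $d^O_t$. Formally I would first isolate this as a one-round lemma: for any bid profile, any $d^O_t$, and any fixed tie-breaking rule, $x^{\mathrm{P}}_{k,t}=\tilde x_{k,t}$ and $d^{\mathrm{P}}_{k,t}\le \tilde d_{k,t}$ for all $k\in\calK$, where a tilde denotes the outcome under the original profile and the superscript $\mathrm{P}$ the outcome after the suppression; hence $z^{\mathrm{P}}_{k,t}\le \tilde z_{k,t}$ and $u^{\mathrm{P}}_{k,t}\ge \tilde u_{k,t}$ pointwise, and every coalition member other than $k_t$ wins nothing and pays nothing either way.

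The real work is that $\rmC$ is history-dependent, so I cannot literally copy its bids: once the allocations and expenditures in round $t$ change, the history $\calH^{t+1}_\calK$ changes and $\rmC$ reacts differently thereafter. I would deal with this by running $\rmC$ as a \emph{shadow} process inside $\rmC^{\mathrm{P}}$: in round $t$, $\rmC^{\mathrm{P}}$ feeds a virtual history together with $\bv_t$ into $\rmC$, obtains a virtual profile $\tilde\vb_t$, lets $k_t$ be the highest coalition bidder under $\tilde\vb_t$ (ties resolved by the auction's rule, so that $k_t$ is exactly the coalition member who would win among the coalition), and actually posts $\tilde b_{k_t,t}$ for $k_t$ and $0$ for all other members. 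The key claim, proved by induction on $t$, is that this virtual history is measurable with respect to $\rmC^{\mathrm{P}}$'s own observations: by the one-round lemma the realized allocation $\xx^{\mathrm{P}}_t$ equals the virtual allocation $\tilde\xx_t$, and the only ingredient of the virtual expenditure $\tilde\zz_t$ that is not already observed --- the price $\tilde d_{k_t,t}=\max\{(\text{runner-up bid in }\tilde\vb_t),\,d^O_t\}$ that the virtual winner would have paid --- is recovered from the observed expenditure $z^{\mathrm{P}}_{k_t,t}=d^O_t$ together with the internally known profile $\tilde\vb_t$ (and it is $0$, needing no knowledge of $d^O_t$, whenever no coalition member wins). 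Thus $\rmC^{\mathrm{P}}$ is a well-defined coordinated strategy whose shadow run is \emph{exactly} a genuine run of $\rmC$ on the same realization of $(\bv,\dd^O)$ and the same internal randomness of $\rmC$.

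Putting the pieces together: since the shadow run is a genuine --- hence feasible --- run of $\rmC$, we get $\sum_{t=1}^{T} z^{\mathrm{P}}_{k,t}\le \sum_{t=1}^{T} \tilde z_{k,t}\le B_k$ for every $k\in\calK$ and every realization, so $\rmC^{\mathrm{P}}$ is feasible, and it is a bid rotation by construction. Summing $u^{\mathrm{P}}_{k,t}\ge \tilde u_{k,t}$ over $t$ and taking expectations over $\bv$, $\dd^O$ and the randomness of $\rmC$ yields $\Pi^{\mathrm{C}^{\mathrm{P}}}_k\ge \Pi^{\rmC}_k$ for every $k\in\calK$, which is the assertion. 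The step I expect to be the main obstacle is the inductive reconstruction --- verifying that the virtual history really is determined by $\rmC^{\mathrm{P}}$'s legitimate information --- together with a careful, uniform treatment of ties in the allocation rule, which is exactly why I define $k_t$ through the auction's own tie-breaking rather than as an arbitrary argmax.
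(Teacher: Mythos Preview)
Your proof is correct and rests on the same elementary transformation the paper uses --- keep only the coalition's highest bid in each round and zero the rest --- but you go considerably further than the paper in justifying it. The paper's proof simply writes ``let the bidding profile produced by strategy $\mathrm{C}$ be $(\boldsymbol{b}_1,\ldots,\boldsymbol{b}_T)$'' and applies the one-round suppression pointwise, without addressing the issue you flag: once expenditures change, the history $\calH^t_{\calK}$ changes, and a history-dependent $\mathrm{C}$ need not produce the same bids thereafter. Your shadow-process construction --- running $\mathrm{C}$ on a virtual history that $\mathrm{C}^{\mathrm{P}}$ can reconstruct from its own observations, because the realized allocation coincides with the virtual one and the realized payment $d^O_t$ together with the internally known virtual profile $\tilde\vb_t$ determines the virtual payment --- is exactly the missing coupling argument that makes the paper's informal proof rigorous. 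The paper also handles ties more loosely (``randomly choose $k\in\argmax\boldsymbol{b}_t$'' and an appeal to ties having measure zero), whereas you tie $k_t$ to the auction's own rule; both are fine here since values have densities, but yours is cleaner. In short: same idea, but your version closes a gap the paper leaves open.
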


Hereafter, we divide the algorithm design into two parts: designing the budget control strategies in the real auction and the selection rule inside the coalition. 
We use adaptive pacing for the budget management part, and the main difficulty lies in designing the rule for selecting the representative bidder each round. 

One 
%simple and 
natural idea is to choose the member with the highest bid to compete outside the coalition. 
Such an idea motivates coordinated pacing (CP,
Algorithm \ref{alg: coordinated pacing}). 
In CP, \Cref{alg:CP:select} shows the selection step inside the coalition: the one with the highest bid becomes the winner. 
In \Cref{alg:CP:bid}, the winner posts the same bid as in the coalition, while others post zero in the real auction. Finally, the pacing parameters are updated with the adaptive pacing technique. We note that the step size can be set in a more general way, with details in the appendix. Here, we use $1/\sqrt{T}$ to simplify the description.

\begin{algorithm}[tb]
   \caption{Coordinated Pacing (CP)}
   \label{alg: coordinated pacing}
\begin{algorithmic}[1]
\setcounter{ALC@unique}{0}
   \STATE {\bfseries Input:} $\epsilon = 1/\sqrt{T}$, $\bar{\xi}_k\geq\bar{v}_k/\rho_k$ for all $k\in\calK$. 
   \STATE Select an initial multiplier $\bxi_0\in [0,\bar{\xi}_k]^K$, and set the remaining budget of agent $k$ to $\wt{B}_{k,1}=B_k=\rho_kT$.
   \FOR{$t=1$ {\bfseries to} $T$}
   \STATE Observe the realization of $\bv_{t}$. 
   \STATE Select $k^*\in \argmax_{k \in \calK}\min\{v_{k,t} / (1+\xi_{k,t}), \tilde{B}_{k,t}\}$ (break ties arbitrarily). \label{alg:CP:select}
   \STATE For each $k$, post a bid \label{alg:CP:bid}
   \STATE ~~~~$b_{k, t} = \min \left\{\bm{1}\left\{k=k^*\right\} {v_{k,t}} / {(1+\xi_{k,t})}, \tilde{B}_{k,t}\right\}$.
   \STATE Observe the the expenditures $\zz_{t}$.
   \STATE For each $k$, update the multiplier by 
   \STATE ~~~~$\xi_{k,t+1} = P_{[0,\bar{\xi}_k]}\left(\xi_{k,t}-\epsilon(\rho_k-z_{k,t})\right)$.
   \STATE For each $k$, update the remaining budget by 
   \STATE ~~~~$\wt{B}_{k,t+1} = \wt{B}_{k,t}-z_{k,t}$. 
   \ENDFOR
\end{algorithmic}
\end{algorithm}

Now we give our theoretical guarantees for the performance of CP. 
First, we make assumptions on the function of bidders' expected expenditures per round. 
In the case of CP, denote the expected expenditure of bidder $k$ under CP
to be $\EXPcostCP$, which is
\[\EXPcostCP(\bxi)\coloneqq \bbE_{\bv,d^O}\left[\bm{1}\left\{v_k\geq (1+\xi_k)d_k\right\}d^O\right].\]
\begin{assumption}
\label{asm: cp}
 $\bm{G}^\rmCP$ is strongly monotone in $[0,\bar{\xi}_k]^K$.
\end{assumption}

Denote the expected utility of bidder $k$ under CP 
by $\Pi^{\rmCP}_k$. In the symmetric case, we show that the performance of CP
is better than the benchmark $\EXPutility(\blambda^*)$ asymptotically:
\begin{restatable}{theorem}{thmsymmetricdominanceofcp}
\label{thm: symmetric dominance of cp}
 Suppose that \Cref{asm: cp} holds. 
When every bidder has the same value distribution and budget, we have
\begin{align}
\label{eqn: symmetric dominance of cp}
    \liminf_{T\rightarrow\infty}\frac{1}{T}(\utilityCP - \EXPutility(\blambda^*))> 0, ~~\forall{k \in \calK}.
\end{align}
\end{restatable}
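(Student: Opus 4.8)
The plan is to follow the two-stage strategy announced in the introduction: reduce the claim to a comparison of the stationary equilibria that CP and IP converge to, and then settle that comparison by coupling the two algorithms run at a common multiplier profile. Concretely, I would first rerun the dual-based adaptive-pacing convergence analysis \citep{balseiro2019learning} --- the one behind the IP statement recalled in \Cref{sec: model} --- for CP. The only new feature is the inner selection step (\Cref{alg:CP:select}); since in any round it merely relabels which coalition member realizes a given expenditure, the dual updates and the standard estimate that budgets are exhausted only in the last $o(T)$ rounds carry over. This yields, under \Cref{asm: cp}: the multipliers $\bxi_t$ of CP converge to the unique $\bxi^*$ with $\xi^*_k\ge0\perp\EXPcostCP(\bxi^*)\le\rho_k$, and $\frac1T(\utilityCP-\EXPutilityCP(\bxi^*))\to0$, where $\EXPcostCP(\bxi)=\bbE_{\bv,d^O}[\bm{1}\{k=\argmax_j v_j/(1+\xi_j)\}\,\bm{1}\{v_k\ge(1+\xi_k)d^O\}\,d^O]$ and $\EXPutilityCP(\bxi)$ is the corresponding $T$-horizon expected utility, i.e. the paper's CP functionals with the inner selection written out. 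Together with the IP statement from \Cref{sec: model}, it then suffices to prove the \emph{static} inequality $\EXPutilityCP(\bxi^*)>\EXPutility(\blambda^*)$ for every $k\in\calK$.

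The key structural fact I would use is a coupling of CP and IP at the \emph{same} multiplier profile $\blambda$. In CP, \Cref{alg:CP:select} selects $\argmax_j v_j/(1+\lambda_j)$, which is exactly the coalition member who wins the internal competition under IP; hence bidder $k$ wins under CP iff she wins under IP, on the common event $\{v_k/(1+\lambda_k)\ge\max_{j\ne k}v_j/(1+\lambda_j)\}\cap\{v_k/(1+\lambda_k)\ge d^O\}$. On that event CP charges price $d^O$ while IP charges $\max\{\max_{j\ne k}v_j/(1+\lambda_j),\,d^O\}\ge d^O$. Therefore, for \emph{every} $\blambda$, $\EXPcostCP(\blambda)\le\EXPcost(\blambda)$ and $\EXPutilityCP(\blambda)\ge\EXPutility(\blambda)$, and moreover the utility gain equals the expenditure saving: $\EXPutilityCP(\blambda)-\EXPutility(\blambda)=\EXPcost(\blambda)-\EXPcostCP(\blambda)=\bbE[\bm{1}\{k\text{ wins at }\blambda\}\,(\max\{\max_{j\ne k}v_j/(1+\lambda_j),d^O\}-d^O)]$.

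Now specialize to the symmetric case. The complementarity systems defining $\blambda^*$ and $\bxi^*$ are invariant under relabeling $\calK$, so by the uniqueness from \Cref{asm: strongly-monotone}, \Cref{asm: cp} and \citet{facchinei2003finite}, both are constant vectors, $\blambda^*=\lambda^*\bm{1}$ and $\bxi^*=\xi^*\bm{1}$. Write $g(\lambda),g^{\rmCP}(\xi)$ and $U(\lambda),U^{\rmCP}(\xi)$ for the restrictions of $\EXPcost,\EXPcostCP$ and $\EXPutility,\EXPutilityCP$ to the diagonal; plugging constant vectors into \Cref{asm: cp} makes $g^{\rmCP}$ $\gamma$-strongly monotone, hence strictly decreasing, while $U^{\rmCP}$ is non-increasing because raising $\xi$ only shrinks the win event, on which the integrand $\bm{1}\{v_k=\max_j v_j\}(v_k-d^O)$ is non-negative. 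Feeding $\lambda^*\bm{1}$ into CP and combining the coupling with the complementarity bound $\EXPcost(\blambda^*)\le\rho_k$ gives $g^{\rmCP}(\lambda^*)\le g(\lambda^*)\le\rho_k$; since $g^{\rmCP}$ is strictly decreasing and $\xi^*$ is characterized by $\xi^*\ge0\perp g^{\rmCP}(\xi^*)\le\rho_k$, this forces $\xi^*\le\lambda^*$. Hence $\EXPutilityCP(\bxi^*)=U^{\rmCP}(\xi^*)\ge U^{\rmCP}(\lambda^*)\ge U(\lambda^*)=\EXPutility(\blambda^*)$, which is the weak form of the claim.

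The remaining step --- upgrading one of those $\ge$'s to a strict $>$ --- is where the hypotheses must really be used, and I expect it to be the main obstacle: if coordination never strictly lowered the clearing price on a positive-probability event, no coalition member would gain. I would split on whether the benchmark budget binds. If $\lambda^*=0$, then $g^{\rmCP}(0)\le\rho_k$ forces $\xi^*=0$ as well; \Cref{asm: cp} with $\gamma>0$ rules out $g^{\rmCP}$ being constant, so the coalition wins with positive probability, and since $F$ has full support on $[0,\bar v]$ the second-highest coalition value lies strictly between $d^O$ and the top value with positive probability, so $U^{\rmCP}(0)-U(0)=g(0)-g^{\rmCP}(0)>0$. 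If $\lambda^*>0$, the complementarity gives $\EXPcost(\blambda^*)=\rho_k>0$, so the coalition wins with positive probability at $\lambda^*$, and necessarily $(1+\lambda^*)d^O<\bar v$ with positive probability (else $\EXPcost(\blambda^*)=0\ne\rho_k$); the same full-support argument at the threshold $(1+\lambda^*)d^O$ then shows $g^{\rmCP}(\lambda^*)<g(\lambda^*)=\rho_k$, hence $\xi^*<\lambda^*$ strictly, and either strict monotonicity of $U^{\rmCP}$ or directly $U^{\rmCP}(\lambda^*)-U(\lambda^*)=g(\lambda^*)-g^{\rmCP}(\lambda^*)>0$ gives $\EXPutilityCP(\bxi^*)>\EXPutility(\blambda^*)$. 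Combining with the reduction yields $\liminf_{T\to\infty}\frac1T(\utilityCP-\EXPutility(\blambda^*))>0$ for all $k\in\calK$. The bookkeeping in this last case analysis --- invoking the full-support assumption and the non-degeneracy forced by \Cref{asm: cp} carefully enough that the strict inequality survives in every regime --- is the part I expect to be most prone to slips.
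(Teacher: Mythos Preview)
Your proposal is correct and follows essentially the same route as the paper: first reduce to the static comparison $\EXPutilityCP(\bxi^*)>\EXPutility(\blambda^*)$ via the convergence of CP (the paper's \Cref{lmm: convergence of CP}, whose proof it likewise defers to the Balseiro--Gur template under \Cref{asm: cp}), then use symmetry to pin $\bxi^*=\xi^*\bm 1$, $\blambda^*=\lambda^*\bm 1$, show $\xi^*\le\lambda^*$ from $\EXPcostCP\le\EXPcost$ at a common multiplier, and finish with a monotonicity/case-split argument on $\lambda^*=0$ versus $\lambda^*>0$. The only cosmetic difference is that you package the final comparison via the coupling identity $U^{\rmCP}(\lambda)-U(\lambda)=\EXPcost(\lambda)-\EXPcostCP(\lambda)$ and chain $U^{\rmCP}(\xi^*)\ge U^{\rmCP}(\lambda^*)>U(\lambda^*)$, whereas the paper writes the same thing as a value/cost decomposition (observing $\EXPvalueCP=\EXPvalue$ on the diagonal, \Cref{lmm: strictly decreasing value,lmm: relation of cost functions}); these are equivalent.
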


The proof of the theorem is given in \Cref{app: CP}. To prove Theorem \ref{thm: symmetric dominance of cp}, we first show that $\bxi$ converges to some equilibrium $\bxi^*$ and, therefore, 
the algorithm's performance converges to the corresponding expected utilities. 
The problem then reduces to the comparison of equilibrium utilities. Concretely, we show that $\bxi^*\leq \blambda^*$. This means that as an entirety, bidders in coalition $\calK$ are more competitive than when they independently bid, and therefore the coalition wins more items. However, the competition inside the coalition also becomes fiercer. 

Despite the good performance in the symmetric case, CP does not outperform the benchmark in all instances. In some cases, strong bidders may become stronger, and weak bidders become weaker. 
To see this, we consider a special case with two bidders forming a coalition. Bidder $1$'s value is always higher than the bidder $2$'s, and the outside competing bid $d^{O}$ is always $0$. For IP, which converges to our benchmark, bidder $1$'s expected expenditure per round is not zero. Therefore her shading parameter is non-zero, so long as her budget is sufficiently small. There is a chance that her bid is lower than bidder $2$'s, enabling bidder $2$ to compete and win the item. 
In CP, however,  bidder $1$ keeps being selected as the representative, and bidder $2$'s expected utility will be lower than hers in the benchmark. A detailed experimental example is given in \Cref{sec: experiments}. 

To avoid such cases and outperform the benchmark in all instances, one way is to independently maintain competition inside the coalition, giving each member a fair chance to become the representative. The ideas of bid rotation and maintaining the competition are important in designing HP, 
which we will show next.

\section{Hybrid Coordinated Pacing}
\label{sec: hp}

Inspired by CP (Algorithm \ref{alg: coordinated pacing}), we further propose a hybrid coordinated pacing (HP) algorithm shown 
 in Algorithm \ref{alg: hybrid coordinated pacing}.
 HP ensures that every coalition member can achieve a higher utility than the benchmark.

\begin{algorithm}[tb]
   \caption{Hybrid Coordinated Pacing (HP)}
   \label{alg: hybrid coordinated pacing}
\begin{algorithmic}[1]
\setcounter{ALC@unique}{0}
   \STATE {\bfseries Input:} $\epsilon = 1/\sqrt{T}$, $\bar{\mu}_k\geq \bar{v}_k/\rho_k$ for all $k\in \calK$. 
   \STATE Select an initial multiplier $\bmu_1 \in [0, \bar{\mu}_k]^K$, let the initial pseudo multiplier be $\blambda_1 = \bmu_1$, and set the remaining budget of agent $k$ to $\wt{B}_{k,1}=B_k=\rho_kT$.
   \FOR{$T=1$ {\bfseries to} $t$}
   \STATE Observe the realization of $\bv_{t}$.
   \STATE Let $b^{I}_{k, t} = \min \left\{v_{k, t}/(1 + \lambda_{k, t}), \tilde{B}_{k,t}\right\}$ and $d^{I}_{k, t} = \max_{i\in \calK: i\neq k} b^{I}_{i, t}$ for each $k$.
   \STATE Select $k^* \in \arg \max_k b^{I}_{k, t}$ (breaking ties arbitrarily). \label{line: internal election}
   \STATE For each $k$, post a bid
   \STATE ~~~~$b^{O}_{k, t} = \min \left\{\bm{1}\left\{k = k^*\right\} v_{k,t}/(1+\mu_{k, t}), \tilde{B}_{k,t}\right\}$.
   \STATE Observe the allocations $\bx_{t}$ and the expenditures $\zz_{t}$.
   \STATE For each $k$, update the pseudo multiplier by 
   \STATE ~~~~$\lambda_{k,t+1} = P_{[0,\bar{\mu}_k]}\left(\lambda_{k,t} - \epsilon\left(\rho_k - z'_{k, t}\right)\right), $ \label{line: lambda update}
   \STATE where $z'_{k, t} = \bm{1}\left\{b^{I}_{k, t} \geq z_{k, t} \right\}\max \left(z_{k,t}, x_{k,t}d^{I}_{k, t}\right).$ 
   \STATE For each $k$, update the multiplier by 
   \STATE ~~~~$\mu_{k,t+1} = P_{[0,\lambda_{k,t+1}]}\left(\mu_{k,t} - \epsilon\left(\rho_k-z_{k,t}\right)\right).$ \label{line: mu smaller than lambda}
   \STATE For each $k$, update the remaining budget by 
   \STATE ~~~~$\wt{B}_{k,t+1} = \wt{B}_{k,t}-z_{k,t}$.
   \ENDFOR
\end{algorithmic}
\end{algorithm}

More specifically, HP draws a clear line between the internal election and external competition. 
In addition to the multiplier $\mu_k$ used to compete with bidders outside the coalition, 
HP maintains for each bidder $k$ a pseudo multiplier $\lambda_k$ for the internal election. 
The coalition first holds an internal election to select a representative and 
recommends her 
to compete with other bidders outside the coalition.
When the value is $v_k$ (and the budget is still sufficient), bidder $k$ bids $v_k/(1+\lambda_k)$ in the internal election, and bids $v_k/(1+\mu_k)$ in the real auction if she wins. 
To make a good distinction, we label the two ranking bids as $b^I_{k}$ and $b^O_{k}$ with superscripts. 

By updating $\lambda_k$ and $\mu_k$ separately, we are able to make sure that every member gets at least the same opportunities to bid outside the coalition and win the item, as they do in IP. To achieve this, we let the inner dual variables update exactly as the dual variables in IP. Therefore, in the updating rule of $\mu_k$, we require that the inner variable always be greater than the outer variable, i.e., $\mu_{k,t}\leq \lambda_{k,t}$, so that if an agent wins at a round in IP, she will also win at the same round in HP. In this way, HP can observe every necessary expenditure $z_{t}$ that will be used to update the dual variables in IP.

We denote by $\EXPcostHP(\mu_k, \blambda)$ bidder $k$'s expected expenditure per round when all coalition members shade bids according to $\bmu$ and $\blambda$ (ignoring their budget constraints):
\begin{equation}
    \begin{aligned}
    \label{eqn: hp exp cost definition}
        \EXPcostHP(\mu_k, \blambda) =\bbE_{\bv, d^O}\left[\bm{1}\left\{\frac{v_k}{1 + \lambda_k} \geq d^I_k \land \frac{v_k}{1 + \mu_k} \geq d^O \right\}d^O\right], 
    \end{aligned} 
\end{equation}
and we denote by $\EXPutilityHP(\mu_k, \blambda)$ her corresponding expected utility in this case:
% \small
\begin{equation*}
    \begin{aligned}
    % \label{eqn: hp exp utility definition}
        \EXPutilityHP(\mu_k, \blambda) = T\cdot \bbE_{\bv, d^O}\left[\bm{1}\left\{\frac{v_k}{1 + \lambda_k} \geq d^I_k \land \frac{v_k}{1 + \mu_k} \geq d^O \right\}(v_k - d^O)\right].
    \end{aligned}
\end{equation*}

We now make the following assumptions.

\begin{assumption}\label{asm: hp}
\begin{enumerate}
    \item There exists a constant $\gamma>0$ such that $\bm G = (G_k)_{k = 1}^K$ is a $\gamma$-strongly monotone function over $\prod_{k=1}^K[0,\bar{\mu}_k]$.
    \item There exists $\underline{G}'>0$ such that $\partial \EXPcostHP/\partial \mu_k < - \underline{G}'$ over $\prod_{k=1}^{K+1}[0,\bar{\mu}_k]$ for every bidder $k\in \calK$.
\end{enumerate}
\end{assumption}

\Cref{asm: hp} regularizes the behavior of $\EXPcostHP$ with two assumptions on $\blambda$ and $\mu_k$, respectively.
The first one directly follows from \Cref{asm: strongly-monotone}, which guarantees the convergence of $\blambda$.
Interestingly, we only require $\bm{G}$ instead of $\bm{G}^{\rm HP}$ to be $\gamma$-strongly monotone 
% function
with respect to $\blambda$, which differs from that \Cref{thm: symmetric dominance of cp} requires the strong monotonicity of $\bm{G}^\rmCP$. 
The second one assures the strict monotonicity of $\EXPcostHP$ and $\EXPutilityHP$ with respect to $\mu_k$. Intuitively, for a fixed $\blambda$, if both $\EXPcostHP$ and $\EXPutilityHP$ strictly decrease in $\mu_k$, bidder $k$ tends to decrease $\mu_k$ as long as her budget constraint is not binding, so as to maximize her utility. Therefore, there exists a unique equilibrium $\mu^*_k$ such that either $\mu^*_k = 0$ or any $\mu_k$ lower than $\mu_k^*$ would break the constraint, and we will show that the dynamically updated sequence of $\mu_{k,t}$ converges to such an equilibrium under HP.
We further provide a sufficient condition for the second assumption to hold in \Cref{lem: lipschitz hp}.

\begin{restatable}{theorem}{thmdominanceofhp}
\label{thm: dominance of hp}
Suppose that \Cref{asm: hp} holds. We have
\begin{align}
\label{eqn: dominance of hp}
    \liminf_{T\rightarrow\infty}\frac{1}{T}(\utilityHP - \EXPutility(\blambda^*)) \geq 0, \forall{k \in \calK},
\end{align}
and the equality holds for at most one bidder.
\end{restatable}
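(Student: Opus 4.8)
The plan is to follow the two-phase strategy already used for CP: first establish convergence of the dynamically updated multipliers under HP, reducing the claim to a static comparison of equilibrium utilities; then prove the equilibrium comparison $\bmu^*\le\blambda^*$ together with the ``at most one equality'' statement.

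For the convergence phase, I would argue in two nested layers, exploiting the fact that the pseudo multipliers $\lambda_{k,t}$ are updated in \Cref{line: lambda update} using the surrogate cost $z'_{k,t}$, which is designed so that $\blambda$ evolves \emph{exactly} as the IP dual sequence does. Concretely, I would first check that $z'_{k,t}$ equals the expenditure bidder $k$ would have incurred had the coalition run IP on the same sample path, so that the whole $\blambda$-trajectory under HP is a coupled copy of the IP trajectory; \Cref{asm: hp}(1) (the strong monotonicity of $\bm G$ over $\prod[0,\bar\mu_k]$) then gives $\lambda_{k,t}\to\lambda_k^*$ via the standard dual-gradient-descent argument of \citet{balseiro2019learning}, and hence $\Pi^{\rmHP}_k$ restricted to the internal election behaves like IP. Given this, I would treat $\blambda_t$ as a slowly-varying (asymptotically frozen) parameter and analyze the $\mu_{k,t}$ recursion in \Cref{line: mu smaller than lambda}: because $\EXPcostHP$ is strictly decreasing in $\mu_k$ by \Cref{asm: hp}(2), the one-dimensional projected update $\mu_{k,t+1}=P_{[0,\lambda_{k,t+1}]}(\mu_{k,t}-\epsilon(\rho_k-z_{k,t}))$ is a contraction toward the unique $\mu_k^*$ solving the complementarity condition $\mu_k^*\ge 0\ \perp\ \EXPcostHP(\mu_k^*,\blambda^*)\le\rho_k$ (with the caveat that the feasible interval's upper endpoint $\lambda_{k,t+1}$ is itself moving, which the coupling in the previous step controls). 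Combining, $\tfrac1T\Pi^{\rmHP}_k \to \tfrac1T\,\EXPutilityHP(\mu_k^*,\blambda^*)$, and the same convergence statement holds for $\Pi^{\rmIP}_k\to\EXPutility(\blambda^*)$.

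For the equilibrium-comparison phase, I would first prove $\mu_k^*\le\lambda_k^*$ for every $k$: the update of $\mu_k$ projects onto $[0,\lambda_{k,t+1}]$, so the inequality is maintained along the trajectory and passes to the limit; alternatively, from the complementarity characterizations, if $\lambda_k^*>\mu_k^*$ is violated one derives a contradiction with \Cref{asm: hp}(2). Next, since in the internal election the bids are $v_k/(1+\lambda_{k})$ — i.e. the coalition is collectively at least as aggressive externally as in IP because $\mu_k^*\le\lambda_k^*$ — bidder $k$ wins at least every round she would win in IP (this is exactly the design invariant the authors flag when introducing $z'$), and on those rounds her payment is the \emph{outside} competing bid $d^O$ rather than the possibly-higher $d_k$ that includes intra-coalition competition; on the extra rounds she wins in HP but not IP her marginal utility is nonnegative because she only bids when $v_k/(1+\mu_k)\ge d^O$. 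A pathwise (or in-expectation) accounting of these two effects gives $\EXPutilityHP(\mu_k^*,\blambda^*)\ge\EXPutility(\blambda^*)$ for all $k$, which is the $\liminf\ge 0$ claim. For the ``equality for at most one bidder'' refinement, I would suppose two bidders $k\ne j$ both attain equality; equality forces, for each of them, that the set of rounds won in HP coincides (up to null sets) with that in IP and that payments coincide, which in turn forces $\mu_k^*=\lambda_k^*$ and $\mu_j^*=\lambda_j^*$ and that neither ever displaces the other in the internal election — but the internal election is a single-winner rotation, so two distinct bidders cannot \emph{both} have their IP-winning rounds survive untouched unless those round-sets are disjoint, and strict positivity of the relevant densities ($f_k,f_j,h$ bounded with full support on their intervals) makes that a measure-zero coincidence, contradiction. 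Hence at most one equality.

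The main obstacle is the convergence of the \emph{coupled, two-timescale} pair $(\blambda_t,\bmu_t)$: unlike CP, the relevant cost feeding the $\bmu$-update, $\EXPcostHP(\mu_k,\blambda)$, is not assumed strongly monotone as a map in $\bmu$ jointly — only the scalar monotonicity in $\mu_k$ (\Cref{asm: hp}(2)) is available — and the projection interval $[0,\lambda_{k,t+1}]$ has a moving, stochastic endpoint. Making the ``freeze $\blambda$, then handle $\mu_k$'' decomposition rigorous (controlling the drift of $\blambda_t$ relative to the $\mu$-dynamics, and handling the projection onto a time-varying set) is where the real work lies; I expect the authors handle it by bounding $\|\blambda_t-\blambda^*\|$ at rate $O(1/\sqrt T)$ first and then absorbing that error as an $O(1/\sqrt T)$ perturbation of a genuinely one-dimensional contraction for each $\mu_{k,t}$, so that all limits exist and the equilibrium comparison above applies.
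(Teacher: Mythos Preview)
Your convergence outline and the main inequality are correct and match the paper closely. You correctly identify that $z'_{k,t}$ makes the $\blambda$-trajectory coincide with IP (the paper verifies this by a short case split on $x_{k,t}\in\{0,1\}$), that \Cref{asm: hp}(1) then gives $\blambda_t\to\blambda^*$ via the IP analysis of \citet{balseiro2019learning}, and that \Cref{asm: hp}(2) drives each $\mu_{k,t}$ toward the unique $\mu_k^*\le\lambda_k^*$ solving \eqref{eqn: mu star definition} once the $\blambda$-error is controlled; the paper handles the moving projection endpoint precisely as you anticipate, by bounding the positive projection error in expectation by $O(\epsilon^2)$ per round and absorbing it. Your pathwise accounting for the inequality is equivalent to the paper's two-step chain $\EXPutilityHP(\mu_k^*,\blambda^*)\ge\EXPutilityHP(\lambda_k^*,\blambda^*)\ge\EXPutility(\blambda^*)$, where the first step is your ``extra rounds'' term and the second your ``lower payment'' term.

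The genuine gap is your ``at most one equality'' argument. The internal election in HP uses exactly $\blambda^*$, the same multipliers as IP, so there is no ``displacement'' relative to IP to speak of; and the IP-winning round-sets of two distinct bidders are \emph{always} disjoint (one winner per round), so your disjointness observation is vacuous, not a contradiction. Your appeal to strictly positive densities is also not part of \Cref{asm: hp}, and if you did impose full-support positive $f_i$ and $h$ down to zero, a corrected version of your argument would in fact show equality fails for \emph{every} bidder, which is a different statement and does not cover instances like \Cref{eg: at most one}. The paper's route is quite different and does not pass through $\mu_k^*=\lambda_k^*$ at all: it works only with the second inequality in the chain, derives the closed-form gap
\[
\EXPcost(\blambda^*)-\EXPcostHP(\lambda_k^*,\blambda^*)=\int\bar F_k\bigl((1+\lambda_k^*)x\bigr)\,l_k(x)\left(\int_{y\le x}H(y)\,dy\right)dx\ \ge\ 0,
\]
where $l_k$ is the density of $d^I_k$, and shows this integral is strictly positive for every $k$ except possibly the single bidder $k'=\arg\min_i \hat v_i/(1+\lambda_i^*)$. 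The key step is that vanishing of the integral for some $k\ne k'$ forces $l_k\equiv 0$ on an interval where bidder $k'$'s shaded bid has positive mass, which in turn forces bidder $k'$ almost never to win in IP and hence makes $G_{k'}(\cdot,\blambda^*_{-k'})$ identically zero on a nondegenerate interval of $\lambda_{k'}$ --- contradicting the strong monotonicity of $\bm G$ in \Cref{asm: hp}(1). This pins down the one possible exception using only the stated assumptions.
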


\Cref{thm: dominance of hp} establishes that with HP,
 all but one bidder in the coalition
 % a coalition of size $K$
 can get higher utilities, compared to  $\EXPutility(\blambda^*)$ in an asymptotic sense. Unlike CP,
 this performance guarantee holds in a general setting where bidders may have different budget constraints or value distributions. It is worth noticing that, both HP, CP and IP have the same convergence rates. Therefore, HP and CP are also comparable to IP, taking the convergence rates into consideration. 

To prove \Cref{thm: dominance of hp}, we first show the convergence of both the multipliers $\bmu$ and the pseudo multipliers $\blambda$. Observe that the update of $\lambda_{k}$ is equivalent to 
\begin{equation}
\label{eqn: lambda update equivalent}
    \lambda_{k, t+1} = P_{[0,\bar{\lambda}_k]}\left(\lambda_{k, t} - \epsilon\left(\rho_k - \bm{1}\left\{b^{I}_{k,t} \geq d_{k, t}\right\}d_{k, t}\right)\right),
\end{equation}
which coincides with the subgradient descent scheme of an individual adaptive pacing strategy. Note that the Euclidean projection operator in Line \ref{line: mu smaller than lambda} assures that for all $t$, $$\mu_{k,t} \leq \lambda_{k, t} \Longrightarrow \frac{v_{k, t}}{1 + \lambda_{k, t}} \leq \frac{v_{k,t}}{1+\mu_{k, t}}.$$ Thus by bidding $v_{k,t}/(1+\mu_{k, t})$, we can always discern $\bm{1}\left\{v_{k, t}/(1 + \lambda_{k, t}) \geq d^O_{t} \right\}$. 
This provides enough information to let the sequence of pseudo multipliers exactly behave as if bidders are independently pacing, and converge to the same $\blambda^*$ defined by \eqref{eqn: lambda star definition}.

At the same time, we show that the sequence of multipliers $\bmu_t$ converges to a unique $\bmu^*$, defined by the complementarity conditions,
\begin{equation}
\label{eqn: mu star definition}
\mu^*_k\geq 0 \perp \EXPcostHP(\mu^*_k, \blambda^*)\leq \rho_k, \forall k\in \calK.
\end{equation}

Consequently, we show that the expected utility of bidder $k$ under HP
converges to $\EXPutilityHP(\mu^*_k, \blambda^*)$. Finally, the proof of \Cref{thm: dominance of hp} is concluded by a comparison between $\EXPutilityHP(\mu^*_k, \blambda^*)$ and $\EXPutility(\blambda^*)$. 

One may consider whether the inequality can strictly hold for everyone as in \Cref{thm: symmetric dominance of cp}. Here we provide a concrete example in which a bidder gains no utility boost.

\begin{example}
\label{eg: at most one}
We consider two bidders with no budget constraints. 
Bidder $1$'s value $v_1$ is uniformly distributed over $[0, 2]$, while bidder $2$' value $v_2$ is uniformly distributed over $[0, 1]\cup[4, 5]$. 
The highest competing bid of other bidders $d^O$ follows a uniform distribution over $[1, 5]$. One can verify that this instance satisfies \Cref{asm: hp}.

When bidding independently, the optimal strategy for each bidder is to bid truthfully with respect to second-price auctions, i.e., $\lambda_1 = \lambda_2 = 0$.
Then bidder 1 wins an auction only if $v_2\in [0, 1]$ and $v_1\geq d^O \geq 1$, which means her winning payment is always equal to $d^O$. 
Therefore, bidder 1 cannot benefit from forming a coalition with bidder 2. 
\end{example}

\section{Other Properties of Proposed Algorithms}
\label{sec: discussion}

In this section, we discuss the properties of CP and HP in the symmetric case. Recall that by showing the performance of the algorithms converges, we are able to study their utility properties by analyzing the equilibria they converge to. 
In what follows, we will assume the corresponding assumptions hold (i.e., \Cref{asm: strongly-monotone,asm: cp,asm: hp}), and all the results hold the in an asymptotic sense. 
We note that since these equilibria are defined as solutions to non-linear complementarity problems (NCP), and do not have closed forms, they can have various possibilities and are hard to analyze. Therefore, we leave the exploitation of their properties in asymmetric cases as an open question. 

\subsection{Coalition Welfare Maximization}   
We first show that CP and HP
are the best coordinated algorithms in the sense of coalition welfare maximization and Pareto optimality.
Specifically, we consider the hindsight maximal coalition welfare, given a realization of $(\bv;\dd^O)$, the values and highest bids outside the coalition:
 \begin{equation*}
 \begin{aligned}
 \pi^{\mathrm{H}}\left(\bv; \dd^O\right) \coloneqq \max_{\bx_{k} \in\{0,1\}^{K\times T}} & \sum_{t=1}^{T} \sum_{k\in \calK} x_{k, t}(v_{k, t} - d^O_{t}), \\
 \text { s.t. } & \sum_{t=1}^{T} x_{k, t} d^O_{ t} \leq T \rho_{k}, \forall k \in \calK.
 \end{aligned}
 \end{equation*}
We further say a bidding algorithm $\rmC$ is Pareto optimal in the symmetric case, if there does not exist another coordinated bidding algorithm, such that in some symmetric instance, every bidder's expected utility is no worse than that under $\rmC$, and at least one bidder's expected utility is strictly better.  
\begin{restatable}{theorem}{lmmparetooptimalityofhp}
\label{lmm: pareto optimality of hp}
When every bidder in the coalition has the same value distribution and budget, CP  and  HP maximize the total expected utility of the coalition asymptotically,
\[
\liminf_{T\rightarrow\infty}\frac{1}{T}(\sum_{k\in\calK}\Pi^{\rmCP/\rmHP} - \bbE[\pi^{\rm{H}}])\geq 0. 
\]
Therefore, they are Pareto optimal. 
\end{restatable}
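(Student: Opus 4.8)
The plan is to prove the coalition-welfare bound for the symmetric case by exploiting the equilibrium characterization already established: in the symmetric case every bidder is interchangeable, so the relevant equilibria $\bxi^*$ (for CP) and $(\bmu^*,\blambda^*)$ (for HP) are symmetric, i.e. all coordinates are equal. First I would reduce the hindsight welfare $\bbE[\pi^{\rm H}(\bv;\dd^O)]$ to a single-parameter optimization: by the bid-rotation reduction (\Cref{lemma: bid rotation strategy}) the optimal allocation lets at most one bidder win per round, so $\pi^{\rm H}$ is the value of a knapsack-type program where, in each round, the coalition either skips the item or awards it to some member $k$, paying $d^O_t$ against the shared-type budget $T\rho$ and collecting $v_{k,t}-d^O_t$. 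As $T\to\infty$, by a concentration/fluid-relaxation argument (the same one underlying the convergence results cited for IP, CP, HP), $\tfrac1T\bbE[\pi^{\rm H}]$ converges to the value of the continuous relaxation: choose, for each realized profile $(\bv,d^O)$, to allocate to $\argmax_k v_k$ whenever $\max_k v_k\ge d^O$ and whenever a "value threshold" is met, where the threshold is the Lagrange multiplier dual to the aggregate budget constraint $\bbE[\text{spend}]\le K\rho$ (the coalition's budgets pool in the hindsight problem because any member can be the winner). This identifies $\tfrac1T\bbE[\pi^{\rm H}]\to \Psi(\eta^*)$ for the optimal dual $\eta^*$.

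Next I would show that the limiting per-round coalition utility under CP (resp. HP) equals exactly this same fluid optimum. Under CP the algorithm converges to $\bxi^*$; in the symmetric case $\xi^*_1=\dots=\xi^*_K=:\xi^*$, and the winner each round is $\argmax_k v_k/(1+\xi^*)=\argmax_k v_k$ — so CP allocates the item to the highest-value member precisely when $\max_k v_k/(1+\xi^*)\ge d^O$, i.e. it uses the shading threshold $\xi^*$ in place of the dual threshold. The complementarity condition \eqref{eqn: lambda star definition}-analogue $\xi^*\ge 0\perp G^{\rm CP}(\bxi^*)\le\rho$ says that at equilibrium each bidder spends on average exactly $\rho$ (or $\xi^*=0$ if budgets never bind). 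But in the symmetric hindsight problem the optimal dual $\eta^*$ is characterized by the identical complementarity condition: aggregate spend $=K\rho$, i.e. per-winner-eligible spend $=\rho$. Since the map from threshold to expected spend is monotone (this is where \Cref{asm: cp} / the strong monotonicity enters, guaranteeing a unique crossing), the shading parameter $\xi^*$ and the dual threshold $\eta^*$ must induce the same allocation region, hence the same coalition welfare. The same argument runs for HP: symmetry forces $\lambda^*_1=\dots=\lambda^*_K$ and $\mu^*_1=\dots=\mu^*_K$, the internal election still selects $\argmax_k v_k$, and the external multiplier $\mu^*$ satisfies $\mu^*\ge0\perp G^{\rm HP}(\mu^*,\blambda^*)\le\rho$, again pinning the allocation region to the welfare-optimal one. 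Summing the convergence statements $\tfrac1T(\Pi^{\rm CP/HP}_k - U^{\rm CP/HP}(\text{eq}))\to0$ over $k$ and comparing with $\Psi(\eta^*)$ closes the welfare inequality.

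Finally, Pareto optimality is a direct corollary: if some coordinated algorithm weakly dominated CP (resp. HP) in a symmetric instance with one strict improvement, then summing utilities would give strictly larger coalition welfare than $\tfrac1T\bbE[\pi^{\rm H}]$ in the limit, contradicting that $\pi^{\rm H}$ is the pathwise hindsight maximum and hence an upper bound on $\sum_k \Pi_k$ for every feasible coordinated strategy (feasibility of the budget constraints is exactly what makes any algorithm's realized allocation a feasible point of the $\pi^{\rm H}$ program).

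The main obstacle I expect is the first reduction — rigorously passing from the discrete $\bbE[\pi^{\rm H}]$ to its fluid value and, more delicately, proving that the equilibrium shading parameter $\xi^*$ (or $\mu^*$) coincides with the welfare-dual threshold $\eta^*$ rather than merely being comparable to it. The clean way is to note both are the unique solution of the same scalar complementarity equation "expected spend at threshold $= \rho$", which requires (i) that in the symmetric case CP's and HP's equilibrium conditions genuinely decouple into this scalar condition per bidder, and (ii) strict monotonicity of the expected-spend function in the threshold, supplied by \Cref{asm: cp} and \Cref{asm: hp} respectively. Handling the boundary case $\xi^*=0$ (budgets slack, truthful bidding, welfare trivially maximal) separately keeps the argument honest.
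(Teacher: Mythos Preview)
Your approach is essentially the paper's: upper-bound $\bbE[\pi^{\rm H}]$ via Lagrangian duality, then identify the dual optimum with the CP/HP equilibrium by showing both satisfy the same complementarity condition in the symmetric case. Two places where your execution diverges from the cleanest path are worth flagging.

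First, you do not need any fluid-relaxation or concentration argument for $\pi^{\rm H}$. For each realization $(\bv,\dd^O)$, weak LP duality already gives $\pi^{\rm H}(\bv;\dd^O)\le \inf_{\bmu\ge 0}\phi(\bmu;\bv,\dd^O)$; taking expectations and using $\bbE[\inf]\le\inf\bbE$ yields $\bbE[\pi^{\rm H}]\le \inf_{\bmu\ge 0}\Phi(\bmu)$ for every finite $T$. The paper proceeds exactly this way, and it avoids the asymptotics you were worried about in the first reduction.

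Second, the hindsight problem carries $K$ per-bidder budget constraints, not a pooled one, and your justification that ``budgets pool because any member can be the winner'' is not correct: realized values $v_{k,t}$ differ across members, so one cannot freely reassign wins to balance spending. The right reduction is on the dual side. With $K$ multipliers $\bmu$, the expected dual $\Phi(\bmu)$ is symmetric in its arguments in the symmetric instance, so its minimizer $\bmu^{**}$ has all coordinates equal; then the KKT condition $\mu^{**}_k\ge 0\perp G^{\rm S}_k(\bmu^{**})\le\rho$ is a scalar equation. The last step (which you correctly identified as the crux) is checking that when all multipliers coincide, $G^{\rm S}_k$ and $G^{\rm CP}_k$ have the same form, so $\bmu^{**}=\bxi^*$ by uniqueness; the HP case follows because at a symmetric equilibrium the HP expenditure and utility functions also collapse to the CP form. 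With these two fixes your outline goes through, and your Pareto-optimality corollary is the intended one.
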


A key of the proof is that the sum of expected utilities CP and HP converge to, $\sum_{k\in\calK}\EXPutilityCP(\bxi^*)$ and $\sum_{k\in\calK}\EXPutilityHP(\mu_k^*, \blambda^*)$, have the same form as an upper bound of $\bbE[\pi^{\rm{H}}]$ in the symmetric case. 
% Therefore, t
The simplicity of CP allows us to use it as a surrogate to prove the properties of HP. 

\subsection{Truthfulness on Misreporting Budgets} 
In this section, we will study how a bidder's obtained value and utility at equilibrium change if she misreports her budget, with others truthfully behaving. We assume the values are publicly known and only consider the possibility that a bidder may under-report her budget, as over-reporting may let her total cost exceed her budget. 

We directly consider $\bm{G}$ as a function of $\brho$: $\bm{G}(\brho)=\bm{G}(\blambda^*)$ where $\blambda^*$ is defined by \eqref{eqn: lambda star definition} with respect to $\brho$. $\bm{G}^{\rmCP}(\brho), \bm{U}^{\rmCP}(\brho)$ and $\bm{V}(\brho)$ are defined similarly, where 
\begin{align*}
V_k(\brho) =V_k(\blambda^*)=T\cdot
\bbE\left[\bm{1}\left\{v_k\geq d_k\right\}v_k\right]
\end{align*}
is bidder $k$'s expected obtained value under $\blambda^*$ in IP.  
\Cref{thm:misreport-budget-value-cp} shows that CP is incentive compatible. 
\begin{restatable}{theorem}{thmmisreportbudgetvaluecp}
	\label{thm:misreport-budget-value-cp}
	Under CP, bidder $k$'s expected utility $U^{\rmCP}_{k}$ will not increase,  if she under-reports her budget and others report budgets truthfully. 
\end{restatable}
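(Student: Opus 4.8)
The plan is to pass to the CP equilibrium and then carry out a comparative-statics argument. Recall that $\bm U^{\rmCP}(\brho)$ is, by definition, the utility profile evaluated at the unique solution $\bxi^{*}(\brho)$ of the complementarity system $\xi_k\ge 0\perp G^{\rmCP}_k(\bxi)\le\rho_k$, which depends on the reported rates $\brho$ only through the right-hand sides since $\bm G^{\rmCP}$ itself does not; this same system governs the limit of the CP dynamics, as in the proof of \Cref{thm: symmetric dominance of cp}. Hence it suffices to show: starting from the symmetric instance with common true rate $\rho$, if bidder $k$ instead reports some $\rho'<\rho$, then $U^{\rmCP}_k$ evaluated at the new equilibrium does not exceed its value at the old one. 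Write $\bxi^{*}=(\xi^{*},\dots,\xi^{*})$ for the symmetric truthful equilibrium and $\bxi^{\dagger}$ for the post-deviation equilibrium; by symmetry among $\calK\setminus\{k\}$, $\bxi^{\dagger}$ has value $\xi^{\dagger}_k$ in coordinate $k$ and a common value $\eta$ in every other coordinate, and the goal becomes $U^{\rmCP}_k(\bxi^{\dagger})\le U^{\rmCP}_k(\bxi^{*})$.

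The first step is to show that the deviation \emph{weakly raises every multiplier}, i.e.\ $\xi^{\dagger}_k\ge\xi^{*}$ and $\eta\ge\xi^{*}$. Given the others' multipliers, a bidder's equilibrium multiplier is the smallest value for which her expected expenditure $G^{\rmCP}_k$ does not exceed her rate; since $G^{\rmCP}_k$ is decreasing in $\xi_k$ and nondecreasing in each $\xi_j$ with $j\neq k$, this ``smallest feasible multiplier'' best response is nondecreasing in the other multipliers and nonincreasing in $\rho_k$. Consequently the equilibrium map is a monotone self-map of the compact lattice $\prod_k[0,\bar\xi_k]$ whose unique fixed point (uniqueness by \Cref{asm: cp}) is $\bxi^{*}(\brho)$, and lowering $\rho_k$ shifts this map up pointwise; by Topkis' monotonicity theorem the fixed point then moves up, so $\bxi^{\dagger}\ge\bxi^{*}$ coordinatewise.

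The second step is to show that the under-reporting bidder becomes \emph{(weakly) the most conservative member}, i.e.\ $\xi^{\dagger}_k\ge\eta$. I would argue by contradiction: if $\xi^{\dagger}_k<\eta$, then $\eta>0$, so complementarity forces $G^{\rmCP}_j(\bxi^{\dagger})=\rho$ for every $j\neq k$, whereas complementarity for $k$ only gives $G^{\rmCP}_k(\bxi^{\dagger})\le\rho'<\rho$. To contradict this I need the auxiliary fact that, when the only multipliers that differ between two members are theirs, the one with the smaller multiplier has the weakly larger expected expenditure. This follows by a coupling: writing out $G^{\rmCP}_k(\bxi^{\dagger})$ and $G^{\rmCP}_j(\bxi^{\dagger})$ and using that $v_k$ and $v_j$ are i.i.d.\ in the symmetric case to swap them inside the expectation defining $G^{\rmCP}_j$, one checks that for $\xi^{\dagger}_k<\eta$ the event on which $j$ wins the item (is selected internally and then beats $d^{O}$) is contained in the corresponding event for $k$; since both expenditures integrate the same nonnegative $d^{O}$, this gives $G^{\rmCP}_k(\bxi^{\dagger})\ge G^{\rmCP}_j(\bxi^{\dagger})=\rho>\rho'$, a contradiction, so $\xi^{\dagger}_k\ge\eta$.

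The last step combines the two. Under CP with multipliers $\bxi^{\dagger}$, bidder $k$ earns positive utility in a round exactly on the event $\{v_k/(1+\xi^{\dagger}_k)\ge\max_{j\neq k}v_j/(1+\eta)\}\cap\{v_k/(1+\xi^{\dagger}_k)\ge d^{O}\}$, on which her utility equals $v_k-d^{O}\ge 0$; under $\bxi^{*}$ the analogous event is $\{v_k\ge\max_{j\neq k}v_j\}\cap\{v_k\ge(1+\xi^{*})d^{O}\}$, again with utility $v_k-d^{O}\ge 0$. Since $\xi^{\dagger}_k\ge\eta$ makes the internal-competition inequality harder and $\xi^{\dagger}_k\ge\xi^{*}$ makes the external-competition inequality harder, the first winning event is contained in the second; integrating the common nonnegative integrand $v_k-d^{O}$ then yields $U^{\rmCP}_k(\bxi^{\dagger})\le U^{\rmCP}_k(\bxi^{*})$, and the same event inclusion shows that bidder $k$'s expected obtained value cannot rise either. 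I expect the second step to be the main obstacle: raising $k$'s own multiplier hurts her while others raising theirs helps her, so no bound based solely on the monotonicity of $U^{\rmCP}_k$ and $G^{\rmCP}_k$ in individual multipliers can close the gap, and one must genuinely use the complementarity conditions together with the distributional symmetry through the swapping argument; by contrast, verifying Topkis' hypotheses and treating the boundary cases ($\xi^{*}=0$, slack budget constraints) in the first two steps is routine.
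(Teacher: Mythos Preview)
Your proof is correct and reaches the same endpoint as the paper---the event inclusion in your Step~3 is exactly the paper's condition~\eqref{eqn: condition 1}---but the route to the two key inequalities $\xi^{\dagger}_k\ge\xi^{*}$ and $\xi^{\dagger}_k\ge\eta$ is genuinely different. The paper proceeds by an exhaustive case analysis on which of $\xi^{*},\xi^{*}_1,\xi^{*}_k$ vanish, invoking strong monotonicity of $\bm G^{\rmCP}$ in each case (together with the auxiliary \Cref{lmm: contradiction 1}) to rule out impossible configurations and extract the two inequalities; it never proves that the truthful bidders' multiplier $\eta$ rises, and in fact treats sub-cases with $\xi^{*}_1<\xi^{*}$ that your Step~1 shows are vacuous. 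Your argument is more structural: the observation that $G^{\rmCP}_k$ is decreasing in $\xi_k$ and nondecreasing in each $\xi_j$ ($j\neq k$) makes the ``smallest feasible multiplier'' map an order-preserving self-map of the lattice, and lowering $\rho_k$ shifts it up, so the unique fixed point moves up coordinatewise; your Step~2 then uses a clean exchangeability/coupling argument in place of the paper's case-by-case contradictions. The payoff of your approach is brevity and the extra monotone-comparative-statics conclusion $\eta\ge\xi^{*}$; the payoff of the paper's is that it uses only \Cref{asm: cp} and never needs the cross-monotonicity of $G^{\rmCP}_k$ in the other coordinates, which your Step~1 relies on and must be checked (it does hold here, since raising $\xi_j$ only shrinks $d_k$ inside the indicator while leaving the nonnegative integrand $d^{O}$ untouched). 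One small terminological point: the result you invoke in Step~1 is a Tarski/monotone-fixed-point comparison rather than Topkis' theorem on supermodular optimization, though the conclusion you draw is correct.
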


Contrary to \Cref{lmm: pareto optimality of hp}, CP and HP have different forms when bidders deviate. Recall that IP can be regarded as a special coordinated bidding algorithm, and the pseudo parameters of HP converge to the same equilibrium as IP. Therefore, we study IP first. 

\begin{restatable}{theorem}{thmmisreportbudgetvalueip}
	\label{thm:misreport-budget-value-ip}
	Under IP, bidder $k$'s expected obtained value $V_{k}$  will not increase if she under-reports her budget and others report budgets truthfully. 
\end{restatable}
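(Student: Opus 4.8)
The plan is to use the convergence of IP to reduce the claim to a comparison of two equilibria of the complementarity system \eqref{eqn: lambda star definition}, then to pin down how bidder $k$'s own multiplier reacts to her report, and finally to compare the obtained values at the two equilibria; I expect this last step to be the real obstacle, and I would attack it much as in the proof of \Cref{thm:misreport-budget-value-cp}.

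\emph{Step 1: reduction to equilibria.} Fix the truthful rates $\brho_{-k}$ and let $\rho'_k\in(0,\rho_k]$ be bidder $k$'s reported rate. Under IP, \Cref{asm: strongly-monotone} (applied with $\rho_k$ replaced by $\rho'_k$) guarantees that the multiplier vector converges to the unique solution $\blambda^{\star}:=\blambda^*(\rho'_k,\brho_{-k})$ of \eqref{eqn: lambda star definition}, and bidder $k$'s long-run average obtained value converges to $\tfrac1T V_k(\blambda^{\star})$, where $V_k(\blambda)=T\,\bbE_{\bv,d^O}\big[\bm1\{v_k/(1+\lambda_k)\ge d_k\}\,v_k\big]$ with $d_k=\max\{\max_{j\ne k}v_j/(1+\lambda_j),\,d^O\}$. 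Hence it suffices to show that $\rho'_k\mapsto V_k(\blambda^*(\rho'_k,\brho_{-k}))$ is nondecreasing on $(0,\rho_k]$. Since the NCP solution depends continuously (indeed piecewise smoothly) on $\brho$, this reduces to the infinitesimal statement $\tfrac{d}{d\rho_k}V_k(\blambda^*)\ge0$ on each cell of the active set, and it is in any case enough to compare two fixed reports $\rho'_k\le\rho_k$ directly.

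\emph{Step 2: how bidder $k$'s own multiplier moves.} Write \eqref{eqn: lambda star definition} as the NCP for the strongly monotone map $\bm F(\blambda)=\brho-\bm G(\blambda)$. Lowering $\rho_k$ to $\rho'_k$ shifts only the $k$-th coordinate of $\bm F$ down by $\rho_k-\rho'_k$. Testing the variational inequality for $\blambda^*(\rho_k,\brho_{-k})$ at $\blambda^{\star}$ and vice versa, adding, and using strong monotonicity gives
\[
(\rho_k-\rho'_k)(\lambda^{\star}_k-\lambda^*_k)\ \ge\ \gamma\,\|\blambda^{\star}-\blambda^*\|_2^2\ \ge\ 0,
\]
so $\lambda^{\star}_k\ge\lambda^*_k$: after under-reporting, bidder $k$ paces (weakly) harder, and $\|\blambda^{\star}-\blambda^*\|_2=O(\rho_k-\rho'_k)$. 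Crucially, this estimate says nothing about the signs of $\lambda^{\star}_j-\lambda^*_j$ for $j\ne k$ — competitors can respond in either direction — and this is exactly the source of the difficulty: $V_k$ is nonincreasing in $\lambda_k$ but nondecreasing in each $\lambda_j$ ($j\ne k$), so a coordinatewise monotonicity argument fails.

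\emph{Step 3: comparing the two equilibrium values — the hard part.} Decompose $V_k=U_k+T\,G_k$, where at any equilibrium $G_k$ is at most the relevant reported rate, with equality when the matching multiplier is positive. If $\lambda^*_k=0$ and this slack survives the perturbation, uniqueness gives $\blambda^{\star}=\blambda^*$ and there is nothing to prove; otherwise bidder $k$'s budget binds at the larger report, so $T\big(G_k(\blambda^{\star})-G_k(\blambda^*)\big)\le T(\rho'_k-\rho_k)\le0$ and it remains to bound the change in $U_k$. Using the sign pattern $\partial_{\lambda_k}U_k\le0$, $\partial_{\lambda_j}U_k\ge0$ ($j\ne k$), the pacing identity $\partial_{\lambda_k}U_k=\lambda^*_k\,T\,\partial_{\lambda_k}G_k$ at an equilibrium, and the sensitivity $\tfrac{d\blambda^*}{d\rho_k}=J^{-1}e_k$ obtained by differentiating the active-set equations of \eqref{eqn: lambda star definition} (with $J=\nabla_\blambda\bm G$ negative definite by \Cref{asm: strongly-monotone}), one gets $\tfrac{d}{d\rho_k}U_k(\blambda^*)=\nabla_\blambda U_k^{\top}J^{-1}e_k$; its own term $\partial_{\lambda_k}U_k\,(J^{-1})_{kk}$ is nonnegative, but the cross terms have undetermined sign, so the plan is to show the own term dominates, i.e.\ $\nabla_\blambda U_k^{\top}J^{-1}e_k\ge0$. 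I expect this domination to be the main obstacle: it cannot follow from abstract strong monotonicity of $\bm G$ alone, and my plan is to exploit the auction structure — showing that on the active set $-J$ is diagonally dominant with a fixed off-diagonal sign pattern (the cross-partials $\partial_{\lambda_j}G_k$ enter only through the competing-bid term and the densities of $F_k$, $H$), and, if needed, canceling the remaining cross contributions against the competitors' own complementarity conditions $G_j(\blambda^{\star})\le\rho_j=G_j(\blambda^*)$ — after which the sign pattern and the pacing identity deliver $\tfrac{d}{d\rho_k}V_k(\blambda^*)\ge0$. Making this rigorous, and checking it is implied by (or can be folded into) \Cref{asm: strongly-monotone}, is where I expect the real effort to go; the remaining case $\lambda^*_k=0<\lambda^{\star}_k$ is handled by the same value comparison, now starting from a point where bidder $k$'s constraint is slack.
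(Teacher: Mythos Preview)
Your plan overlooks the crucial hypothesis under which the theorem is proved in the paper: the whole of \Cref{sec: discussion} is in the \emph{symmetric} case (same value distribution, same budget for every coalition member), and the paper's proof of \Cref{thm:misreport-budget-value-ip} uses this from the first line. By symmetry, the truthful equilibrium is $\blambda^*=\lambda^*\mathbf e$, and after bidder $k$ under-reports, all $K-1$ truthful bidders share a single new parameter $\lambda^*_1$ while bidder $k$ moves to $\lambda^*_k$. The comparison then involves only three scalars $(\lambda^*,\lambda^*_1,\lambda^*_k)$, and the paper disposes of it by a finite case analysis: for each combinatorial possibility of which of these are zero and how they are ordered, one either derives a contradiction with the NCP or with strong monotonicity (via \Cref{lmm: contradiction 2}), or shows directly that
\[
\Bigl\{v_k\ge \tfrac{1+\lambda^*_k}{1+\lambda^*_1}\max_{i\ne k}v_i\Bigr\}\subseteq\{v_k\ge \max_{i\ne k}v_i\},\qquad
\{v_k\ge(1+\lambda^*_k)d^O\}\subseteq\{v_k\ge(1+\lambda^*)d^O\},
\]
which immediately gives $V_k(\lambda^*_1,\lambda^*_k)\le V_k(\lambda^*,\lambda^*)$. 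No Jacobian, no sensitivity analysis, no diagonal-dominance argument is needed.

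Your Steps~1--2 are fine (and the VI inequality in Step~2 is correct), but Step~3 is a genuine gap: you correctly identify that coordinatewise monotonicity fails and that cross terms in $\nabla_\blambda U_k^{\top}J^{-1}e_k$ have undetermined sign, yet your proposed fix---showing $-J$ is diagonally dominant with a fixed off-diagonal sign pattern and that this is implied by \Cref{asm: strongly-monotone}---is speculative and, as you note yourself, ``cannot follow from abstract strong monotonicity of $\bm G$ alone.'' In the asymmetric setting you are implicitly attacking, there is no reason to expect such dominance to hold, and the paper makes no claim in that generality. The missing idea is simply to invoke symmetry at the outset, collapse the opponents' multipliers to a single scalar, and replace the sensitivity computation by the paper's elementary case analysis.
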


To prove \Cref{thm:misreport-budget-value-cp} and \ref{thm:misreport-budget-value-ip}, we compare the equilibrium the algorithm converges to when bidder $k$ misreports with that when she truthfully behaves. 
Note that the equilibria the algorithms converge to are defined by NCPs, which are generally difficult to study. We analyze all 8 possible solutions of the corresponding NCPs. By applying the definition of NCP and strong monotonicity, we show that bidder $k$'s average expected obtained value at the equilibrium does not increase by misreporting.

\Cref{thm:misreport-budget-value-ip} shows that IP has good incentive compatibility on obtained values. However, when it comes to utilities, there are chances that the decrease in cost outweighs that in value, and a bidder's utility may overall increase. An experimental example is shown in the appendix. 

Moreover, as the analysis of IP is complicated enough, we make the incentive problem of HP an open problem. 
Compared with the intricacy of HP, 
CP has relative succinct structure and is easy to analyze.

\section{Experiments}
\label{sec: experiments}

It is known that the expected performance of IP
converges to the benchmark equilibrium \citep{balseiro2019learning}.
In this section, we run IP together with CP and HP
to demonstrate their performance. 
We first use an example to provide useful insights. 
Afterward, we conduct experiments on real-world  and synthetic data, as a supplement to our theoretical results.

\subsection{An Asymmetric Example}
\label{sec: asymmetric example}
Consider the example where there are only two bidders in the coalition.
The valuation of bidder $1$ is sampled from the uniform distribution in $[0,1]$, $\calU[0, 1]$, with probability $p$, and from $\calU[1, 1 + \eta]$ otherwise. 
$p, \eta \in (0, 1)$ are two small constants.
The valuation of bidder $2$ is sampled from $U[0, 1]$.
We consider the extreme case where $d_t^O$ is always zero. 
As a result, the representative bidder in CP or HP has zero expenditure. 
For both CP and HP, 
We set the initial multipliers and their upper bounds as $0$ and $\bar{\mu}_i = 3.0$, respectively.
We also set $\rho_i = \bar{v}_i / \bar{\mu}_i$ for each bidder.
The strong monotonicity assumption can be validated experimentally by computing the minimal marginal monotone parameter within small grids. 
When the grid width is $0.003$, the value is $0.035$.
We use a step size $\epsilon = 0.1T^{-0.5}$. 

We demonstrate the empirical simulation of the example in \Cref{fig:counterexample}. 
The experiment is run $1,000$ times, each time with $20,000$ rounds. The average performance is presented.
We can see from the figure that by using CP, bidder $2$ receives lower utility than using IP.
This result indicates that CP does not outperform the benchmark in this instance. 
Intuitively, in the extreme case where $p\rightarrow 0$, it always holds that $v_{1,t}>v_{2,t}$. 
Since bidder $1$'s expenditure per round is always zero, she can set her parameter as $0$. 
Therefore, bidder $1$ always wins the inner election, leaving bidder $2$ no chance to win any item.
In comparison, bidder $2$ is more likely to win in IP and HP.
The reason is that in IP, bidder $1$ has a positive payment, so she has to lower her bid if her budget is small. This enables bidder $2$ to win when bidder $1$'s bid is low enough.
In HP, the opportunities for competing outside the coalition are allocated the same way as in IP. Therefore, both bidders get chances to win, but their expenditures are lower than those under IP. 

\begin{figure}
    \centering
    \includegraphics[width=0.97\textwidth]{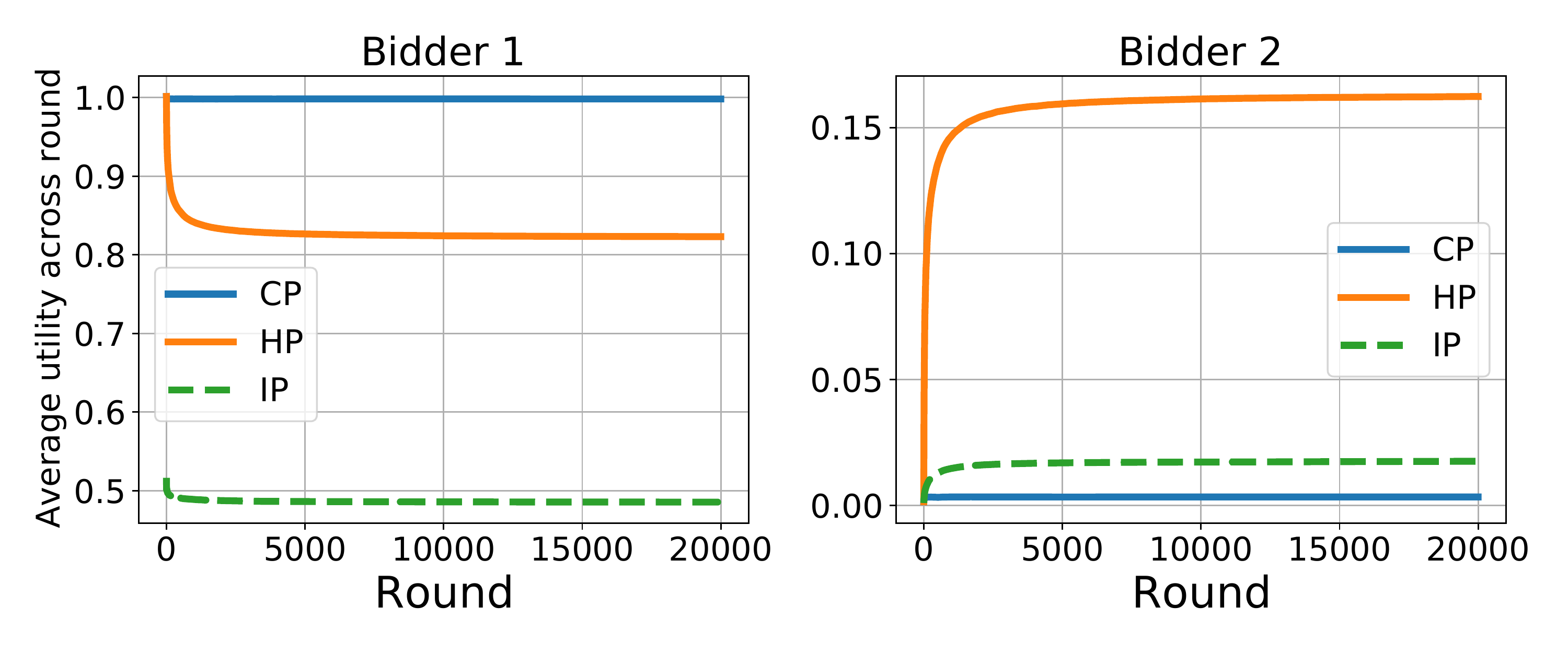}
    \caption{The utility convergence results of an asymmetric example.
    The y-axes are the average utilities across rounds ($\Pi_k/T$).
    CP does not outperform IP concerning bidder $2$'s utility.
    }
    \label{fig:counterexample}
\end{figure}

\subsection{Experiments on Real Data and Synthetic Data}

We present experiments on real-world data to evaluate the performance of our proposed algorithms, and justify its robustness through experiments on various synthetic data.

\begin{figure*}[htbp]
	\begin{subfigure}{0.498\textwidth}
	\centering
	\includegraphics[width = \linewidth]{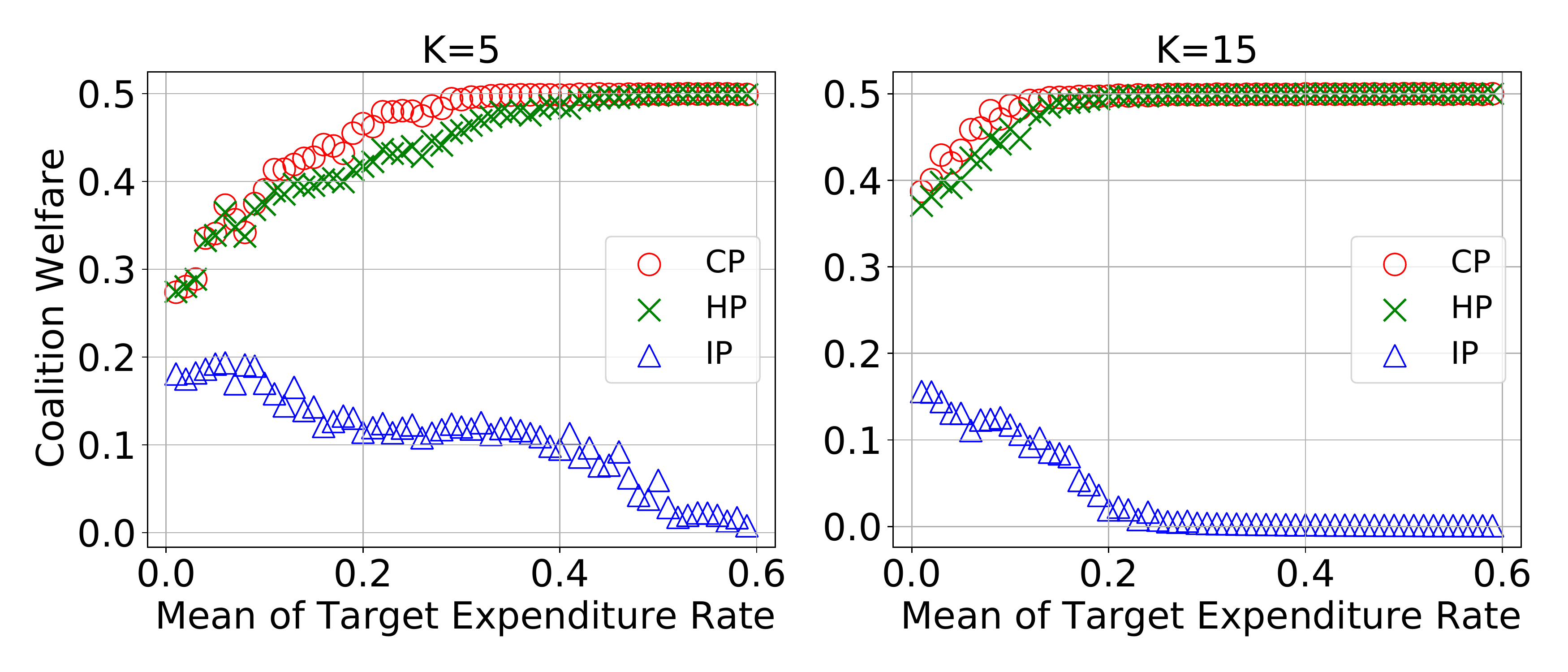}
	\caption{Coalition welfare in real-data experiments.}
	\label{fig: coalition welfare-ipinyou}
	\end{subfigure}
	\begin{subfigure}{0.498\textwidth}
	\centering
	\includegraphics[width = \linewidth]{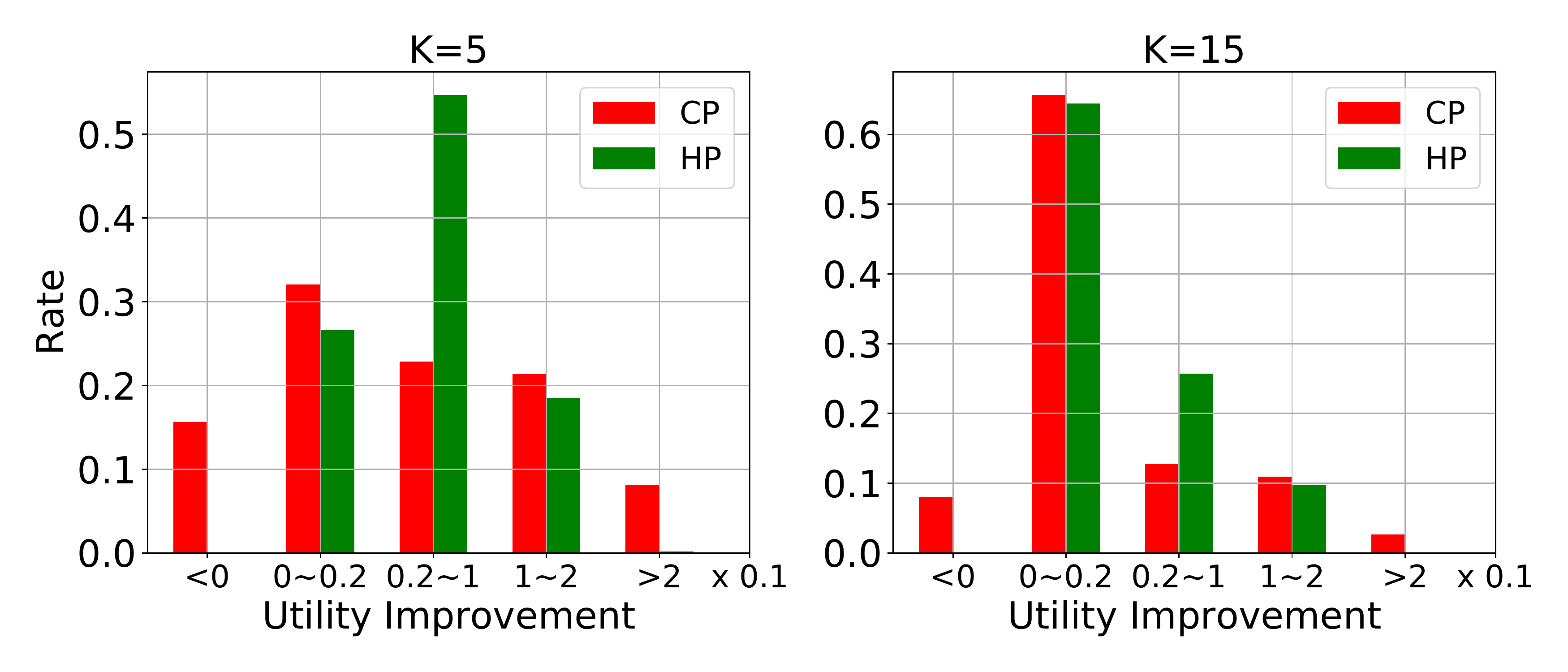}
	\caption{Utility improvements in real-data experiments. }
	\label{fig: utility improvement-ipinyou}
	\end{subfigure}
		\begin{subfigure}{0.498\textwidth}
	\centering
	\includegraphics[width = \linewidth]{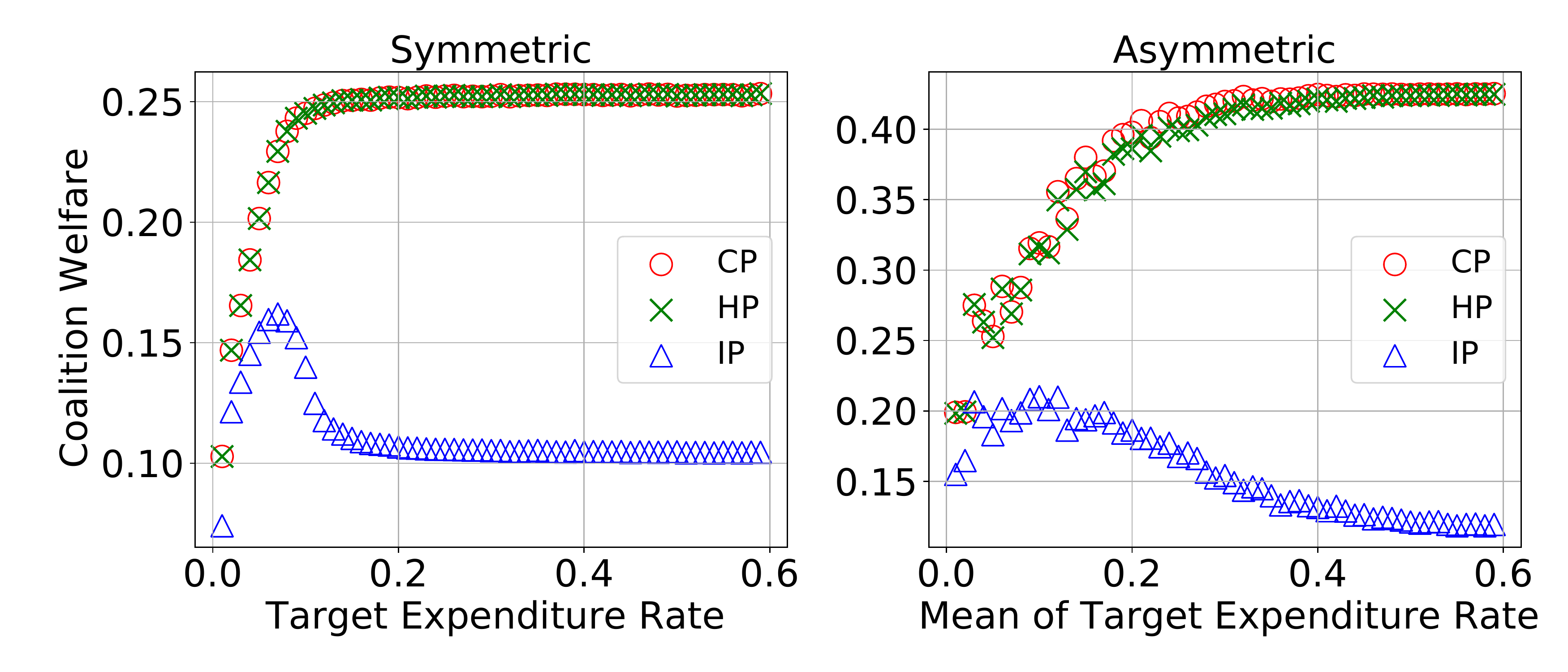}
	\caption{Coalition welfare in simulation experiments. }
	\label{fig: coalition welfare-simulation}
	\end{subfigure}
		\begin{subfigure}{0.498\textwidth}
	\centering
	\includegraphics[width = \linewidth]{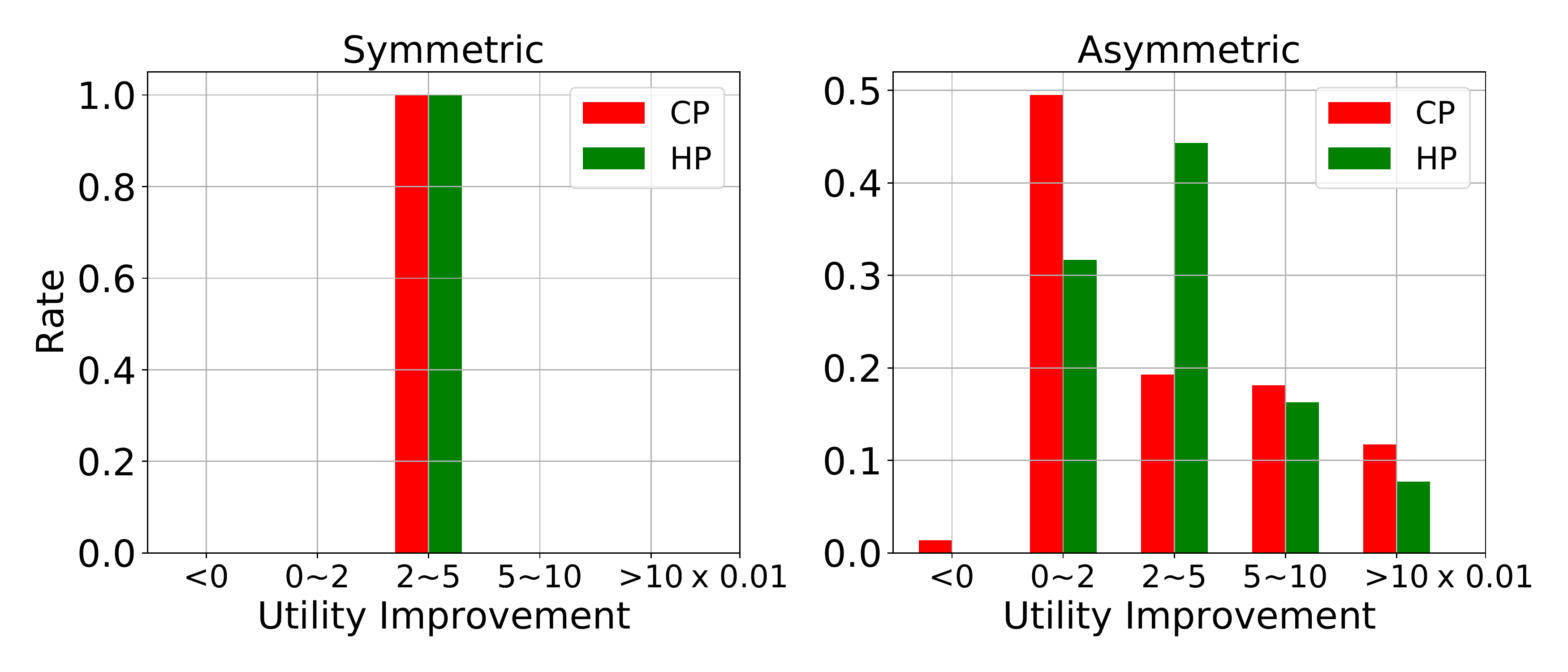}
	\caption{Utility improvements in simulation experiments. }
	\label{fig: utility improvement-simulation}
	\end{subfigure}
	\caption{Performance of coordinated pacing (CP) and hybrid coordinated pacing (HP) against individual pacing (IP) in real-data and simulated data: \Cref{fig: coalition welfare-ipinyou} and \ref{fig: coalition welfare-simulation} show the changes of coalition welfare ($\sum_{k\in\calK}\Pi_k/T$) with target expenditure ratio varies. \Cref{fig: coalition welfare-ipinyou} shows the result on iPinYou real data and \Cref{fig: coalition welfare-simulation} on the simulated data. Each point is an average of $20$ repeated experiments.
	\Cref{fig: utility improvement-ipinyou} and \ref{fig: utility improvement-simulation} the utility increase of coalition members increase after transferring to CP/HP from IP, captured by ($\Pi^{\rmCP/\rmHP}_k/T - \Pi^\rmIP_k/T$) when the target expenditure rate ranges from 0.1 to 0.3. \Cref{fig: utility improvement-ipinyou} shows the results on real data, while \Cref{fig: utility improvement-simulation} on simulated data. }
        \label{fig: experimental results}
\end{figure*}

\paragraph{Experiments Set-Up.}

The real-world data we use, the iPinYou dataset
\footnote{
\url{https://contest.ipinyou.com}
}
~\citep{liao2014ipinyou, zhang2014real}, was provided for a real-time bidding competition organized by iPinYou.
We use the testing data of Season $2$, which contains $2,521,630$ records from 
%the perspective of 
$5$ different bidders. 
Each piece of data contains the information of the user, the ad slot, and the bid, from which we mainly use the bidding (winning) price, paying price, and advertiser ID.

We first linearly scale all the bidding prices into $[0, 1]$.
To make up $K$ coordinated bidders, we randomly divide the historical bidding records of each of the 5 raw bidders into $K/5$ parts. 
Subsequently, we obtain $K$ bidding records.
We sample the valuation of each coalition member by drawing with replacement from her corresponding bidding record.
Besides, we sample the highest bid outside the coalition, $d_t^O$, from $\calN(0.5, 0.2)$.
As for the budget, we first select $\bar{\rho}$, the mean of the target expenditure ratio, to be $0.01\alpha$ with $\alpha \in \{1,\dots,100\}$. 
Then we sample each coalition member's target expenditure ratio i.i.d. from $\calN(\bar\rho, 0.1)$.
We set the initial multipliers and their upper bounds as $1.0$ and $20.0$, respectively.
We take step size $\epsilon = 0.1T^{-0.5}$ and 
%we set 
$T=20,000$.
At last, we let $K \in \{5, 15\}$, and run $20$ experiments for each setting with different random seeds.

We also manually design different distributions to verify the superiority of our algorithms. Due to page limits, we show only one typical example. We set $K=5$ and compare the coalition welfare and bidders' utility improvements in both symmetric and asymmetric cases, respectively. More results can be found in \Cref{sec: app-exp}. 
For the symmetric case, we sample each bidder's value i.i.d. from 
$\calN(0.5, 0.2)$ and let their common target expenditure ratio be $0.01\alpha$ for $\alpha \in \{1, \cdots, 100\}$. 
For the asymmetric case, we sample each bidder's value from five different distributions, and each member's target expenditure ratio is selected the same as in real data experiments. For both cases, the highest bid outside the coalition is also sampled i.i.d. from $\calN(0.5, 0.2)$. We note that the experiments of different distributions of the highest bid outside the coalition were conducted. The results are shown in Appendix \ref{sec: app-exp}.

\paragraph{Experimental Results.}

\Cref{fig: coalition welfare-ipinyou} and \ref{fig: utility improvement-ipinyou} show algorithms' performance in real data. 
We can see that 
Both CP and HP significantly improve coalition welfare compared with IP. The improvement increases with each member's budget.
Such a result attributes to the reduction of inner competition in CP and HP. 
When bidder's budgets are large, the inner competition becomes fiercer under independent bidding. That is also why in IP, the coalition welfare tends to be 0 as people's budgets increase.
Moreover, 
CP is slightly higher than HP in coalition welfare, whereas the method brings greater utility variance, and cannot guarantee that everyone gets better than IP. This coincides with our analysis in \Cref{sec: cp} that CP makes the entire coalition more competitive. 
A further observation is that the utility improvements of HP and CP decrease with the increase of $K$, the coalition size.  
The reason is that the number of rounds is fixed, and so does the number of items to be auctioned.
Therefore, each person's share decreases as more people join the coalition. 

The simulation results are shown in \Cref{fig: coalition welfare-simulation} and \ref{fig: utility improvement-simulation}. 
In the symmetric case, we see that CP and HP are identical in both coalition welfare and utility improvement. This is consistent with the concrete forms of the algorithms. 
The asymmetric case reflects similar characteristics with real data.  
That is, HP and CP both improve total welfare significantly. In HP, all the participants receive higher utilities than in IP. In CP, the utilities of the vast majority can be improved.
All the experimental results demonstrate the benefits of reducing the inner competition among members in the coalition.

Recall that the theoretic results predict that the averages of multipliers will converge. In the experiments, the variances of the last 100 rounds of the averages of multipliers are around $10^{-4}$, indicating the convergence of multipliers. Further details show that they converge to the point predicted by our theorems, which are shown in \Cref{sec: app-exp}. Therefore, while we did not verify the assumptions made in previous sections, the experimental results further confirm the robustness of our algorithms. 
\section{Conclusion}
\label{sec: conclusion}
In this paper, we study designing online coordinated bidding algorithms in repeated second-price auctions with budgets. We propose algorithms that guarantee every coalition member a better utility than the best they can get under independent bidding. We analyze the game-theoretic properties of the algorithms like coalition welfare maximization and bidders' incentives to misreport their budget constraints. Experiment results on both synthetic data and real-world data show the effectiveness and robustness of our proposed algorithms. 

There are quite a few open problems we can ask about. The first is whether we can relax the assumptions made in this paper, e.g., the strong monotonicity of bidders' expected expenditure functions. Second, in this paper, we assume that the highest bid outside the coalition is drawn from a stationary distribution. Another future work is to assume that agents outside the coalition also use IP to bid and study whether CP and HP can still improve every coalition member's utilities, compared to their independent pacing. As more bidders are taken into consideration, we can also ask about the fairness of the algorithms, before which an appropriate concept of fairness needs to be defined. One can also consider more complicated game-theoretic scenarios, e.g., taking the incentives of the bidding agency into account. 

\newpage
%\bibliographystyle{plainnat}
%\bibliography{Arxiv/ref}

%%%%%%%%%%%%%%%%%%%%%%%%%%%%%%%%%%%%%%%%%%%%%%%%%%%%%%%%%%%%%%%%%%%%%%%%%%%%%%%
%%%%%%%%%%%%%%%%%%%%%%%%%%%%%%%%%%%%%%%%%%%%%%%%%%%%%%%%%%%%%%%%%%%%%%%%%%%%%%%
% Appendix
\newpage
\appendix
% \onecolumn
\section{Individual Adaptive Pacing}
\label{sec: ip}

For the completeness of our work, we present adaptive pacing~\citep{balseiro2019learning}, an optimal individual bidding algorithm, in Algorithm \ref{alg: individual pacing}. We omit the subscript $k$ to simplify the notation. 

\begin{algorithm}[htbp!]
   \caption{Individual Adaptive Pacing (IP)}
   \label{alg: individual pacing}
\begin{algorithmic}[1]
\setcounter{ALC@unique}{0}
   \STATE {\bfseries Input:} $\epsilon = 1/\sqrt{T}$, $\bar{\lambda} \geq \bar{v}/\rho$.
   \STATE Select an initial multiplier $\lambda_1\in [0,\bar{\lambda}]$, and set the remaining budget of agent $k$ to $\wt{B}_{1}=B=\rho T$.
   \FOR{$t=1$ {\bfseries to} $T$}
   \STATE Observe the realization of $v_{t}$.
   \STATE Post a bid $b_{t} = \min \left\{\frac{v_{t}}{1+\lambda_{t}}, \wt{B}_{t}\right\}$. 
   \STATE Update the multiplier by $\lambda_{t+1} = P_{[0,\bar{\lambda}]}\left(\lambda_{t}-\epsilon(\rho-z_{t})\right)$.
   \STATE Update the remaining budget by $\wt{B}_{t+1} = \wt{B}_{t}-z_{t}$. 
   \ENDFOR
\end{algorithmic}
\end{algorithm}
\section{Missing Proofs in Section~\ref{sec: cp}}
\label{app: CP}

\subsection{Proof of \Cref{lemma: bid rotation strategy}}

\lemmabidrotationstrategy* 

\begin{proof}
Let the bidding profile produced by strategy $\mathrm{C}$ be $(\boldsymbol{b}_1, \boldsymbol{b}_2, \cdots, \boldsymbol{b}_T)$. Now randomly choose $k\in \mathop{\arg \max}\boldsymbol{b}_t$. Define new strategy $\mathrm{C}^{P}$ such that $b^{\mathrm{P}}_{k,t} = b_{k,t}$, and $b^{\rm{P}}_{k',t} = 0$ for all $k' \neq k$. 
This rule guarantees $\forall t \in [T]$, $\boldsymbol{b}^{\rm{P}}_t$ has at most one non-zero component, which means it is a bid rotation strategy. 

For each round $t$, denote $I_t = \mathop{\arg \max}\boldsymbol{b}_t$.  if $|I_t|=1$, the bidder with the highest bid produced by $\rm{C}^{P}$ remains the same with that of $\rm{C}$. The second-highest bid under $\rm{C}^{P}$ is no more than that under $\rm{C}$. Therefore a coalition member will win the same item and gain the same value as in  $\rm{C}$, but pay no more than in $\rm{C}$, leading to more utility. If $|I_t| >1$, each bidder will pay at least the price as she does in $\rm{C}$. This means $\rm{C}^{P}$ is feasible. Since the case $|I_t|>1$ has measure zero, every member's expected utility under $\rm{C}^{P}$ is at least as good as that under $\rm{C}$. 
\end{proof}

\subsection{Details on the Choice of Step Sizes}
In the algorithms presented in the main paper, we use $\epsilon = 1/\sqrt{T}$ as our step size. 
Generally, the results in \Cref{sec: cp} hold when $\epsilon$ satisfies the following conditions:
 \begin{enumerate}
        \item $\bar{v}\epsilon<1$ and $2\gamma\epsilon<1$; 
        \item $\lim_{T\rightarrow \infty}\epsilon = 0$ and $\lim_{T\rightarrow \infty}T\epsilon = \infty$.
 \end{enumerate}
One can check it when reading the following proofs. 

\subsection{Proof of \Cref{thm: symmetric dominance of cp}}

\thmsymmetricdominanceofcp*

To show that the performance of Algorithm \ref{alg: coordinated pacing} outperforms the benchmark in symmetric cases, we first prove that it converges to some equilibrium defined below. With the convergence results, we reduce the problem of comparing the asymptotic performance of our algorithms with players' benchmark utility, to that comparing the two equilibrium utilities. 

Specifically, fix the pacing variables to be $\xi_1, \cdots, \xi_K$. Let $d_k\coloneqq \max\left\{d^I_k, d^O\right\}$, where $d^I_k = \max_{i\in\calK: i\ne k} \frac{v_i}{1+\xi_i}$. We can compute  bidder $k$'s expected expenditure per auction under Algorithm \ref{alg: coordinated pacing} to be 
\[\EXPcostCP(\bxi)\coloneqq \bbE_{\bv,d^O}\left[\bm{1}\left\{v_k\geq (1+\xi_k)d_k\right\}d^O\right], \]
and her corresponding expected utility function is 
\begin{align}
    \EXPutilityCP(\bxi) \coloneqq T\left(\bbE_{\bv,d^O}\left[\bm{1}\left\{v_k \geq (1+\xi_k)d_k\right\}\left(v_k - d^O\right)\right]\right).  
\end{align}

Since $\EXPcostCP$ and $\EXPutilityCP$ are both decreasing in $\xi_k$, consider the equilibrium $\bxi^*$ defined by the following complementarity conditions:
\begin{equation}
\label{eqn: xi definition}
\xi^*_k\geq 0 \perp \EXPcostCP(\bxi^*)\leq \rho_k, \forall k \in \calK.
\end{equation}
Since $\bm{G}^\rmCP$ is strongly monotone, $\bxi^*$ exists and is unique.

First, with the help of the pacing strategy, we have the performance of Algorithm \ref{alg: coordinated pacing} converges to $\EXPutilityCP(\bxi^*)$, in arbitrary problem instances. 
\begin{restatable}{lemma}{lmmconvergenceofCP}
\label{lmm: convergence of CP}
Suppose that $\bm{G}^\rmCP$ is strongly monotone. We have
\begin{align}
\label{eqn: convergence of cp}
\liminf_{T\rightarrow\infty}\frac{1}{T}(\utilityCP - \EXPutilityCP(\bxi^*))\geq 0, ~\forall{k \in \calK}. 
\end{align}
\end{restatable}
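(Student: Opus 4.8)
The plan is to mirror the single-bidder analysis of adaptive pacing \citep{balseiro2019learning}, but applied to the \emph{coupled} $K$-dimensional dual process $\bxi_t$ produced by \Cref{alg: coordinated pacing}. The three ingredients are: (i) the joint update of $\bxi_t$ is a projected stochastic-approximation scheme whose expected drift is $-\epsilon(\brho-\bm G^{\rmCP}(\bxi_t))$; (ii) $\gamma$-strong monotonicity of $\bm G^{\rmCP}$ (\Cref{asm: cp}) turns this into a contraction toward the NCP solution $\bxi^*$ of \eqref{eqn: xi definition}, so that $\tfrac1T\sum_{t\le T}\bbE\|\bxi_t-\bxi^*\|_2\to 0$; (iii) budget exhaustion affects only an $o(T)$ (in expectation) number of rounds, and the per-round utility map is Lipschitz, so the average realized utility $\tfrac1T\utilityCP$ converges to the equilibrium value $\tfrac1T\EXPutilityCP(\bxi^*)$. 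In fact this will give $\lim_{T\to\infty}\tfrac1T(\utilityCP-\EXPutilityCP(\bxi^*))=0$, which is stronger than the claimed $\liminf\ge 0$.

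\textbf{Step 1: the dual dynamics.} Let $\calF_{t-1}$ be the history available at the end of round $t-1$, so that $\bxi_t$ is $\calF_{t-1}$-measurable while $\bv_t,d^O_t$ are fresh and i.i.d. On the event $\calG_t$ that no member's posted bid is cut by the remaining budget, the selection rule reduces to ``bidder $k$ wins iff $v_{k,t}\ge(1+\xi_{k,t})d_{k,t}$ with $d_{k,t}=\max\{d^I_{k,t},d^O_t\}$'' and the winner pays $d^O_t$; hence $\bbE[z_{k,t}\mid\calF_{t-1},\calG_t]=\EXPcostCP(\bxi_t)$ and $0\le z_{k,t}\le\bar v_k$. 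Writing $g_{k,t}=\rho_k-z_{k,t}$, \Cref{alg: coordinated pacing} performs $\xi_{k,t+1}=P_{[0,\bar\xi_k]}(\xi_{k,t}-\epsilon g_{k,t})$, i.e.\ a projected stochastic subgradient step with conditional drift $-\epsilon(\brho-\bm G^{\rmCP}(\bxi_t))$ up to a bounded-variance martingale difference plus an error term supported on the (to be shown rare) set of truncation rounds.

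\textbf{Step 2: contraction.} Put $\Delta_t\coloneqq\|\bxi_t-\bxi^*\|_2^2$; since $\bxi^*$ lies in the box $\prod_k[0,\bar\xi_k]$ (complementarity forces $\xi^*_k\le\bar v_k/\rho_k\le\bar\xi_k$) and $P$ is nonexpansive,
\[
\Delta_{t+1}\;\le\;\Delta_t-2\epsilon\,(\bxi_t-\bxi^*)^\top g_t+\epsilon^2\|g_t\|_2^2 .
\]
Taking $\bbE[\cdot\mid\calF_{t-1}]$ on $\calG_t$ gives a cross term $-2\epsilon\,(\bxi_t-\bxi^*)^\top(\brho-\bm G^{\rmCP}(\bxi_t))$. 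Decompose this against $\bxi^*$: the piece $(\bxi_t-\bxi^*)^\top(\brho-\bm G^{\rmCP}(\bxi^*))$ is nonnegative, coordinate by coordinate, by the complementarity conditions \eqref{eqn: xi definition} and $\bxi_t\ge 0$; the remaining piece $(\bxi_t-\bxi^*)^\top(\bm G^{\rmCP}(\bxi^*)-\bm G^{\rmCP}(\bxi_t))\ge\gamma\Delta_t$ by \Cref{asm: cp}. Hence $\bbE[\Delta_{t+1}\mid\calF_{t-1}]\le(1-2\gamma\epsilon)\Delta_t+O(\epsilon^2)+(\text{truncation error}_t)$. Unrolling with $\epsilon=1/\sqrt T$ yields $\bbE[\Delta_t]=O(\epsilon)$ for $t\gtrsim 1/(\gamma\epsilon)$ and $\tfrac1T\sum_{t=1}^T\bbE\|\bxi_t-\bxi^*\|_2=O(\sqrt\epsilon)+O(1/(T\epsilon))=o(1)$, provided Step~3 bounds the truncation error.

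\textbf{Step 3: budget exhaustion is negligible.} Let $\tau_k$ be the first round at which bidder $k$'s remaining budget $\wt B_{k,t}$ would force a cut. Following the stopping-time argument of \citet{balseiro2019learning}: if $\EXPcostCP(\bxi^*)<\rho_k$, then $\bxi_t\approx\bxi^*$ already keeps the cumulative spend strictly below $B_k$ with high probability, so $\bbE[T-\tau_k]=o(T)$; if $\EXPcostCP(\bxi^*)=\rho_k$, the over-/under-spend relative to $\rho_k$ accrued while $\xi_{k,t}$ sits below/above $\xi^*_k$ cancel up to $o(T)$, giving the same bound. For $t<\min_k\tau_k$ the event $\calG_t$ holds, so Steps~1--2 apply verbatim; the remaining rounds number at most $\sum_k(T-\tau_k)=o(T)$ in expectation, and on each such round the per-round contribution to both the utility gap and the $\Delta_t$-error is $O(\bar v_k)$. (A truncation for one member perturbs $d^I_{k}$ for the others, but every such perturbed round is charged to some index $t\ge\tau_j$, so the total count stays $o(T)$ in expectation.)

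\textbf{Step 4: transfer to utilities, and the obstacle.} Because $F_k$ and $H$ have bounded densities, the per-round utility map $\bxi\mapsto\bbE_{\bv,d^O}[\bm 1\{v_k\ge(1+\xi_k)d_k\}(v_k-d^O)]$ is $L$-Lipschitz on $\prod_k[0,\bar\xi_k]$, so summing over rounds, using Step~3 on the truncated rounds and the tower property,
\[
\frac1T\bigl|\utilityCP-\EXPutilityCP(\bxi^*)\bigr|\;\le\;\frac{L}{T}\sum_{t=1}^{T}\bbE\bigl\|\bxi_t-\bxi^*\bigr\|_2\;+\;\frac{\bar v_k}{T}\,\bbE[T-\tau_k]\;+\;o(1)\;=\;o(1),
\]
which yields the lemma. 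I expect the main obstacle to be Step~3: the internal selection rule couples all members through the truncation caps, so the clean identity $\bbE[z_{k,t}\mid\calF_{t-1}]=\EXPcostCP(\bxi_t)$ only holds off the truncation set, and the careful bookkeeping showing that all such coupled perturbations live on an expected-$o(T)$ set of rounds is the technical heart; the contraction estimate of Step~2 is then the standard strongly-monotone stochastic-approximation argument.
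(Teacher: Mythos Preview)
Your proposal is correct and follows essentially the same route the paper indicates: it simply defers to the analysis of \citet{balseiro2019learning}, and your Steps~1--4 are precisely the strongly-monotone stochastic-approximation argument (contraction of $\|\bxi_t-\bxi^*\|_2^2$, Lipschitz transfer via \Cref{cor: lipschitz}, and a stopping-time bound on budget exhaustion) that this entails. One minor simplification for Step~3: you do not need the case split on whether $\EXPcostCP(\bxi^*)<\rho_k$ or $=\rho_k$; summing the update $\xi_{k,t+1}=P_{[0,\bar\xi_k]}(\xi_{k,t}-\epsilon(\rho_k-z_{k,t}))$ directly (the upper projection error vanishes when $\epsilon\bar v_k\le1$) gives $T-\tau_k\le \bar v_k/\rho_k+\bar\xi_k/(\epsilon\rho_k)=O(1/\epsilon)$ uniformly, exactly as in the HP analysis of \Cref{lem: mu convergence}.
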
 

Therefore, the problem reduces to the comparison of equilibrium utilities,  $\EXPutility(\blambda^*)$ and $\EXPutilityCP(\bxi^*)$. In the symmetric cases, we first show that everyone's pacing parameter is lower compared to $\blambda^*$. 
We can utilize this property, to show that the sets related to the indicator function in $\EXPutilityCP$ are supersets of that in $\EXPutility$. By further showing that the utilities on those sets are no less in the former than the latter, we finished the comparison of two equilibria. 

Specifically, to prove \eqref{eqn: symmetric dominance of cp}, by \Cref{lmm: convergence of CP}, it suffices to show that 
\begin{align}
\label{eqn: comparison of cp equilibrium} \EXPutilityCP(\bxi^*) > \EXPutility(\blambda^*).  
\end{align}
By symmetry, for all $k\in \calK$, we have 
\begin{align*}
\lambda^*_k=\lambda^*_1,~&\text{and}~ \xi^*_k = \xi^*_1,\\
\EXPcost(\blambda^*)=G_1(\blambda^*)~&\text{and}~\EXPcostCP(\bxi^*)=G^{\rm{CP}}_1(\bxi^*),\\
\EXPutility(\blambda^*)=U_1(\blambda^*)~&\text{and}~\EXPutilityCP(\bxi^*)=U^{\rm{CP}}_1(\bxi^*). 
\end{align*}
Without loss of generality, we assume $T=1$, then
\begin{align*}
\EXPutility(\blambda^*)=\bbE\left[\bm{1}\left\{\left(\forall i\in \calK: i\ne k\right)\left(v_k\geq v_i\right) \wedge \left(v_k\geq(1+\lambda^*_1)d^O\right)\right\}(v_k-d_k)\right],\\
\EXPutilityCP(\bxi^*)=\bbE\left[\bm{1}\left\{\left(\forall i\in \calK: i\ne k\right)\left(v_k\geq v_i\right) \wedge \left(v_k\geq(1+\xi^*_1)d^O\right)\right\}(v_k-d^O)\right].
\end{align*}

We first show the following lemmas. Notice that in the symmetric setting, we can without loss of generality consider the case $\xi_k=\xi_1$ and write $\EXPcostCP(\bxi)$ as the function of $\xi_1$. The same holds for other functions as well.
\begin{restatable}{lemma}{lmmstrictlydecreasingvalue}
\label{lmm: strictly decreasing value}
Suppose $\bm{G}^\rmCP(\bxi)$ is strongly monotone. When every bidder has the same value distribution and budget, $\EXPcostCP(\xi_1)$ is strictly decreasing on $[0,\bar{\xi}]$. Moreover, the bidders' expected value function $\EXPvalueCP(\xi_1)$ under Algorithm \ref{alg: coordinated pacing} is strictly decreasing on $[0,\bar{\xi}]$:
\[
\EXPvalueCP(\xi_1)\coloneqq\bbE\left[\bm{1}\left\{\left(\forall i\in \calK: i\ne k\right)\left(v_k\geq v_i\right) \wedge \left(v_k\geq(1+\xi_1)d^O\right)\right\}v_k\right]. 
\]
\end{restatable}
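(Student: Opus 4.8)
The plan is to reduce the symmetric winning event to a transparent form, read off the strict monotonicity of $\EXPcostCP$ directly from \Cref{asm: cp}, and then bootstrap from the cost statement to the value statement via a pointwise domination argument. Concretely, on the symmetric profile $\bxi = \xi_1\bm{1}$ the highest competing bid faced by bidder $k$ is $d_k = \max\{\max_{i\ne k} v_i/(1+\xi_1),\ d^O\}$, so the condition $v_k \ge (1+\xi_1)d_k$ is equivalent to the event $A(\xi_1) \coloneqq \{v_k \ge \max_{i\ne k} v_i\}\cap\{v_k \ge (1+\xi_1)d^O\}$; ties among the $v_i$ have probability zero since each $F_k$ has a density. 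Hence
\[
\EXPcostCP(\xi_1) = \bbE\!\left[\bm{1}\{A(\xi_1)\}\,d^O\right],\qquad \EXPvalueCP(\xi_1) = \bbE\!\left[\bm{1}\{A(\xi_1)\}\,v_k\right],
\]
and $A(\xi_1')\subseteq A(\xi_1)$ whenever $\xi_1\le\xi_1'$.

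\emph{Strict monotonicity of the cost.} I would apply the definition of $\gamma$-strong monotonicity of $\bm{G}^\rmCP$ to the two diagonal points $\bxi=\xi_1\bm{1}$ and $\bxi'=\xi_1'\bm{1}$ in $[0,\bar\xi]^K$. By symmetry of value distributions and budgets, every component $G^\rmCP_k(\xi\bm{1})$ equals the scalar $\EXPcostCP(\xi)$, so the strong-monotonicity inequality collapses (after dividing by $K$) to
\[
(\xi_1-\xi_1')\bigl(\EXPcostCP(\xi_1')-\EXPcostCP(\xi_1)\bigr)\ \ge\ \gamma\,(\xi_1-\xi_1')^2 ,
\]
which shows that $\EXPcostCP$ is strictly decreasing on $[0,\bar\xi]$.

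\emph{From cost to value.} Fix $\xi_1<\xi_1'$. Subtracting the integral expressions above and using $A(\xi_1')\subseteq A(\xi_1)$,
\[
\EXPvalueCP(\xi_1)-\EXPvalueCP(\xi_1') = \bbE\!\left[\bm{1}\{A(\xi_1)\setminus A(\xi_1')\}\,v_k\right],\qquad \EXPcostCP(\xi_1)-\EXPcostCP(\xi_1') = \bbE\!\left[\bm{1}\{A(\xi_1)\setminus A(\xi_1')\}\,d^O\right].
\]
On $A(\xi_1)$ one has $v_k\ge(1+\xi_1)d^O\ge d^O$ since $\xi_1\ge 0$ and $d^O\ge 0$, so the first expectation dominates the second; combined with the previous step this gives $\EXPvalueCP(\xi_1)-\EXPvalueCP(\xi_1')\ge\EXPcostCP(\xi_1)-\EXPcostCP(\xi_1')>0$, i.e. $\EXPvalueCP$ is strictly decreasing on $[0,\bar\xi]$.

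The step I expect to be most delicate is the transfer of \emph{strictness} in the last paragraph: one needs the set $A(\xi_1)\setminus A(\xi_1')$ to carry positive probability, but this is forced, since otherwise the two costs would coincide, contradicting the strong-monotonicity step. The remaining points are routine bookkeeping --- discarding the measure-zero tie set $\{v_k = \max_{i\ne k} v_i\}$, and noting that on the event $\{d^O=0\}$ both indicator differences vanish so all the displayed inequalities hold trivially there.
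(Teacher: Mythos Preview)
Your proposal is correct and follows essentially the same route as the paper: both apply strong monotonicity of $\bm{G}^\rmCP$ along the diagonal $\bxi=\xi_1\bm{1}$ to get strict monotonicity of $\EXPcostCP$, then transfer strictness to $\EXPvalueCP$ via the nested winning events $A(\xi_1')\subseteq A(\xi_1)$. The only cosmetic difference is in the last step: the paper records separately (as a claim) that $A(\xi_1)\setminus A(\xi_1')$ has positive measure and then uses $v_k>0$ there, whereas you obtain strictness in one stroke via the pointwise bound $v_k\ge d^O$ on $A(\xi_1)$ and the already-proved strict cost inequality --- a slightly cleaner packaging of the same idea.
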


\begin{restatable}{lemma}{lmmrangeofxi}
\label{lmm: range of xi}
$\xi^*_k < \bar{\xi}_k$ for $k\in\calK$. 
\end{restatable}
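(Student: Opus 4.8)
The plan is to read off the conclusion directly from the complementarity conditions \eqref{eqn: xi definition} defining $\bxi^*$, using nothing more than a crude upper bound on the equilibrium expenditure $\EXPcostCP$. The guiding observation is that once a bidder shades aggressively enough — concretely, once $\xi_k \ge \bar{v}_k/\rho_k$ — her per-round expenditure necessarily falls strictly below her target rate $\rho_k$, so complementarity forbids her from holding a positive multiplier that large. Since the upper bound is chosen with $\bar{\xi}_k \ge \bar{v}_k/\rho_k$, this rules out $\xi^*_k = \bar{\xi}_k$ (in fact it rules out $\xi^*_k \ge \bar{v}_k/\rho_k$ whenever $\xi^*_k > 0$), while the case $\xi^*_k = 0$ is immediate because $\bar{\xi}_k \ge \bar v_k/\rho_k > 0$.

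First I would record the bound. Recall $\EXPcostCP(\bxi) = \bbE_{\bv,d^O}\left[\bm{1}\{v_k \ge (1+\xi_k)d_k\}\,d^O\right]$ with $d_k = \max\{d^I_k, d^O\} \ge d^O$, and note that $d^I_k = \max_{i\in\calK:\, i\ne k} v_i/(1+\xi_i)$ does not involve $\xi_k$. On the event in the indicator we have $d^O \le d_k \le v_k/(1+\xi_k) \le \bar{v}_k/(1+\xi_k)$, so pointwise $\bm{1}\{v_k \ge (1+\xi_k)d_k\}\,d^O \le \bar{v}_k/(1+\xi_k)$, and hence
\[
\EXPcostCP(\bxi) \le \frac{\bar{v}_k}{1+\xi_k}, \qquad \forall\, \bxi \in \bbR^K_+ .
\]
This holds for arbitrary values of the other multipliers, and uses neither symmetry nor strong monotonicity.

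Then I would conclude the argument for a fixed $k\in\calK$. If $\xi^*_k = 0$ we are done. Otherwise $\xi^*_k > 0$, and complementarity forces $\EXPcostCP(\bxi^*) = \rho_k$. Suppose, toward a contradiction, that $\xi^*_k \ge \bar{v}_k/\rho_k$. Then the displayed bound gives
\[
\rho_k = \EXPcostCP(\bxi^*) \le \frac{\bar{v}_k}{1+\xi^*_k} \le \frac{\bar{v}_k}{1 + \bar{v}_k/\rho_k} = \frac{\bar{v}_k\,\rho_k}{\rho_k + \bar{v}_k} < \rho_k ,
\]
the last step being just $\bar{v}_k/(\rho_k + \bar{v}_k) < 1$, a contradiction. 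Hence $\xi^*_k < \bar{v}_k/\rho_k \le \bar{\xi}_k$, which is the claim.

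As for the main obstacle: there is no deep one — this is a boundary-feasibility sanity check needed only so that the projection onto $[0,\bar\xi_k]$ does not bind at equilibrium. The single delicate point is that the inequality $\bar{v}_k/(1+\xi^*_k) < \rho_k$ must be \emph{strict}, so that it genuinely contradicts the equality $\EXPcostCP(\bxi^*) = \rho_k$ coming from complementarity; this strictness is exactly what $\rho_k > 0$ provides. The only thing worth double-checking is that the bound on $\EXPcostCP$ correctly accounts for rounds in which the binding competitor is internal, i.e.\ $d_k = d^I_k > d^O$; this is handled automatically by $d^O \le d_k$, so no case split is needed. The lemma is then ready to be used, together with \Cref{lmm: strictly decreasing value}, in the comparison of $\EXPutilityCP(\bxi^*)$ with $\EXPutility(\blambda^*)$ within the proof of \Cref{thm: symmetric dominance of cp}.
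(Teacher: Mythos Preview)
Your proof is correct and follows essentially the same route as the paper: bound the winning-event payment by $d^O \le d_k \le \bar v_k/(1+\xi_k)$, observe this drops strictly below $\rho_k$ once $\xi_k \ge \bar v_k/\rho_k$, and invoke the complementarity condition \eqref{eqn: xi definition} to rule out $\xi^*_k = \bar{\xi}_k$. Your write-up is in fact a bit cleaner than the paper's---you make the case split $\xi^*_k = 0$ versus $\xi^*_k > 0$ explicit and you obtain the slightly sharper conclusion $\xi^*_k < \bar v_k/\rho_k$---but the argument is the same.
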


\begin{restatable}{lemma}{lmmrelationofcostfunctions}
\label{lmm: relation of cost functions}
For any $\xi_1\in [0,\bar{\xi})$, we have
\[
\EXPcost(\xi_1) > \EXPcostCP(\xi_1). 
\]
\end{restatable}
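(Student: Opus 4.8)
\textbf{Proof proposal for Lemma~\ref{lmm: relation of cost functions}.}
The plan is to compare the two expected-expenditure functions by comparing, pointwise in the realization $(\bv, d^O)$, the events over which each integrand is supported, while noting that the integrand is the \emph{same} quantity $d^O$ in both cases. Recall
\[
\EXPcost(\xi_1) = \bbE\left[\bm{1}\{v_k \geq (1+\xi_1) d_k\}\, d_k\right], \qquad \EXPcostCP(\xi_1) = \bbE\left[\bm{1}\{v_k \geq (1+\xi_1) d_k\}\, d^O\right],
\]
where in the symmetric setting with $\xi_i = \xi_1$ for all $i$ we have $d_k = \max\{d^I_k, d^O\}$ with $d^I_k = \max_{i \neq k} v_i/(1+\xi_1)$. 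Wait---here I must be careful: in IP the competing bid is $d_k = \max\{\max_{i\neq k} v_i/(1+\lambda_1), d^O\}$ because everyone bids shaded value, whereas in CP only the representative bids, so the payment of the winner is $d^O$, but the \emph{winning condition} still involves beating the inner competitors. So the two indicator events coincide (both are $\{v_k \geq (1+\xi_1)\max_{i\neq k}v_i/(1+\xi_1)\} \wedge \{v_k \geq (1+\xi_1)d^O\}$, evaluated at the same $\xi_1$), and the only difference is that $\EXPcost$ integrates $d_k \geq d^O$ whereas $\EXPcostCP$ integrates $d^O$. Hence $\EXPcost(\xi_1) \geq \EXPcostCP(\xi_1)$ holds trivially pointwise.

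The work, therefore, is to upgrade this to a \emph{strict} inequality for $\xi_1 \in [0, \bar\xi)$. First I would exhibit an event of positive probability on which $d_k > d^O$ strictly \emph{and} the indicator is $1$: concretely, the event where some inner competitor $i \neq k$ has $v_i/(1+\xi_1) > d^O$ (so $d_k = \max_j v_j /(1+\xi_1)$ with the max attained at an inner index, strictly exceeding $d^O$) while simultaneously $v_k \geq \max_{i\neq k} v_i$ (so bidder $k$ still wins the inner election and $v_k \geq (1+\xi_1) d_k$). Since the density functions $f_k$ and $h$ are bounded and the supports have nonempty interior, and since $\xi_1 < \bar\xi$ keeps the shading factor bounded, I would argue this event has strictly positive probability --- e.g. pick a small box in $(\bv, d^O)$-space where $v_k$ is near $\bar v$, one competitor $v_i$ is moderately large but below $v_k$, all other competitors are tiny, and $d^O$ is small; on this box $d_k - d^O$ is bounded below by a positive constant. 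On that event the two integrands differ by $d_k - d^O > 0$; elsewhere the CP integrand never exceeds the IP integrand. Integrating gives $\EXPcost(\xi_1) > \EXPcostCP(\xi_1)$.

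The main obstacle I anticipate is the positive-probability argument for the separating event: one needs the distributions $F_k$ and $H$ to genuinely overlap in a way that makes ``an inner competitor outbids the outside bid but loses to bidder $k$'' an event of positive measure, for \emph{every} $\xi_1 \in [0,\bar\xi)$. This should follow from the standing assumption that each $F_k$ has a bounded density on $[0,\bar v_k]$ (hence full support there) and likewise $H$ has a bounded density, combined with $K \geq 2$; but I would want to state the genericity condition cleanly, perhaps isolating a sub-claim like ``$\Pr[\exists i \neq k : v_i/(1+\bar\xi) > d^O \text{ and } v_i \leq v_k] > 0$'' and deriving it from the support conditions. A secondary, minor point is to make sure the strong monotonicity of $\bm{G}^\rmCP$ (invoked in the hypothesis) is actually needed here or only used downstream; I suspect for this particular lemma only the density/support assumptions are essential, and strong monotonicity is carried along merely for uniformity with the companion lemmas. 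If a degenerate instance could make the separating event have measure zero for some $\xi_1$, I would fall back on using \Cref{lmm: strictly decreasing value} (strict monotonicity of $\EXPcostCP$) together with a boundary comparison at $\xi_1 = 0$ to rule it out.
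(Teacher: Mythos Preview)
Your approach is correct and rests on the same core observation as the paper: on the common winning event the two integrands differ by $d_k - d^O \geq 0$, and the strict inequality comes from the event $\{v_k \geq \max_{i\neq k} v_i > (1+\xi_1) d^O\}$ having positive probability. The paper packages this differently: it writes the difference $g(\xi_1) \coloneqq \EXPcost(\xi_1) - \EXPcostCP(\xi_1)$ as the explicit integral $\int \bar F(x) \big(\int_0^{x/(1+\xi_1)} H(y)\,dy\big)\, dL(x) \geq 0$ (with $L$ the CDF of $\max_{i\neq k} v_i$) and then shows $g'(\xi_1) < 0$ on $[0,\bar\xi)$, so that $g(\xi_1) > g(\xi) \geq 0$ for any $\xi\in(\xi_1,\bar\xi)$; but the strict negativity of $g'$ reduces to exactly the positive-measure event you isolate, so the underlying content is identical and your direct route is arguably cleaner. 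One correction to your commentary: the paper \emph{does} lean on the strong-monotonicity of $\bm{G}^\rmCP$ here, precisely to guarantee (via \Cref{lmm: strictly decreasing value} and its Claim~\ref{lmm: non zero measure}) that the separating event has positive measure under the minimal support assumptions on $H$---so your proposed fallback to that lemma is exactly what the paper uses, not a last resort.
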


Therefore when $\blambda^*>0$, we must have $\bxi^*<\blambda^*$, and $\EXPcostCP(\xi^*_1)\leq \EXPcost(\lambda^*_1)=\rho_1$. 
Similarly, we can define bidder $k$'s expected value function with respect to the benchmark, and it coincides with $\EXPvalueCP$. 
Moreover, since $\EXPcostCP$ and $\EXPvalueCP$ are strictly decreasing, we have 
\begin{align*}
\EXPutility(\lambda^*_1) &= \EXPvalueCP(\lambda^*_1)-\EXPcost(\lambda^*_1)\\ 
&< \EXPvalueCP(\xi^*_1) -\EXPcostCP(\xi^*_1) = \EXPutilityCP(\xi^*_1). 
\end{align*}

When $\blambda^*=0$, we have $\EXPcostCP(0)<\EXPcost(0)\leq \rho_k$, which means $\bxi^*=0$. Therefore
\[
\EXPutility(\lambda^*_1)=\EXPutility(0) = \EXPvalueCP(0)-\EXPcost(0)<\EXPvalueCP(0)-\EXPcostCP(0)=\EXPutility(\xi^*_1). 
\]
This completes the proof.

\subsection{Missing Proofs of Lemmas in \Cref{thm: symmetric dominance of cp}}

In the following proofs of the lemmas, Denote the support of $
\bm{F} \coloneqq (F_1,\dots,F_K,H)$ to be $\supp(\bm{F})$. That is, for any $(\bv,d^O)\in \supp(\bm{F})\subseteq \mathbb{R}^{K+1}$, $f_i(v_i)>0$, for any $i \in \calK$ and $h(d^O)>0$. When we talk about sets of $(\bv, d^O)$, we consider the subsets of  $\supp(\bm{F})$ without claiming explicitly.  For any subsets of $\{v_i\}^K_{i=1}\cup\{d^O\}$, we denote their supports similarly. For example,  $\supp(F_i, F_j) \coloneqq \left\{(\bv,d^O) \in \mathbb{R}^{K+1}: f_i(v_i)>0\wedge f_j(v_j)>0\right\}$. 

Due to the convergence of the pacing strategies, the proof of \Cref{lmm: convergence of CP} is analogous to that in \citet{balseiro2019learning}, which we shall omit here.

\lmmstrictlydecreasingvalue* 
\begin{proof}
We first show that $\EXPcostCP$ is strictly decreasing. For any $\xi_1 < \xi^\prime_1$, 
\begin{align*}
&\left(\xi_1-\xi_1^\prime\right)\left(\EXPcostCP(\xi_1)-\EXPcostCP(\xi'_1)\right) \\
= &\left(\xi_1-\xi_1^\prime\right) \bbE\left[\bm{1}\left\{\left(\forall i\in \calK: i\ne k\right)\left(v_k\geq v_i\right) \wedge \left((1+\xi'_1)d^O > v_k\geq(1+\xi_1)d^O\right)\right\}d^O\right]\\
=&1/K (\xi_1\mathbf{1}-\xi'_1\mathbf{1})\left(\bm{G}^{\rmCP}(\xi_1\mathbf{1})-\bm{G}^{\rmCP}(\xi'_1\mathbf{1})\right) <0. 
\end{align*}

Moreover, we have 
\begin{claim}
\label{lmm: non zero measure}
For $\xi_1, \xi'_1 \in [0,\bar{\xi}]$ such that $\xi_1 <\xi'_1$,  
% \[\left\{(1+\xi'_1)d^O > v_k\geq(1+\xi_1)d^O\right\}\]
\[
\left\{\left(\forall i\in \calK: i\ne k\right)\left(v_k \geq v_i\right)\wedge \left((1+\xi'_1)d^O > v_k\geq(1+\xi_1)d^O\right)\right\}
\]
has non-zero measure.
\end{claim}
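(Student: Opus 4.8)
The claim to prove is that for $\xi_1<\xi_1'$ in $[0,\bar\xi]$, the event
\[
E \coloneqq \Bigl\{(\bv,d^O)\in\supp(\bm F):\ (\forall i\ne k)\,v_k\ge v_i\ \wedge\ (1+\xi_1')d^O>v_k\ge(1+\xi_1)d^O\Bigr\}
\]
has nonzero (Lebesgue) measure. The plan is to exhibit an explicit open box inside $E$ on which all the relevant densities are positive, so that the product measure assigns it positive mass. I would first pin down a convenient ``anchor'' point. Since every $f_i$ and $h$ has a nonempty open set where it is positive (being a bounded density with support $[0,\bar v_i]$ resp.\ for $H$), pick $d^O=\delta>0$ small enough that $h$ is positive in a neighbourhood of $\delta$ and such that $(1+\xi_1)\delta$ is a value at which $f_k$ is positive and below which (in a small interval) $f_k$ stays positive. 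Concretely, since $\xi_1<\xi_1'$ we may choose $\delta$ and a target value $w$ with $(1+\xi_1)\delta< w<(1+\xi_1')\delta$ and with $w$ in the interior of the support of $F_k$; then shrink so that simultaneously $w$ is an upper bound we can beat for all the other $v_i$, i.e.\ there is room below $w$ where each $f_i$ is positive.

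The key steps, in order, would be: (1) Observe that because $\xi_1<\xi_1'$, the interval $\bigl((1+\xi_1)d^O,\ (1+\xi_1')d^O\bigr)$ is nonempty and has length proportional to $d^O$, so choosing $d^O$ in a small interval $(\delta-r,\delta+r)$ keeps this interval nonempty and of length bounded below. (2) Choose $w\in\bigl((1+\xi_1)\delta,(1+\xi_1')\delta\bigr)$ with $f_k(w)>0$; this is possible by density/continuity considerations on the support of $F_k$ together with the fact that the open interval is nonempty — if necessary take $\delta$ small so the whole interval lies inside the interior of $\operatorname{supp}(F_k)$, using $\xi_1,\xi_1'$ bounded. (3) Pick $\eta>0$ small so that the box $v_k\in(w-\eta,w+\eta)$, $d^O\in(\delta-\eta,\delta+\eta)$ still satisfies $(1+\xi_1)d^O< v_k<(1+\xi_1')d^O$ for every point in the box (a finite system of strict linear inequalities, so an open condition), and so that $f_k>0$ throughout $(w-\eta,w+\eta)$ and $h>0$ throughout $(\delta-\eta,\delta+\eta)$. (4) For each $i\ne k$, choose an interval $(a_i,b_i)\subseteq[0,w-\eta]$ on which $f_i>0$ (possible since the support of $F_i$ is $[0,\bar v_i]\ni$ small positive numbers and $w-\eta>0$); then on the product of all these intervals with the $v_k$- and $d^O$-intervals, every coordinate condition $v_k\ge v_i$ holds strictly. (5) Conclude that this product box is contained in $E$, lies in $\operatorname{supp}(\bm F)$, and has positive product measure because each factor is an interval of positive length on which the corresponding density is positive; hence $\Pr[E]>0$, i.e.\ $E$ has nonzero measure.

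The main obstacle — really the only subtle point — is step (2)/(3): ensuring we can place the window for $v_k$ strictly between $(1+\xi_1)d^O$ and $(1+\xi_1')d^O$ \emph{and} inside the positivity region of $f_k$, simultaneously for a range of $d^O$ values. This requires choosing $\delta$ small enough: since $(1+\xi_1)\delta\to 0$ and $(1+\xi_1')\delta\to 0$ as $\delta\to 0$ while maintaining the gap ratio $\frac{1+\xi_1'}{1+\xi_1}>1$, and since $0$ is in the closure of $\{f_k>0\}$ with $(0,\epsilon)\cap\{f_k>0\}$ having positive measure for small $\epsilon$ (as $F_k$ has support starting at $0$ — or more carefully, has a bounded density on $[0,\bar v_k]$ so its positivity set is dense enough near the support), we can find such a $\delta$ and $w$. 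One should be slightly careful that the density $f_k$ might vanish on a set of positive measure inside $[0,\bar v_k]$; but ``nonzero measure of $E$'' only needs one good sub-interval, and the Lebesgue density theorem (or simply the definition of support) guarantees that the open interval $((1+\xi_1)\delta,(1+\xi_1')\delta)$, once it is nonempty and contained in the support of $F_k$, meets $\{f_k>0\}$ in a set of positive measure. After that, everything else is an open-condition/product-measure argument and is routine. Finally I would remark that this claim is exactly what is needed to upgrade the weak inequality in the strong-monotonicity computation of Lemma~\ref{lmm: strictly decreasing value} to the strict inequality $\EXPcostCP(\xi_1)>\EXPcostCP(\xi_1')$.
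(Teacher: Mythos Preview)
Your approach is genuinely different from the paper's, and it has a real gap.

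The paper does not argue constructively. In the proof of \Cref{lmm: strictly decreasing value} the claim is obtained as an immediate corollary of the computation that precedes it: by \Cref{asm: cp},
\[
(\xi_1-\xi_1')\,\bbE\!\left[\bm{1}\{E\}\,d^O\right]
=\frac{1}{K}\,(\xi_1\bm{1}-\xi_1'\bm{1})^{\top}\!\bigl(\bm{G}^{\rmCP}(\xi_1\bm{1})-\bm{G}^{\rmCP}(\xi_1'\bm{1})\bigr)<0,
\]
so $\bbE[\bm{1}\{E\}\,d^O]>0$, which forces $E$ to have positive measure. That is the whole argument. Your closing remark has the dependency reversed: the strict inequality $\EXPcostCP(\xi_1)>\EXPcostCP(\xi_1')$ comes \emph{from} strong monotonicity and \emph{yields} the claim; the claim is then used to get strict monotonicity of $\EXPvalueCP$, not of $\EXPcostCP$.

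Your constructive route never invokes strong monotonicity, and that is precisely where it fails under the paper's assumptions. The model only says that $H$ has a bounded density $h$; nothing is assumed about the support of $H$. Your construction repeatedly relies on picking $\delta>0$ ``small enough'' so that $h$ is positive near $\delta$ and so that $(1+\xi_1')\delta$ lands in $(0,\bar v_k)$. If, say, $\operatorname{supp}(H)\subseteq[a,b]$ with $a>\bar v_k/(1+\xi_1)$, then no admissible anchor $\delta$ exists and in fact $E=\varnothing$ for such $\xi_1$. It is exactly \Cref{asm: cp} that rules this pathology out, but you do not use it. To repair the proposal you must either (i) import strong monotonicity and read the claim off the displayed identity above, as the paper does, or (ii) add and justify a support hypothesis on $H$ (for instance that $\operatorname{supp}(H)$ meets $(0,\bar v/(1+\bar\xi))$), after which your box construction goes through.
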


Then 
\begin{align*}
\EXPvalueCP(\xi_1) - \EXPvalueCP(\xi'_1)  =
\bbE\left[\bm{1}\left\{\left(\forall i\in \calK: i\ne k\right)\left(v_k\geq v_i\right) \wedge \left((1+\xi'_1)d^O > v_k\geq(1+\xi_1)d^O\right)\right\}v_k\right]>0.\\
\end{align*}
This completes the proof. 
\end{proof}

\lmmrangeofxi*

\begin{proof}

$\xi^*_k < \bar{\xi}_k$ because
\begin{align*}
    \EXPcostCP\left(\bar{\xi}_k, \bxi^*_{-k}\right) = \bbE \left[\bm{1}\left\{\frac{v_k}{1+\bar{\xi}_k}\geq d_k\right\}d^O\right] \overset{(\text a)}{\leq} \bbE \left[\bm{1}\left\{\frac{v_k}{1+\bar{v}_k/\rho_k}\geq d_k\right\}d^O\right] \leq \frac{v_k}{1+\bar{v}_k/\rho_k} < \rho_k,
\end{align*}
where (a) follows from $\bar{\xi}_k \geq \bar{v}_k/\rho_k$. 

\end{proof}

\lmmrelationofcostfunctions*
\begin{proof}
Since 
\[
g(\xi_1)\coloneqq \EXPcost(\xi_1) - \EXPcostCP(\xi_1)=\int_{x}\bar{F}(x) \int^{\frac{x}{1+\xi_1}}_{0}H(y)\,dy\,dL \geq 0,
\]
where $\bar{F}(x)=1-F(x)$ and $L$ is the cumulative probability function of $d^I_k\coloneqq \max_{i\in\calK: i \ne k} v_i$. 

Since by \Cref{lmm: non zero measure} and the identical independence of bidders' distributions, we have, for any $\xi_1 \in [0,\bar{\xi})$, 
\[
\left\{d^I_{k} \geq (1+\xi_1)d^O\right\} ~~~\text{and}~~~\left\{v_k \geq d^I_k\right\}
\]
have non-zero measure, and therefore $\bbE\left[\bm{1}\left\{v_k \geq d^I_k\right\}\bm{1}\left\{d^I_k\geq (1+\xi_1)d^O\right\}d^I_k/(1+\xi_1)\right]>0$. 

Note that for any $\xi_1 \in [0,\bar{\xi})$,
\[
\frac{dg}{d\xi_1}=-\int_{x}\frac{x}{(1+\xi_1)^2}\bar{F}(x)H(\frac{x}{1+\xi_1})\,dx\,dL = -\frac{1}{1+\xi_1}\bbE\left[\bm{1}\left\{v_k \geq d^I_k\right\}\bm{1}\left\{d^I_k\geq (1+\xi_1)d^O\right\}d^I_k/(1+\xi_1)\right]<0. 
\]
We have $\EXPcost(\xi_1)>\EXPcostCP(\xi_1)$. 
This completes the proof. 

\end{proof}

\subsection{Other properties of CP in General Cases}

In symmetric cases,  $\bxi^*\leq \blambda^*$. Here, we show this holds in general cases under mild conditions.

\begin{restatable}{lemma}{lmmdecreasingparameter}
\label{lmm: decreasing parameter}
If it holds that for any $k\in\calK$, 
\[
\left\{\frac{v_k}{1+\xi_k^*}\geq d^O\right\} \subsetneq \left\{\frac{v_k}{1+\lambda_k^*}\geq d^O\right\}.
\]
then, 
\begin{align}
    \blambda^*\geq \bxi^*. 
\end{align}
\end{restatable}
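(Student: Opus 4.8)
The plan is to use the complementarity conditions defining $\blambda^*$ and $\bxi^*$, together with the strong monotonicity of $\bm{G}^{\rmCP}$ and the hypothesized set inclusion, to rule out any coordinate where $\xi_k^* > \lambda_k^*$. Fix the index set $S \coloneqq \{k \in \calK : \xi_k^* > \lambda_k^*\}$ and suppose for contradiction that $S \neq \emptyset$. The idea is to compare the two equilibrium systems coordinate by coordinate on $S$, using that on $S$ the pacing parameter $\xi_k^*$ is strictly positive, hence by \eqref{eqn: xi definition} the budget constraint binds: $\EXPcostCP(\bxi^*) = \rho_k$ for every $k \in S$.

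**Key steps.** First, I would establish a pointwise monotonicity fact: for each bidder $k$, the function $\EXPcostCP$ is nonincreasing in each coordinate $\xi_j$ (raising any shading parameter only shrinks the winning region $\{v_k \geq (1+\xi_k)d_k\}$ since $d_k$ depends on the other $\xi_j$'s only through $d^I_k = \max_{i \neq k} v_i/(1+\xi_i)$, which is nonincreasing in each $\xi_i$). Second, I would use the set-inclusion hypothesis to relate $\EXPcostCP$ evaluated at $\bxi^*$ to $\EXPcost$ evaluated at $\blambda^*$: for $k \in S$, on the one hand $\EXPcostCP(\bxi^*) = \rho_k$; on the other hand, since $\xi_k^* > \lambda_k^*$, the inclusion $\{v_k/(1+\xi_k^*) \geq d^O\} \subsetneq \{v_k/(1+\lambda_k^*)\geq d^O\}$ plus the fact that outside competitors' bids in $\EXPcost$ include $d^O$ forces $\EXPcost(\blambda^*) \geq \EXPcostCP(\blambda^*_S\text{-modified}) \geq \dots$ — here I would carefully chain the inequality so that $\rho_k = \EXPcostCP(\bxi^*) \le \EXPcostCP(\bxi^*)$ is contradicted by $\EXPcost(\blambda^*) \le \rho_k$ with a strict gap. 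Concretely, I would construct the hybrid parameter vector that agrees with $\bxi^*$ off $S$ and with $\blambda^*$ on $S$, and run the strong-monotonicity inequality $(\bxi^* - \blambda^*)^\top(\bm{G}^{\rmCP}(\blambda^*) - \bm{G}^{\rmCP}(\bxi^*)) \geq \gamma\|\bxi^* - \blambda^*\|_2^2 > 0$ restricted appropriately, then show the left-hand side is $\le 0$ using: on $S$, $\xi_k^* - \lambda_k^* > 0$ while $G^{\rmCP}_k(\blambda^*) - G^{\rmCP}_k(\bxi^*) \le 0$ (monotonicity plus the binding constraint $G^{\rmCP}_k(\bxi^*) = \rho_k \ge \EXPcost(\blambda^*) \ge$ the relevant CP cost at $\blambda^*$ via the inclusion); and on $\calK \setminus S$, $\xi_k^* - \lambda_k^* \le 0$ while the corresponding coordinate difference has the matching sign.

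**Main obstacle.** The delicate point is translating the hypothesized inclusion $\{v_k/(1+\xi_k^*)\geq d^O\} \subsetneq \{v_k/(1+\lambda_k^*)\geq d^O\}$ — which is a statement about the external competition only — into a usable inequality between $\EXPcostCP$ (whose winning region also involves the internal competitor $d^I_k$) and $\EXPcost$ (whose competing bid is $d_k = \max\{d^I_k, d^O\}$ with the $d^I_k$ built from possibly different parameters). I expect to need the observation that $\EXPcost(\blambda^*) = \rho_k$ exactly when $\lambda_k^* > 0$, handle the boundary case $\lambda_k^* = 0$ separately (where $\xi_k^* > 0 = \lambda_k^*$ forces a direct contradiction with $\EXPcostCP(\bxi^*) = \rho_k \leq \EXPcost(\mathbf{0}) \leq \dots$, using $\EXPcostCP \leq \EXPcost$ pointwise as in \Cref{lmm: relation of cost functions}-style arguments), and assemble a clean sign bookkeeping over $S$ versus its complement so the strong-monotonicity contradiction closes. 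Once $S = \emptyset$ is forced, $\blambda^* \geq \bxi^*$ follows immediately.
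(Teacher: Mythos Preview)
Your approach has a genuine gap. First, the coordinate-wise monotonicity claim is wrong: raising $\xi_i$ for $i\neq k$ decreases $d^I_k$, which \emph{enlarges} bidder $k$'s winning region $\{v_k\geq (1+\xi_k)d_k\}$, so $G_k^{\rmCP}$ is nondecreasing (not nonincreasing) in the other coordinates. Second, even using the correct facts you do have --- for $k\in S$ one gets $G_k^{\rmCP}(\blambda^*)\leq G_k(\blambda^*)\leq \rho_k = G_k^{\rmCP}(\bxi^*)$ since $\xi_k^*>0$ and the CP payment $d^O$ never exceeds the IP payment $d_k$ --- the strong-monotonicity inequality does not close. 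The $S$-terms in $(\bxi^*-\blambda^*)^\top(\bm{G}^{\rmCP}(\blambda^*)-\bm{G}^{\rmCP}(\bxi^*))$ are indeed $\leq 0$, but for $k\notin S$ with $\xi_k^*<\lambda_k^*$ you need $G_k^{\rmCP}(\blambda^*)\geq G_k^{\rmCP}(\bxi^*)$, and nothing in the complementarity conditions or the pointwise bound $G_k^{\rmCP}\leq G_k$ delivers this; in fact when $\xi_k^*>0$ one has $G_k^{\rmCP}(\bxi^*)=\rho_k\geq G_k^{\rmCP}(\blambda^*)$, giving the wrong sign. The hybrid vector does not help either: replacing $\xi_k^*$ by $\lambda_k^*$ on $S$ raises bidder $k$'s own bid and enlarges her winning region, pushing $G_k^{\rmCP}$ up.

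The paper takes a different route that does not use strong monotonicity of $\bm{G}^{\rmCP}$ at all. Assuming $1\in S$, it shows that unless some $\sigma(1)\in S\setminus\{1\}$ has the \emph{reversed} inner inclusion $\{v_1/(1+\xi_1^*)\geq v_{\sigma(1)}/(1+\xi_{\sigma(1)}^*)\}\supsetneq\{v_1/(1+\lambda_1^*)\geq v_{\sigma(1)}/(1+\lambda_{\sigma(1)}^*)\}$, the whole CP winning region for bidder~1 under $\bxi^*$ would sit strictly inside the IP winning region under $\blambda^*$, forcing $G_1^{\rmCP}(\bxi^*)<G_1(\blambda^*)\leq\rho_1$, contradicting $G_1^{\rmCP}(\bxi^*)=\rho_1$. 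The reversed inclusion yields the ratio inequality $(1+\xi_1^*)/(1+\xi_{\sigma(1)}^*)<(1+\lambda_1^*)/(1+\lambda_{\sigma(1)}^*)$. Iterating from $\sigma(1)$ and using finiteness of $S$ produces a cycle; multiplying the ratio inequalities around the cycle gives $1<1$, the desired contradiction. This cycle argument is the idea your proposal is missing.
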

For example, the ``if'' condition holds when there exists $c>0$, such that $H(d^O)\ne 0$, for $d^O \in (0,c)$. As $H$ is the cdf of $d^O$ and is an increasing function, this is equivalent to saying  $H(d^O)>0$ for all $d^O>0$. 

\begin{proof}
Writing down the specific formulation of $\EXPcost$ and $\EXPcostCP$ respectively, we have
\begin{align*}
    \EXPcost(\blambda) &= \bbE\left[\bm{1}\left\{\frac{v_k}{1+\lambda_k} \geq \max\left\{\max_{i\ne k}\frac{v_i}{1+\lambda_i}, d^O\right\}\right\}\max\left\{\max_{i\ne k}\frac{v_i}{1+\lambda_i}, d^O\right\} \right]\\
    &=\bbE\left[\prod_{i\ne k}\bm{1}\left\{\frac{v_k}{1+\lambda_k}\geq \frac{v_i}{1+\lambda_i}\right\}\bm{1}\left\{\frac{v_k}{1+\lambda_k}\geq d^O\right\}\max\left\{\max_{i\ne k}\frac{v_i}{1+\lambda_i}, d^O\right\}\right],\\
    \EXPcostCP(\bxi) &= \bbE\left[\bm{1}\left\{\frac{v_k}{1+\xi_k} \geq \max\left\{\max_{i\ne k}\frac{v_i}{1+\xi_i}, d^O\right\}\right\}d^O \right]\\
    &=\bbE\left[\prod_{i\ne k}\bm{1}\left\{\frac{v_k}{1+\xi_k}\geq \frac{v_i}{1+\xi_i}\right\}\bm{1}\left\{\frac{v_k}{1+\xi_k}\geq d^O\right\} d^O\right].\\
\end{align*}

Now suppose for the sake of contradiction, $\xi^*_i > \lambda^*_i$ for $i\in \{1,\dots, s\}$, for some $s\leq K$, and $\xi^*_i \leq \lambda^*_i$ for $i> s$. 
Consider $k=1$, and we have $G^{\rmCP}_1(\bxi^*) = \rho_1$. Since $\xi^*_i \leq \lambda^*_i$ for $i>s$, we have,
\begin{align}
\left\{\frac{v_1}{1+\xi^*_1}\geq \frac{v_i}{1+\xi^*_i}\right\} \subseteq \left\{\frac{v_1}{1+\lambda^*_1}\geq \frac{v_i}{1+\lambda^*_i}\right\}, \forall~i>s;
\end{align}
Since $\xi^*_1 > \lambda^*_1$, we have,
\begin{align}
\left\{\frac{v_1}{1+\xi^*_1}\geq d^O\right\} \subsetneq \left\{\frac{v_1}{1+\lambda^*_1}\geq d^O\right\}. 
\end{align}

To make $G^{\rmCP}_1(\bxi^*_1)=\rho_1 \geq G_1(\blambda^*_1)$, there must exist $\sigma(1)\in \left\{2,\dots,s \right\}$, such that
\[
\left\{\frac{v_1}{1+\xi^*_1}\geq \frac{v_{\sigma(1)}}{1+\xi^*_{\sigma(1)}} \right\}\supsetneq \left\{\frac{v_1}{1+\lambda^*_1}\geq \frac{v_{\sigma(1)}}{1+\lambda^*_{\sigma(1)}} \right\},
\]
since otherwise,
\[
\left\{\frac{v_1}{1+\xi^*_1} \geq \max\left\{\max_{i\ne 1}\frac{v_i}{1+\xi^*_i}, d^O\right\}\right\} \subsetneq \left\{\frac{v_1}{1+\lambda^*_1} \geq \max\left\{\max_{i\ne 1}\frac{v_i}{1+\lambda^*_i}, d^O\right\}\right\}. 
\]
Since $\max\left\{\max_{i\ne 1}\frac{v_i}{1+\lambda^*_i},d^O\right\}\geq d^O$, 
we have $G_1(\blambda^*)>G^{\rmCP}_1(\bxi^*)$, contradicting that $G^{\rmCP}_1(\bxi^*_1)\geq G_1(\blambda^*_1)$.

This means that
\[
\frac{1+\xi^*_1}{1+\xi^*_{\sigma(1)}} < \frac{1+\lambda^*_1}{1+\lambda^*_{\sigma(1)}}. 
\]
Similarly, for $\sigma(1)$, we have $\sigma^2(1)\in \{1,\dots,s\}$, and moreover 
\[
\frac{1+\xi_{\sigma^{\ell-1}(1)}}{1+\xi_{\sigma^{\ell}(1)}} < \frac{1+\lambda_{\sigma^{\ell-1}(1)}}{1+\lambda_{\sigma^{\ell}(1)}}, ~~\text{for }\ell=0,1,\dots \text{ and } \sigma^\ell(1)\in \left\{1,\dots,s\right\}
\]
Since $s<\infty$, there must exist $\ell_0$, such that $1 = \sigma^{\ell_0}(1)$. Let us multiply the above inequalities from $0$ to $\ell_0$, we have
\[
1 = \prod^{\ell_0}_{\ell=0} \frac{1+\xi_{\sigma^{\ell-1}(1)}}{1+\xi_{\sigma^{\ell}}(1)} < \prod^{\ell_0}_{\ell=0} \frac{1+\lambda_{\sigma^{\ell-1}(1)}}{1+\lambda_{\sigma^{\ell}}(1)} = 1,
\]
leading to a contradiction. 

\end{proof}

\section{Omitted Proofs in Section~\ref{sec: hp}}

\subsection{Details on the Choice of Step Sizes}
In the algorithms presented in the main paper, we use $\epsilon = 1/\sqrt{T}$ as our step size. 
Generally, the results in \Cref{sec: hp} hold when $\epsilon$ satisfies the following conditions:
\begin{itemize}
	\item $\bar{v}\epsilon < 1, 2\gamma\epsilon < 1$ and $\underline{G}'\epsilon < 1$;
	\item $\lim_{T\rightarrow \infty}\epsilon = 0$ and $\lim_{T\rightarrow \infty}T\epsilon^{3/2} = \infty$.
\end{itemize}
One can check it when reading the following proofs. 

\subsection{Proof of \Cref{thm: dominance of hp}}

\thmdominanceofhp*

First, we show \Cref{line: lambda update} in \Cref{alg: hybrid coordinated pacing} is equivalent to \eqref{eqn: lambda update equivalent}.
\begin{align*}
    z'_{k, t} &= \bm{1}\left\{b^{I}_{k, t} \geq z_{k, t} \right\}\max \left(z_{k,t}, x_{k,t}d^{I}_{k, t}\right) \\
    &= \bm{1}\left\{b^{I}_{k, t} \geq x_{k, t}d^O_{t} \right\}\max \left(x_{k, t}d^O_{t}, x_{k,t}d^{I}_{k, t}\right) \\
    &= \bm{1}\left\{b^{I}_{k, t} \geq x_{k, t}d^O_{t} \right\}x_{k, t}d_{k, t}.
\end{align*}
When $x_{k, t} = 1$, as bidder $k$ wins both the internal election and the real auction, we have $b^{O}_{k, t} \geq d^O_{t}$ and $b^{I}_{k, t} \geq d^I_{k, t}$. Hence, 
\begin{align*}
    z'_{k, t} = \bm{1}\left\{b^{I}_{k, t} \geq d^O_{t} \right\}d_{k, t} = \bm{1}\left\{b^{I}_{k, t} \geq d_{k, t} \right\}d_{k, t}. 
\end{align*}
When $x_{k, t} = 0$, either $b^{O}_{k, t} < d^O_{t}$ or $b^{I}_{k, t} < d^I_{k, t}$ holds. If $b^{I}_{k, t} \geq d^I_{k, t}$, we have $b^{I}_{k, t} \leq b^{O}_{k, t} < d^O_{t} \leq d_{k, t}$, so in either case
\begin{align*}
    z'_{k, t} = 0 = \bm{1}\left\{b^{I}_{k, t} \geq d_{k, t} \right\}d_{k, t}. 
\end{align*}

As the update of $\lambda_k$ coincides with the subgradient descent scheme of an individual adaptive pacing strategy, we can apply the results in \citet{balseiro2019learning} to characterize the convergence of the sequence of pseudo multipliers. In what follows we use the standard norm notation $\|\yy\|_p \coloneqq \left(\sum_{k=1}^K|y_k|^p\right)^{1/p}$ for a vector $\yy \in \bbR^K$.
\begin{lemma}[\citet{balseiro2019learning}]
\label[lemma]{lem: lambda convergence}
Suppose that \Cref{asm: strongly-monotone} holds. There exists a unique $\blambda^*$ that satisfies \eqref{eqn: lambda star definition} and
$$\lim_{T\to \infty} \frac{1}{T}\sum_{t = 1}^T \bbE \left[\|\blambda_{t} - \blambda^*\|_1\right] = 0.$$
\end{lemma}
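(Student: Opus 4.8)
The plan is to reduce this statement to the convergence result for simultaneous individual adaptive pacing that it already credits to \citet{balseiro2019learning}; the content of the proof is making that reduction precise. First I would invoke the identity derived just above the statement: Line~\ref{line: lambda update} of Algorithm~\ref{alg: hybrid coordinated pacing} is equivalent to the recursion \eqref{eqn: lambda update equivalent}, because $z'_{k,t} = \bm{1}\{b^I_{k,t}\geq d_{k,t}\}d_{k,t}$ with $d_{k,t}=\max\{\max_{i\ne k}b^I_{i,t},\,d^O_t\}$. In words, the pseudo-multiplier of bidder $k$ is updated using exactly the second-price cost she would incur if she submitted her internal bid $b^I_{k,t}=\min\{v_{k,t}/(1+\lambda_{k,t}),\tilde B_{k,t}\}$ against the pool $\{b^I_{i,t}\}_{i\ne k}\cup\{d^O_t\}$. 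The case analysis on $x_{k,t}$ in the excerpt, together with the constraint $\mu_{k,t}\le\lambda_{k,t}$ enforced by Line~\ref{line: mu smaller than lambda} (which gives $b^O_{k,t}\ge b^I_{k,t}$ and hence lets HP discern $\bm{1}\{b^I_{k,t}\ge d^O_t\}$), shows that $z'_{k,t}$ is computable from HP's observables $\bv_t,\bx_t,\zz_t$ and its internal state, so \eqref{eqn: lambda update equivalent} is genuinely the rule HP executes.

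Next I would observe that, read jointly over $k\in\calK$, \eqref{eqn: lambda update equivalent} together with the budget bookkeeping is precisely the coupled dual-subgradient dynamics obtained when the $K$ coalition members run Algorithm~\ref{alg: individual pacing} (IP) simultaneously on the common realization $(\bv,\dd^O)$, each playing the pacing bid $\min\{v_{k,t}/(1+\lambda_{k,t}),\tilde B_{k,t}\}$; the steady-state expected per-round cost map of this game is exactly $\bm G$ of \eqref{eqn: ip exp cost definition}. By \Cref{asm: strongly-monotone}, $\bm G$ is $\gamma$-strongly monotone, so the complementarity system \eqref{eqn: lambda star definition} has a unique solution $\blambda^*$ by \citet{facchinei2003finite}, and \citet{balseiro2019learning}'s convergence theorem for adaptive pacing in games under strong monotonicity gives $\frac1T\sum_{t=1}^T\bbE[\|\blambda_t-\blambda^*\|_1]\to 0$. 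I would also check that the step-size schedule $\epsilon=1/\sqrt T$ (more generally the conditions listed in the step-size subsection) meets the hypotheses of that theorem.

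The main obstacle is the bookkeeping that makes this reduction literal rather than merely heuristic: in HP the remaining budget $\tilde B_{k,t}$ is depleted by the \emph{actual} coordinated expenditure $z_{k,t}=x_{k,t}d^O_t$, which in general differs from the virtual cost $z'_{k,t}$ that feeds the $\lambda$-update, so HP's $\tilde{\bm B}_t$ need not agree pathwise with that of a genuine simultaneous-IP run, and the caps $\min\{\cdot,\tilde B_{k,t}\}$ could in principle desynchronize the two $\blambda$-trajectories. The resolution is the same device \citet{balseiro2019learning} use to handle truncation for a single bidder: run the Lyapunov contraction on $\bbE[\|\blambda_{t}-\blambda^*\|_2^2]$, which under $\gamma$-strong monotonicity decreases geometrically up to an $O(\epsilon^2)$ term plus a correction supported on the event that some cap binds, and then bound the aggregate weight of that event (it occurs only when a remaining budget is below one bid, which a standard argument shows has vanishing Cesàro frequency since on the no-cap event $\bbE[z'_{k,t}\mid\blambda_t]=G_k(\blambda_t)$ and $G_k(\blambda^*)\le\rho_k$). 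Once this is in place, the existence/uniqueness of $\blambda^*$ and the stated $L^1$-Cesàro convergence follow by direct quotation of \citet{balseiro2019learning} and \citet{facchinei2003finite}.
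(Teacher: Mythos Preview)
Your proposal is correct and follows the same approach as the paper: reduce the $\blambda$-dynamics in HP to the simultaneous individual pacing dynamics via the identity $z'_{k,t}=\bm{1}\{b^I_{k,t}\ge d_{k,t}\}d_{k,t}$, then invoke \citet{balseiro2019learning} (together with \citet{facchinei2003finite} for existence/uniqueness of $\blambda^*$). The paper in fact provides no proof of this lemma beyond the citation and the preceding verification of \eqref{eqn: lambda update equivalent}.

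Your discussion of the budget-cap desynchronization---that HP depletes $\tilde B_{k,t}$ by the actual coordinated cost $z_{k,t}$ rather than the virtual IP cost $z'_{k,t}$, so the cap $\min\{\cdot,\tilde B_{k,t}\}$ in $b^I_{k,t}$ need not match a genuine IP run---is a legitimate subtlety that the paper does not explicitly address here. Your proposed resolution (run the Lyapunov contraction on the uncapped sequence and separately control the Ces\`aro weight of the capping event) is exactly the device the paper deploys in the proof of \Cref{lem: mu convergence}, where the ending time $\tau=\min_k\sup\{t\le T:\tilde B_{k,t}\ge\bar v_k\}$ is shown to satisfy $\bbE[T-\tau]=O(1/\epsilon+T\epsilon)$; the same bound covers the $\blambda$-sequence since the cap on $b^I_{k,t}$ is inactive for $t\le\tau$. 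So your plan is both correct and slightly more careful than the paper's own treatment of this lemma.
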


The following \Cref{lem: mu convergence} establishes a similar convergence result for multipliers $\bmu$.
\begin{restatable}{lemma}{lemmuconvergence}
\label[lemma]{lem: mu convergence}
    For each $k$, there exists a unique $\mu^*_k \leq \lambda^*_k$ that satisfies \eqref{eqn: mu star definition} and for each $k\in \calK$, $$\lim_{T\to \infty} \frac{1}{T} \sum_{t = 1}^T \bbE \left[|\mu_{k, t} - \mu^*_k|\right] = 0.$$
\end{restatable}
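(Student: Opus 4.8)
The plan is to follow the two-part structure of \Cref{lem: lambda convergence}: first pin down the limit point $\mu^*_k$ and establish $\mu^*_k\le\lambda^*_k$, and then prove the time-averaged convergence $\mu_{k,t}\to\mu^*_k$ by a Lyapunov drift estimate on $(\mu_{k,t}-\mu^*_k)^2$ that adapts the adaptive-pacing analysis of \citet{balseiro2019learning}, the new twist being that $\mu_k$ is updated by a projected subgradient step onto the \emph{moving} box $[0,\lambda_{k,t+1}]$ whose endpoint $\lambda_{k,t+1}$ is itself random and not yet converged.

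\textbf{Step 1 (the limit point).} The key observation is $\EXPcostHP(\lambda^*_k,\blambda^*)\le\EXPcost(\blambda^*)\le\rho_k$: writing $d_k=\max\{d^I_k,d^O\}$ with $d^I_k=\max_{i\ne k}v_i/(1+\lambda^*_i)$, the event defining $\EXPcostHP(\lambda^*_k,\blambda^*)$ coincides with $\{v_k\ge(1+\lambda^*_k)d_k\}$, so replacing the integrand $d^O$ by the larger $d_k$ turns $\EXPcostHP(\lambda^*_k,\blambda^*)$ into $\EXPcost(\blambda^*)$, which is $\le\rho_k$ by \eqref{eqn: lambda star definition}. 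Since $\mu_k\mapsto\EXPcostHP(\mu_k,\blambda^*)$ is continuous and, by the second part of \Cref{asm: hp}, strictly decreasing, it follows that \eqref{eqn: mu star definition} has a unique solution $\mu^*_k$ — the root of $\EXPcostHP(\cdot,\blambda^*)=\rho_k$ if $\EXPcostHP(0,\blambda^*)>\rho_k$, else the corner $\mu^*_k=0$ — and in either case $\mu^*_k\le\lambda^*_k$.

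\textbf{Step 2 (convergence).} Put $\phi_{k,t}=(\mu_{k,t}-\mu^*_k)^2$ and let $\calF_t$ be the history through the round-$t$ multipliers. Away from budget depletion, $\bbE[z_{k,t}\mid\calF_t]=\EXPcostHP(\mu_{k,t},\blambda_t)$; combining the $\mu$-update, nonexpansiveness of the Euclidean projection, and (for rounds with $\mu^*_k\le\lambda_{k,t+1}$, in which case $\mu^*_k$ lies in the box) the standard expansion gives a recursion of the form
\[
\bbE[\phi_{k,t+1}\mid\calF_t]\le(1-2\epsilon\underline{G}')\phi_{k,t}+2\epsilon\bigl(\EXPcostHP(\mu_{k,t},\blambda_t)-\EXPcostHP(\mu_{k,t},\blambda^*)\bigr)(\mu_{k,t}-\mu^*_k)+O(\epsilon^2),
\]
where the contraction $-2\epsilon\underline{G}'\phi_{k,t}$ is extracted from the one-dimensional variational-inequality bound $(\EXPcostHP(\mu_{k,t},\blambda^*)-\rho_k)(\mu_{k,t}-\mu^*_k)\le-\underline{G}'\phi_{k,t}$, which follows from the second part of \Cref{asm: hp} and \eqref{eqn: mu star definition} by the mean value theorem (treating $\mu^*_k$ interior and $\mu^*_k=0$ separately). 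The remaining cross term is $O(\epsilon\|\blambda_t-\blambda^*\|_1)$ because $\EXPcostHP$ is Lipschitz in $\blambda$ (the bounded-density hypotheses of \Cref{sec: model}; cf.\ also \Cref{lem: lipschitz hp}). For rounds with $\mu^*_k>\lambda_{k,t+1}$ — possible only while $\blambda$ is far from $\blambda^*$ — the projection caps at $\lambda_{k,t+1}$, and using $\mu_{k,t}\le\lambda_{k,t}$ together with $|\lambda_{k,t+1}-\lambda_{k,t}|\le\epsilon\bar{v}_k$ one checks the extra contribution is again $O(\epsilon|\lambda_{k,t+1}-\lambda^*_k|+\epsilon^2)$; the budget-depletion rounds are absorbed as in \citet{balseiro2019learning}. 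Summing $t=1,\dots,T$ and dividing by $2\epsilon\underline{G}'T$,
\[
\frac1T\sum_{t=1}^T\bbE[\phi_{k,t}]\le\frac{\bbE[\phi_{k,1}]}{2\epsilon\underline{G}'T}+O\!\left(\frac1T\sum_{t=1}^T\bbE\|\blambda_t-\blambda^*\|_1\right)+O(\epsilon)+o(1),
\]
and the right-hand side vanishes as $T\to\infty$ by \Cref{lem: lambda convergence} and the step-size conditions; Cauchy--Schwarz then upgrades this to $\frac1T\sum_{t=1}^T\bbE[|\mu_{k,t}-\mu^*_k|]\to0$, which is the claim.

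\textbf{Main obstacle.} The real difficulty — why this does not reduce verbatim to \citet{balseiro2019learning} — is the two-way coupling of the processes: $\mu_{k,t+1}$ is projected onto a box whose endpoint $\lambda_{k,t+1}$ is random, correlated with the very noise $z_{k,t}$ driving the same update, and converging only in a time-averaged sense. Handling it cleanly rests on (i) $\mu^*_k\le\lambda^*_k$ from Step 1, so the box eventually contains the target, and (ii) a round-by-round estimate showing the box mismatch costs only $O(\epsilon)$ per unit of $\|\blambda_t-\blambda^*\|_1$ rather than $O(1)$; tracking the accumulation of these error terms is what leads to the step-size requirement $T\epsilon^{3/2}\to\infty$ in the appendix (as opposed to the weaker $T\epsilon\to\infty$ that suffices for CP). The Lipschitz bound on $\EXPcostHP$ in $\blambda$ and the corner-case variational-inequality estimate are the remaining, more routine, pieces.
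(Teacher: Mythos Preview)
Your Step~1 and the Lyapunov part of Step~2 are correct, and the good/bad round split you sketch is a genuinely different (and clean) way to handle the moving box in the drift recursion: because $\mu_{k,t}\le\lambda_{k,t}$ and both processes move by at most $\epsilon\bar v_k$ per round, the positive projection error satisfies $P_{k,t}^+\le 2\epsilon\bar v_k$, and in a ``bad'' round one has $\mu^*_k-\lambda_{k,t+1}\le\lambda^*_k-\lambda_{k,t+1}$, so the extra contribution to $(\mu_{k,t+1}-\mu^*_k)^2$ is indeed $O(\epsilon|\lambda_{k,t+1}-\lambda^*_k|)$. Summing the recursion directly then gives $(1/T)\sum_t\bbE[(\mu_{k,t}-\mu^*_k)^2\bm 1\{t\le\tau\}]\to 0$ under the \emph{weaker} condition $T\epsilon\to\infty$; the paper instead uses the cruder bound $-2P_{k,t}^+(\hat\mu_{k,t+1}-\mu^*_k)\le 2\bar\mu_kP_{k,t}^+$ in the expansion and compensates with a probabilistic estimate $\bbE[P_{k,t}^+]=O(\epsilon^2)$ (and it is the subsequent unrolling with the $\|\blambda_t-\blambda^*\|_2^2$ term, not the projection error, that forces $T\epsilon^{3/2}\to\infty$ in the appendix).

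The gap is the line ``the budget-depletion rounds are absorbed as in \citet{balseiro2019learning}.'' For HP this does \emph{not} carry over verbatim. Telescoping the $\mu$-update on $[1,\tau_k]$ gives
\[
\sum_{t=1}^{\tau_k}(z_{k,t}-\rho_k)=\frac{\mu_{k,\tau_k+1}-\mu_{k,1}}{\epsilon}+\frac1\epsilon\sum_{t=1}^{\tau_k}P_{k,t},
\]
and unlike IP the upper cap at $\lambda_{k,t+1}$ can bind, so the term $\frac1\epsilon\sum_tP_{k,t}^+$ is new and must be controlled to get $\bbE[T-\tau]=o(T)$. Your deterministic bound $P_{k,t}^+\le 2\epsilon\bar v_k$ yields only $\frac1\epsilon\sum_tP_{k,t}^+\le 2\bar v_kT$, which is useless here, and your good/bad split does not help either since $P_{k,t}^+>0$ can occur in ``good'' rounds as well. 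The paper closes this by observing that $P_{k,t}^+>0$ forces $d^O_t\in(b^I_{k,t},b^O_{k,t}]$, an interval of width $O(\epsilon)$, whence $\bbP[P_{k,t}^+>0]=O(\epsilon)$ and $\bbE[P_{k,t}^+]=O(\epsilon^2)$ by the bounded-density assumption on $h$. Without this probabilistic ingredient (or an alternative argument that $\bbE[T-\tau]/T\to 0$), the proof does not conclude.
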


\Cref{lem: lambda convergence} and \Cref{lem: mu convergence} ensure the existence and uniqueness of $(\bmu^*, \blambda^*)$, so that we can bound the expected payoffs with a well-defined $\EXPutilityHP(\mu^*_k, \blambda^*_k)$. 

\begin{restatable}{lemma}{lemconvergenceofhp}
\label[lemma]{lem: convergence of hp}
\begin{align}
\label{eqn: convergence of hp}
\liminf_{\substack{T\rightarrow\infty\\B_k=\rho_kT}}\frac{1}{T}(\utilityHP - \EXPutilityHP(\mu^*_k, \blambda^*))\geq 0.
\end{align}
\end{restatable}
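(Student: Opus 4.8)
The statement to prove is Lemma~\ref{lem: convergence of hp}: that under HP, bidder $k$'s average payoff is asymptotically at least $\frac{1}{T}\EXPutilityHP(\mu^*_k,\blambda^*)$.

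**The plan** is to mimic the standard dual-pacing convergence argument of~\citet{balseiro2019learning}, but to be careful about two complications peculiar to HP: the presence of two coupled multiplier sequences $\bmu_t$ and $\blambda_t$, and the fact that whether bidder $k$ \emph{wins} in a given round depends on $b^I$-bids of the \emph{other} members (hence on $\blambda_{-k}$), not just on her own parameters. By Lemma~\ref{lem: lambda convergence} and Lemma~\ref{lem: mu convergence}, the averaged multipliers converge in $L^1$ to the deterministic equilibrium $(\bmu^*,\blambda^*)$, so the first step is to argue that the realized per-round utility, averaged over $t$, converges to the per-round expected utility evaluated at the limit point, i.e. $\frac{1}{T}\sum_t \bbE[u^{\rm HP}_{k,t}] \to \frac{1}{T}\EXPutilityHP(\mu^*_k,\blambda^*)$. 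Concretely I would write $u^{\rm HP}_{k,t}$ as the function $\psi_k(\mu_{k,t},\blambda_t; \bv_t, d^O_t)$ of the current multipliers and the i.i.d. draw, take conditional expectations over $(\bv_t,d^O_t)$, and bound $|\bbE[\psi_k(\mu_{k,t},\blambda_t)\mid \mathcal H^t] - \EXPutilityHP(\mu^*_k,\blambda^*)/T|$ by something proportional to $\|\mu_{k,t}-\mu^*_k\|+\|\blambda_t-\blambda^*\|_1$ plus a term for the probability that budgets bind. The Lipschitz-in-multipliers bound here is the analogue of the continuity argument in the CP proof (and uses the bounded densities $f_k,h$); summing over $t$, dividing by $T$, and invoking the two convergence lemmas kills the first term.

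**The remaining work** is to control the budget-clipping corrections. The posted bids in HP are $\min\{\,\cdot\,,\wt B_{k,t}\}$, so in rounds where the remaining budget is small the realized utility differs from $\psi_k$. The standard device is: by the multiplier update and the choice $\bar\mu_k\ge \bar v_k/\rho_k$, the dual variable $\mu_{k,t}$ stays bounded, so the algorithm's average expenditure tracks $\rho_k$, which means the budget $\wt B_{k,t}$ is exhausted only in a vanishing fraction of the last rounds; one bounds the number of such rounds by an $O(1/\epsilon)$ or $o(T)$ term using the same telescoping-of-the-squared-distance argument that underlies Lemma~\ref{lem: mu convergence}. I would either reuse the stopping-time bound from~\citet{balseiro2019learning} verbatim or restate it: let $\tau_k$ be the first round bidder $k$ runs out of budget; then $\bbE[T-\tau_k] = o(T)$, and since per-round utilities are bounded by $\bar v_k$, the total contribution of post-$\tau_k$ rounds to $\frac1T(\utilityHP - \EXPutilityHP)$ vanishes. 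Combining this with the Lipschitz estimate above gives $\liminf \frac1T(\utilityHP - \EXPutilityHP(\mu^*_k,\blambda^*))\ge 0$.

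**The main obstacle** I anticipate is not any single estimate but the bookkeeping around the \emph{inner} variable: $z'_{k,t}$ (used to update $\lambda_{k,t}$) is defined through $b^I_{k,t}$ and $d^I_{k,t}$, which depend on \emph{other} bidders' pseudo-multipliers, so the "decoupling" that makes $\lambda_{k,t}$ behave exactly like a solo adaptive-pacing run (established via~\eqref{eqn: lambda update equivalent} and the projection $\mu_{k,t}\le\lambda_{k,t}$) has to be invoked carefully to even make sense of applying Lemma~\ref{lem: lambda convergence} as a black box; and then one must check that conditioning on $\blambda_t$ being near $\blambda^*$ really does pin down the \emph{event} "bidder $k$ is selected and wins" up to a set of $(\bv_t,d^O_t)$ of small measure — this is where the strict monotonicity in Assumption~\ref{asm: hp}(2) and the bounded densities are needed, exactly as Lemma~\ref{lmm: strictly decreasing value} was needed in the CP argument. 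Once that continuity is in hand, the rest is the routine dual-pacing machinery, which is why the excerpt says the proof "is analogous to that in~\citet{balseiro2019learning}" and can largely be deferred.
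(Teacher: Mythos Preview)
Your proposal is correct and matches the paper's approach: the paper also passes to a budget-free alternate framework, bounds the post-$\tau$ contribution by $\bar v_k\,\bbE[T-\tau]$ (using the bound~\eqref{eqn: bound expected ending time} derived inside the proof of Lemma~\ref{lem: mu convergence}), and then controls the pre-$\tau$ rounds by the Lipschitz continuity of $\EXPutilityHP$ (Corollary~\ref{cor: lipschitz}) together with the $L^1$ convergence of $\blambda_t$ and $\mu_{k,t}$ from Lemmas~\ref{lem: lambda convergence} and~\ref{lem: mu convergence}. Two minor corrections: the Lipschitz estimate you need follows from bounded densities alone (not from Assumption~\ref{asm: hp}(2), which is used only for the convergence of $\mu_{k,t}$), and the stopping-time bound is not quite verbatim~\citet{balseiro2019learning} because the projection interval $[0,\lambda_{k,t+1}]$ is time-varying---the paper handles the resulting positive projection errors $P_{k,t}^+$ explicitly in the proof of Lemma~\ref{lem: mu convergence}.
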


\Cref{lem: convergence of hp} establishes that the expected performance of Algorithm \ref{alg: hybrid coordinated pacing} can be asymptotically lower bounded by $\EXPutilityHP(\mu^*_k, \blambda^*_k)$. The proof of \Cref{lem: convergence of hp} consists of two parts. We first use the fact that budgets are not depleted too early as shown in the proof of \Cref{lem: mu convergence}. Then we use that the expected utility functions are Lipschitz continuous as argued in \Cref{cor: lipschitz}.

With \Cref{lem: convergence of hp} in hand, it only remains to show that $\EXPutilityHP(\mu^*_k, \blambda^*) \geq \EXPutility(\blambda^*)$. Since $\lambda^*_k \geq \mu^*_k \geq 0$ by \Cref{lem: mu convergence}, we first have
\begin{align*}
    \EXPutilityHP(\mu^*_k, \blambda^*) &\coloneqq  T \cdot \bbE_{\bv, d^O}\left[\bm{1}\left\{\frac{v_k}{1 + \lambda_k^*} \geq d^I_k \land \frac{v_k}{1 + \mu_k^*} \geq d^O \right\}(v_k - d^O)\right] \\
    &\geq T \cdot \bbE_{\bv, d^O}\left[\bm{1}\left\{\frac{v_k}{1 + \lambda_k^*} \geq d^I_k \land \frac{v_k}{1 + \lambda_k^*} \geq d^O \right\}(v_k - d^O)\right] \\
    &= \EXPutilityHP(\lambda^*_k, \blambda^*).
\end{align*}
Then we decompose $\EXPutilityHP(\lambda^*_k, \blambda^*)$ into two parts, $\EXPvalueHP(\lambda^*_k, \blambda^*)$ and $\EXPcostHP(\lambda^*_k, \blambda^*)$, where $\EXPvalueHP(\mu_k, \blambda)$ is defined by
\begin{align}
\label{eqn: hp exp value definition}
    \EXPvalueHP(\mu_k, \blambda) = \bbE_{\bv, d^O}\left[\bm{1}\left\{\frac{v_k}{1 + \lambda_k} \geq d^I_k \land \frac{v_k}{1 + \mu_k} \geq d^O \right\}v_k\right],
\end{align}
and $\EXPcostHP(\mu_k, \blambda)$ defined by \eqref{eqn: hp exp cost definition}. We also decompose the benchmark $\EXPutility(\blambda^*)$ into $\EXPvalue(\blambda^*)$ and $\EXPcost(\blambda^*)$, where $\EXPvalue(\blambda)$ is defined by
\begin{align}
\label{eqn: ip exp value definition}
    \EXPvalue(\blambda) = \bbE_{\bv, d^O}\left[\bm{1}\left\{\frac{v_k}{1 + \lambda_k} \geq d_k \right\}v_k\right],
\end{align}
and $\EXPcost(\blambda)$ is defined by \eqref{eqn: ip exp cost definition}. Observe that $\EXPvalueHP(\lambda^*_k, \blambda^*) = \EXPvalue(\blambda^*)$.

Therefore, we have
\begin{align*}
    \EXPutilityHP(\mu^*_k, \blambda^*) & \geq \EXPutilityHP(\lambda^*_k, \blambda^*) = T\cdot \EXPvalueHP(\lambda^*_k, \blambda^*) - T \cdot \EXPcostHP(\lambda^*_k, \blambda^*) \\
    & = T\cdot \EXPvalue(\blambda^*) - T \cdot \EXPcostHP(\lambda^*_k, \blambda^*) \\
    & \overset{(\text a)}{\geq} T\cdot \EXPvalue(\blambda^*) - T\cdot \EXPcost(\blambda^*) = \EXPutility(\blambda^*),
\end{align*}
where (a) follows from  \Cref{lem: strictly pareto}, which also indicates that the equality strictly holds with at most one bidder.

We defer the proofs of key lemmas to the next subsection. It is worth mentioning that the proof techniques differ from those of the similar results in previous work in the following ways: (1) First, the expected utility and expenditure functions are more complicated as they involve two multiplier vectors. (2) Second, Algorithm \ref{alg: hybrid coordinated pacing} imposes Euclidean projection with different intervals on $\mu_{k}$ in different rounds, i.e., $[0, \lambda^{k, t+1}]$ on $\mu_{k, t+1}$, so it is subtle to deal with the positive projection errors.

\subsection{Proof of Lemmas in \Cref{thm: dominance of hp}}

\lemmuconvergence*
\begin{proof}
% [Proof of \Cref{lem: mu convergence}]

\textbf{Step 1: Existence and uniqueness of $\bmu^*$.} By \Cref{lem: lambda convergence}, $\blambda^*$ defined by \eqref{eqn: lambda star definition} exists and is unique. By \Cref{lem: strictly pareto}, we have 
\begin{align}
    \label{eqn: G hp smaller than G}
    \EXPcostHP(\lambda^*_k, \blambda^*) \leq \EXPcost(\blambda^*).
\end{align} 
As $\EXPcostHP(\mu_k, \blambda)$ is continuous and strictly decreases in $\mu_k$ by \Cref{asm: hp} and $\EXPcost(\blambda^*) \leq \rho_k$ by the definition of $\blambda^*$, we can find a unique $\mu^*_k \in [0, \lambda^*_k]$ that satisfies $\mu^*_k \geq 0 \perp \EXPcostHP(\mu^*_k, \blambda^*) \leq \rho_k$.

\textbf{Step 2: bounding the ending time.} Let $\tau_k = \sup\{t \leq T: B_{k, t} \geq \bar{v}_k\}$ be the last period in which bidder $k$'s remaining budget is larger than her maximum value, and let $\tau = \min_k \tau_k$. 
We denote by $P_{k,t}$ the projection error introduced by \Cref{line: mu smaller than lambda} of \Cref{alg: hybrid coordinated pacing}:
$$P_{k, t} \coloneqq \mu_{k,t} - \epsilon(\rho_k-z_{k,t}) - P_{[0, \lambda_{k,t+1}]} \left(\mu_{k,t} - \epsilon(\rho_k-z_{k,t})\right),$$ and then $\mu_{k,t+1} = \mu_{k,t} - \epsilon(\rho_k-z_{k,t}) - P_{k, t}$. Summing up from $t = 1$ to $\tau_k$, we have
$$\sum_{t = 1}^{\tau_k} \left(z_{k,t} - \rho_k\right) = \frac{\mu_{k, \tau_k+1} - \mu_{k, 1}}{\epsilon} + \frac{1}{\epsilon}\sum_{t = 1}^{\tau_k} P_{k, t} \leq \frac{\mu_{k, \tau_k+1}}{\epsilon} + \frac{1}{\epsilon}\sum_{t = 1}^{\tau_k} P_{k, t}^+.$$

When $\tau_k < T$, bidder $k$ hits her budget constraint in round $\tau_k + 1$, so we should have
$\sum_{t = 1}^{\tau_k} z_{k,t} \geq \rho_k T - \bar{v}_k$. Together with $\mu_{k, t} \leq \lambda_{k, t} \leq \bar{\mu}_k$, we obtain
\begin{equation}
\label{eqn: bound ending time}
    T-\tau_k \leq \frac{\bar{v}_k}{\rho_k} + \frac{\bar{\mu}_k}{\epsilon\rho_k} + \frac{1}{\epsilon\rho_k}\sum_{t = 1}^{T} P_{k, t}^+.
\end{equation}
Note that inequality \eqref{eqn: bound ending time} also trivially holds when $\tau_k = T$ since all the terms on the right-hand side are positive.

Now we bound $P_{k, t}^+$ on the right-hand side of \eqref{eqn: bound ending time}. We first show that there is no positive projection error introduced by \Cref{line: lambda update} of \Cref{alg: hybrid coordinated pacing} whenever $\epsilon\bar{v}_k \leq 1$.
\begin{align*}
    \lambda_{k, t} - \epsilon\left(\rho - z_{k, t}'\right) \overset{(\text a)}{\leq} & \lambda_{k, t} + \epsilon\left(b^I_{k, t} - \rho_k\right) \overset{(\text b)}{\leq} \lambda_{k, t} + \frac{\epsilon \bar{v}_k}{1+\lambda_{k, t}} - \epsilon\rho_k \\
    \overset{(\text c)}{\leq} & \bar{\mu}_k + \frac{\epsilon \bar{v}_k}{1+\bar{\mu}_k} - \epsilon\rho_k \overset{(\text d)}{\leq} \bar{\mu}_k,
\end{align*}
where (a) follows from $z_{k, t}' = \bm{1}\left\{b^{I}_{k, t} \geq d_{k, t} \right\}d_{k, t}$; (b) holds by $b^I_{k, t} \leq v_k/(1+\lambda_{k, t}) \leq \bar{v}_k/(1+\lambda_{k, t})$; (c) holds since $\lambda_{k, t} + \epsilon \bar{v}_k/
(1+\lambda_{k, t})$ increases in $\lambda_{k, t}$ whenever $\epsilon\bar{v}_k \leq 1$; and (d) holds since $\bar{\mu}_k \geq \bar{v}_k / \rho_k$. 
As a result,  $\lambda_{k, t+1} \geq \lambda_{k, t} - \epsilon(\rho_k-z_{k,t}')$, and we have
\begin{align*}
    P_{k, t}^+ & =\left(\mu_{k,t} - \epsilon(\rho_k-z_{k,t}) - \lambda_{k,t+1}\right)^+ \\
    & \leq \left(\mu_{k,t} - \lambda_{k, t} + \epsilon\left(z_{k,t} - z_{k,t}'\right)\right)^+.
\end{align*}
Since $\mu_{k,t} \leq \lambda_{k, t}$, $P_{k, t}^+ > 0$ is only possible when $d^O_t = z_{k, t} > z_{k,t}' = 0$. In this case, bidder $k$ is the final winner with $b^O_{k, t} \geq d^O_{t}$ but she could not have won if using $b^I_{k, t} < d^O_{ t}$ as her final bid. We have
\begin{align}
    P_{k, t}^+ & \leq \left(\mu_{k,t} - \lambda_{k, t} + \epsilon d^O_{t}\right)^+ \leq \epsilon d^O_{t} \leq \epsilon b^O_{k,t} = \frac{\epsilon v_{k,t}}{1+\mu_{k,t}} \leq \epsilon \bar{v}_k, \label{eqn: bound proj error value}
\end{align}
which also implies that a positive error requires $\lambda_{k, t} - \mu_{k, t} < \epsilon d^O_{t}$. Then we can bound the probability as follows, 
\begin{align}
    \bbP \left[P_{k, t} > 0\right] & \leq \bbP \left[d^O_{t} \in \left(b^I_{k, t}, b^O_{k, t}\right]\right] \leq \bbP \left[d^O_{t} \in \left(\frac{v_{k, t}}{1 + \mu_{k, t} + \epsilon d^O_{t}}, \frac{v_{k, t}}{1 + \mu_{k, t}}\right]\right], \notag \\
    & \leq \frac{\epsilon d^O_{t}v_{k, t}}{(1 + \mu_{k, t})(1 + \mu_{k, t} + \epsilon d^O_{t})}\overline{h} \leq \epsilon \bar{v}^2_k \overline{h}, \label{eqn: bound proj error probability}
\end{align}
where we denote by $\overline{h}$ by the upper bound on the density funciton of $d^O_t$.

Combining \eqref{eqn: bound proj error value} and \eqref{eqn: bound proj error probability}, we have
\begin{equation}
\label{eqn: bound proj error exp}
    \bbE_{d^O_{t}} \left[P_{k, t}^+ \mid \calH_{k, t} \right] \leq \epsilon^2 \bar{v}^3_k \overline{h}.
\end{equation}
Taking expectations over \eqref{eqn: bound ending time} we have
\[
    \bbE \left[T-\tau_k\right] \leq \frac{\bar{v}_k}{\rho_k} + \frac{\bar{\mu}_k}{\epsilon\rho_k} + \frac{ T \epsilon \bar{v}^3_k \overline{h}}{\rho_k},
\]
and thus
\begin{equation}
\label{eqn: bound expected ending time}
   \bbE \left[T-\tau\right] \leq \frac{\bar{v}}{\underline{\rho}} + \frac{\bar{\mu}}{\epsilon\underline{\rho}} + \frac{ T \epsilon \bar{v}^3 \overline{h}}{\underline{\rho}}, 
\end{equation}
where $\bar{v} = \max_k \bar{v}_k$, 
$\underline{\rho} = \min_k \rho_k$ and $\bar{\mu} = \max_k \bar{\mu}_k$.

\textbf{Step 3: bounding the mean absolute errors.} We consider the update formula of $\mu_{k, t}$ to prove that the sequence converges to $\mu^*_k$.
\begin{align}
    \left(\mu_{k, t+1} - \mu^*_k\right)^2 & \overset{(\text a)}{\leq} \left(\mu_{k, t} - \epsilon \left(\rho_k - z_{k, t}\right) - P_{k, t}^+ - \mu^*_k\right)^2 \notag \\
    & =\left(\mu_{k, t} - \mu^*_k\right)^2 - 2 \epsilon \left(\rho_k - z_{k, t}\right)\left(\mu_{k, t} - \mu^*_k\right) + \epsilon^2 \left(\rho_k - z_{k, t}\right)^2 \notag \\
    & + (P_{k, t}^+)^2 - 2 P_{k, t}^+\left(\mu_{k, t} - \epsilon \left(\rho_k - z_{k, t}\right) - \mu^*_k\right) \notag \\
    & \overset{(\text b)}{\leq} (\mu_{k, t} - \mu^*_k)^2 - 2 \epsilon (\rho_k - z_{k, t})(\mu_{k, t} - \mu^*_k) + \epsilon^2\bar{v}^2_k + (P_{k, t}^+)^2 + 2P_{k, t}^+\bar{\mu}_k, \label{eqn: mu convergence step 1}
\end{align}
where (a) follows from a standard contraction property of the Euclidean projection operator and (b) is due to $(\rho_k - z_{k, t})^2 \leq \bar{v}^2_k$ and $\mu_{k, t} - \epsilon (\rho_k - z_{k, t}) > \lambda_{k, t+1} \geq 0$ when $P_{k, t}^+ > 0$. 

Define $\delta_{k,t} = \bbE \left[(\mu_{k, t} - \mu^*_k)^2\bm{1}\left\{t \leq \tau \right\}\right]$. Taking expectations on \eqref{eqn: mu convergence step 1} we obtain
\begin{align}
    \delta_{k,t+1} & \overset{(\text a)}{\leq} \delta_{k,t} - 2 \epsilon \bbE\left[(\rho_k - \EXPcostHP(\mu_{k, t}, \blambda_t))(\mu_{k, t} - \mu^*_k)\right] + \epsilon^2\bar{v}^2_k + \bbE \left[(P_{k, t}^+)^2\right] + 2 \bar{\mu}_k\bbE \left[P_{k, t}^+\right] \notag \\
    & \overset{(\text b)}{\leq} \delta_{k,t} + 2 \epsilon \bbE \left[(\EXPcostHP(\mu_{k, t}, \blambda_t) - \rho_k)(\mu_{k, t} - \mu^*_k)\right] + \epsilon^2\underbrace{\left(\bar{v}^2_k + \epsilon\bar{v}^4_k \overline{h} + 2 \bar{\mu}_k\bar{v}^3_k \overline{h}\right)}_{R} \label{eqn: mu convergence step 2}
\end{align}
where (a) holds since $\bm{1}\left\{t \leq \tau \right\}$ monotonically decreases with $t$ and the expectation of $z_{k, t}$ is $G_{k}(\mu_{k, t}, \blambda_t)$ conditioned on $\mu_{k, t}$ and $\blambda_t$; (b) holds by inequalities \eqref{eqn: bound proj error value} and \eqref{eqn: bound proj error probability}. For the second term in \eqref{eqn: mu convergence step 2}, we have
\begin{align}
\label{eqn: mu convergence step 3}
    \left(\EXPcostHP(\mu_{k, t}, \blambda_t) - \rho_k\right)\left(\mu_{k, t} - \mu^*_k\right) & =\left(\EXPcostHP(\mu_{k, t}, \blambda_t) - \EXPcostHP(\mu^*_{k}, \blambda^*) + \EXPcostHP(\mu^*_{k}, \blambda^*) - \rho_k\right)\left(\mu_{k, t} - \mu^*_k\right) \notag \\
    & \overset{(\text a)}{\leq} (\EXPcostHP(\mu_{k, t}, \blambda_t) - \EXPcostHP(\mu^*_{k}, \blambda^*))(\mu_{k, t} - \mu^*_k) \notag \\
    & \overset{(\text b)}{\leq} - \underline{G}' \left(\mu_{k, t} - \mu^*_k\right)^2 + \bar{v}_k^2\bar{f} |\mu_{k, t} - \mu^*_k| \|\blambda_{t} - \blambda^*\|_1 \notag \\
    & \overset{(\text c)}{\leq} - \underline{G}' \left(\mu_{k, t} - \mu^*_k\right)^2 + \frac{\underline{G}'}{2}\left(\mu_{k, t} - \mu^*_k\right)^2 + \frac{\bar{v}_k^4\bar{f}^2}{2\underline{G}'}\|\blambda_{t} - \blambda^*\|^2_2 \notag \\
    & =- \frac{\underline{G}'}{2}(\mu_{k, t} - \mu^*_k)^2 + \frac{\bar{v}_k^4\bar{f}^2}{2\underline{G}'}\|\blambda_{t} - \blambda^*\|^2_2, 
\end{align}
where (a) holds by $\rho_k - \EXPcostHP(\mu^*_{k}, \blambda^*) \geq 0$, $\mu_{k, t} \geq 0$ and $\mu^*_k(\rho_k - \EXPcostHP(\mu^*_{k}, \blambda^*)) = 0$; (b) follows from \Cref{lem: lipschitz hp} and $\bar{f}$ denotes the upper bound on $f_k$; (c) holds by AM-GM inequality. Together with \eqref{eqn: mu convergence step 2} one has $$\delta_{k, t+1} \leq \left(1 - \epsilon \underline{G}'\right)\delta_{k,t} + \frac{\epsilon \bar{v}_k^4\bar{f}^2}{\underline{G}'}\eta_t + \epsilon^2 R,$$ where $\eta_t = \bbE \left[\|\blambda_{t} - \blambda^*\|_2^2\bm{1}\left\{t \leq \tau \right\}\right]$. The recursion gives
\begin{align*}
    \delta_{k, t} \leq \left(1 - \epsilon \underline{G}'\right)^{t-1}\bar{\mu}^2_k + \sum_{s=1}^{t-1}\left(1 - \epsilon \underline{G}'\right)^{t-1-s}\left(\frac{\epsilon \bar{v}_k^4\bar{f}^2}{\underline{G}'}\eta_s + \epsilon^2 R\right).
\end{align*}
By the result of \citet{balseiro2019learning}, we have
\begin{align*}
    \eta_t \leq K\bar{\mu}^2_k\left(1 - 2\gamma \epsilon\right)^{t-1} + \frac{K\bar{v}^2_k}{2\gamma}\epsilon.
\end{align*}

Thus we have
\begin{align*}
    \delta_{k, t} & \leq \left(1 - \epsilon \underline{G}'\right)^{t-1}\bar{\mu}^2_k + \frac{\epsilon K\bar{\mu}^2_k\bar{v}_k^4\bar{f}^2}{\underline{G}'}\sum_{s=1}^{t-1}\left(1 - \epsilon \underline{G}'\right)^{t-1-s}\left(1 - 2\gamma \epsilon\right)^{s-1}\\
    & + \left(\frac{K\bar{v}^6_k\bar{f}^2}{2\gamma \underline{G}'} + R\right)\epsilon^2\sum_{s=1}^{t-1} \left(1 - \epsilon \underline{G}'\right)^{t-1-s} \\
    & \leq \left(1 - \epsilon R_1\right)^{t-1}\bar{\mu}^2_k + \epsilon R_2\sum_{s=1}^{t-1}\left(1 - \epsilon R_1\right)^{t-2} + R_3\epsilon^2\sum_{s=0}^{t-2} \left(1 - \epsilon R_1\right)^{s} \\
    & \leq \left(1 - \epsilon R_1\right)^{t-1}\bar{\mu}^2_k + \epsilon R_2(t-1)\left(1 - \epsilon R_1\right)^{t-2} + \epsilon\frac{R_3}{R_1}, \\
\end{align*}
where we denote $\min\{\underline{G}', 2\gamma\}$ by $R_1$, $\frac{K\bar{\mu}^2_k\bar{v}_k^4\bar{f}^2}{\underline{G}'}$ by $R_2$ and $\left(\frac{K\bar{v}^6_k\bar{f}^2}{2\gamma \underline{G}'} + R\right)$ by $R_3$.

Consider the sum of square roots of $\delta_{k, t}$,
\begin{align}
    \sum_{t=1}^T \delta_{k, t}^{1/2} & \overset{(\text a)}{\leq} \sum_{t=1}^T \left(1 - \epsilon R_1\right)^{(t-1)/2}\bar{\mu}_k + \sum_{t=1}^T \epsilon^{1/2}\sqrt{\frac{R_3}{R_1}} + \sum_{t=1}^T \epsilon^{1/2} \sqrt{R_2(t-1)}  \left(1 - \epsilon R_1\right)^{(t-2)/2} \notag \\
    & \overset{(\text b)}{\leq} \frac{2\bar{\mu}_k}{\epsilon R_1} + T\epsilon^{1/2}\sqrt{\frac{R_3}{R_1}} + \epsilon^{1/2} \sqrt{R_2} \sum_{t=1}^{T-1} t \left(1 - \epsilon R_1\right)^{(t-1)/2} \notag \\
    & \overset{(\text c)}{\leq} \frac{2\bar{\mu}_k}{\epsilon R_1} + T\epsilon^{1/2}\sqrt{\frac{R_3}{R_1}} + \frac{4\sqrt{R_2}}{\epsilon^{3/2}R_1^2},\label{eqn: bound mean square roots}
\end{align}
where (a) holds since $\sqrt{x + y} \leq \sqrt{x} + \sqrt{y}$ for $x, y \geq 0$, (b) follows from $\sum_{t=1}^T \left(1 - \epsilon R_1\right)^{(t-1)/2} \leq \sum_{t=0}^{\infty} \left(1 - \epsilon R_1\right)^{t/2} = \frac{1}{1 - \sqrt{1 - \epsilon R_1}} \leq \frac{2}{\epsilon R_1}$, and (c) follows from $\sum_{t=1}^{T-1} t \left(1 - \epsilon R_1\right)^{(t-1)/2} \leq \frac{1}{\left(1 - \sqrt{1 - \epsilon R_1}\right)^2} \leq \frac{4}{\epsilon^2 R_1^2}$.

Combining \eqref{eqn: bound expected ending time} and \eqref{eqn: bound mean square roots}, we obtain
\begin{align*}
    \sum_{t = 1}^T \bbE \left[|\mu_{k, t} - \mu^*_k|\right] & = \sum_{t=1}^T \bbE \left[|\mu_{k, t} - \mu^*_k|\bm{1}\left\{t \leq \tau\right\} + |\mu_{k, t} - \mu^*_k|\bm{1}\left\{t > \tau\right\}\right] \\
     & \overset{(\text a)}{\leq} \sum_{t=1}^T\delta_{k, t}^{1/2} + \bar{\mu}_k\bbE \left[T-\tau \right] \\
    & \leq \frac{2\bar{\mu}}{\epsilon R_1} + T\epsilon^{1/2}\sqrt{\frac{R_3}{R_1}} + \frac{4\sqrt{R_2}}{\epsilon^{3/2}R_1^2} + \frac{\bar{v}\bar{\mu}}{\underline{\rho}} + \frac{\bar{\mu}^2}{\epsilon\underline{\rho}} + \frac{ T \epsilon \bar{v}^3 \bar{\mu} \overline{h}}{\underline{\rho}}, 
\end{align*}
where (a) holds by $\bbE \left[|y|\right] \leq \sqrt{\bbE \left[y^2\right]}$. Dividing by $T$, the result then follows because each term goes to zero when $\epsilon\sim T^{-1/2}$.
\end{proof}

\lemconvergenceofhp*
\begin{proof}
In the proof we consider an alternate framework in which bidders ignore their budgets, i.e., each bidder $k$ can post $b^I_{k, t} = \frac{v_{k, t}}{1+\lambda_{k, t}}, b^O_{k, t} = \frac{v_{k, t}}{1+\mu_{k, t}}$ and make payments even after depleting her budget. Let $\tau = \sup\{t \leq T: \forall k, B_{k, t} \geq \bar{v}_k\}$ be the last period in which the remaining budget of every bidder is larger than her maximum value. The performance of both the original and the alternate frameworks coincide until time $\tau_k$. Let $u_{k, t}$ be bidder $k$'s payoff at round $t$ in the alternate framework. We have
\begin{equation}
\label{eqn: convergence step 0}
    \utilityHP \overset{(\text a)}{\geq} \sum_{t=1}^{\tau} \bbE\left[u_{k, t}\right] \overset{(\text b)}{\geq} \sum_{t=1}^{T} \bbE\left[u_{k, t}\right] - \bar{v}_k \bbE \left[T-\tau\right], 
\end{equation}
where (a) holds by discarding all auctions after $\tau$ and (b) holds since $u_{k, t} \leq \bar{v}_k$.  

We next give a lower bound on the expected utility per auction.
\begin{align*}
    \bbE \left[u_{k, t}\right]  &\overset{(\text a)}{=} \bbE \left[\frac{1}{T}\EXPutilityHP(\mu_{k, t}, \blambda_{t})\right] \\
    &\overset{(\text b)}{\geq}  \frac{1}{T}\EXPutilityHP(\mu^*_{k}, \blambda^*) - K\bar{v}_k^2 \bar{f} \bbE \left[\|\blambda_{t} - \blambda^*\|_1\right] - \bar{v}_k^2\left(\bar{h} + \bar{f}\right) \bbE \left[\left|\mu_{k, t} - \mu^*_k\right|\right],
\end{align*}
where (a) holds by using the independence of $\bv_{t}$ and $d^O_t$ from the multipliers and taking conditional expectations; and (b) holds by the Lipschitz continuity of $\EXPutilityHP/T$ from \Cref{cor: lipschitz}. Summing over $t=1, \ldots, T$, we obtain
\begin{align}
    \sum_{t=1}^{T}\bbE \left[u_{k, t}\right] \geq \EXPutilityHP(\mu^*_{k}, \blambda^*) - \underbrace{K\bar{v}_k^2 \bar{f}}_{C_1} \sum_{t=1}^{T}\bbE \left[\left\|\blambda_{t} - \blambda^*\right\|_1\right] -\underbrace{\bar{v}_k^2\left(\bar{h} + \bar{f}\right)}_{C_2}\sum_{t=1}^{T} \bbE \left[\left|\mu_{k, t} - \mu^*_k\right|\right] , \label{eqn: convergence step 1}
\end{align}

By the result from \citet{balseiro2019learning} and \Cref{eqn: bound mean square roots} in \Cref{lem: mu convergence}, there exist positive constants $C_3$ through $C_7$ such that
\begin{align}
    \sum_{t=1}^{T}\bbE \left[\|\blambda_{t} - \blambda^*\|_1\right] & \leq \frac{C_3}{\epsilon} + C_{4}\epsilon^{1/2}T. \label{eqn: convergence step 2} \\
    \sum_{t=1}^{T} \bbE \left[\left|\mu_{k, t} - \mu^*_k\right|\right] & \leq \frac{C_5}{\epsilon} + C_6\epsilon^{1/2}T + \frac{C_7}{\epsilon^{3/2}}. \label{eqn: convergence step 3}
\end{align}
Note that the sequences of multipliers in \eqref{eqn: convergence step 2} and \eqref{eqn: convergence step 3} are those in the alternate framework, so we do not involve $\bm{1}\left\{t < \tau\right\}$ as we did in \Cref{lem: mu convergence}.

Putting \eqref{eqn: convergence step 0}, \eqref{eqn: convergence step 1}, \eqref{eqn: convergence step 2}, \eqref{eqn: convergence step 3} and \eqref{eqn: bound expected ending time} together, we obtain
\begin{align*}
    \utilityHP \geq \EXPutilityHP(\mu^*_{k}, \blambda^*) - \frac{C_1C_3 + C_2C_5}{\epsilon} - (C_1C_4 + C_2C_6)\epsilon^{1/2}T - \frac{C_2C_7}{\epsilon^{3/2}} - \frac{\bar{v}^2}{\underline{\rho}} - \frac{\bar{v}\bar{\mu}}{\epsilon\underline{\rho}} - \frac{ T \epsilon \bar{v}^4 \overline{h}}{\underline{\rho}},
\end{align*}
which concludes the proof. In particular, a selection of step size of order $\epsilon\sim T^{-1/2}$ guarantees a long-run average convergence rate of order $T^{-1/4}$.
\end{proof}

\section{Omitted Proofs in \Cref{sec: discussion}} 

\subsection{Proof of \Cref{lmm: pareto optimality of hp}}

\lmmparetooptimalityofhp*

\begin{proof}
As it has been shown in \Cref{lmm: convergence of CP} that the performance of CP with respect to each bidder $k$, $\Pi^{\rm{CP}}_k$, converges to $\EXPutilityCP(\bxi^*)$. It suffices for us to show $\sum_{k\in\calK}\EXPutilityCP(\bxi^*)$ is an upper bound of $\bbE[\pi^{\rm{H}}]$. 

Rewrite the form of $\pi^{\rm{H}}$ for a realization of values and the highest bids outside the coalition $(\bv,\dd^O)$: 

 \begin{equation}
 \label{eqn: hindsight optimal welfare}
 \begin{aligned}
 \pi^{\mathrm{H}}\left(\bv, \dd^O\right) \coloneqq \max_{\bx \in\{0,1\}^{K\times T}} & \sum_{t=1}^{T} \sum_{k\in \calK} x_{k, t}(v_{k, t} - d^O_{t}), \\
 \text { s.t. } & \sum_{t=1}^{T} x_{k, t} d^O_{ t} \leq T \rho_{k}, \forall k \in \calK.
 \end{aligned}
 \end{equation}

 Introduce Lagrangian multipliers for each budget constraint. The Lagrangian of optimization problem \eqref{eqn: hindsight optimal welfare} is
\begin{align*}
L(\bv,\dd^O;\bmu)&=\sum^T_{t=1}\xx^\top_t(\bv_t-d^O_{t}\bm{e})+	\bmu^\top\sum^T_{t=1}\left(-d^O_{t}\xx_t+ T \boldsymbol{\rho}\right)\\
&=\sum^T_{t=1}\xx^\top_t(\bv_t-d^O_{t}\left(\bm{e}+\bmu\right))+\brho^\top\bmu. 
\end{align*}
where $\bm{e}$ is all-one vector. 

The dual problem is therefore given by
\begin{align*}
\inf_{\bmu\geq 0}\sup_{\xx} L(\bmu,\xx)=\inf_{\bmu \geq 0}\left\{\max_{k\in\calK}\left(v_{k,t}-(1+\mu_k)d^O_t\right)^{+}+\brho^\top\bmu\right\} \eqqcolon \inf_{\bmu\geq 0}\phi(\bmu),
\end{align*}
which serves an upper bound of $\pi^{\rm{H}}$. Take expectations on both sides,
\begin{align*}
\bbE\left[\pi^{\rm{H}}\right]\leq \bbE\left[\inf_{\bmu\geq0}\phi(\bmu)\right]\leq \inf_{\bmu\geq 0}\bbE\left[\phi(\bmu)\right]=\inf_{\bmu \geq 0}\left\{\bbE\left[\max_{k\in\calK}\left(v_{k,t}-(1+\mu_k)d^O_t\right)^+\right]+\brho^\top\bmu\right\} \eqqcolon \inf_{\bmu \geq 0}\Phi(\bmu). 
\end{align*}

Denote the expected expenditure per auction of each bidder $k$ to be 
\[
\EXPSumCost(\bmu) = \bbE\left[\bm{1}\left\{k\in \argmax_{k'\in\calK}\left(v_{k',t}-(1+\mu_{k'})d^O_t\right)\right\}\bm{1}\left\{v_{k,t}\geq(1+\mu_{k})d^O_t\right\} d^O_t\right]. 
\]
Then by the Karush-Kuhn-Tucker condition, the optimal dual solution $\bmu^{**}$satisfies the complementarity condition, which is 
\[
\EXPSumCost(\bmu^{**})\leq \rho_k \perp \mu^{**}_k\geq 0, \forall{k \in \calK}
\]
and the corresponding expected utility of bidder $k$ is 
\[
\EXPSumUtility(\bmu^{**}) = \bbE\left[\bm{1}\left\{k\in \argmax_{k'\in\calK}\left(v_{k',t}-(1+\mu^{**}_{k'})d^O_t\right)\right\}\bm{1}\left\{v_{k',t}\geq(1+\mu^{**}_{k'})d^O_t\right\}\left(v_{k',t}- d^O_t\right)\right]. 
\]
When every bidder has the same distribution and budget, we have $\mu^{**}_k = \mu^{**}_1$ for any $k \in \calK$, and $\xi^*_k=\xi^*_1$, where $\bxi^*$ are the equilibria parameters defined by \eqref{eqn: xi definition}. Recall the forms of $\EXPcostCP$ and $\EXPutilityCP$ in this case,
\begin{equation}
\label{eqn: symmetric CP}
\begin{aligned}
    \EXPcostCP(\bxi^*) &= \bbE\left[\prod_{k' \in \calK}\bm{1}\left\{v_{k,t}\geq v_{k',t}\right\}\bm{1}\left\{v_{k,t}\geq (1+\xi^*_k)d^O_t\right\} d^O_t\right],\\
    \EXPutilityCP(\bxi^*) &=\bbE\left[\prod_{k' \in \calK}\bm{1}\left\{v_{k,t}\geq v_{k',t}\right\}\bm{1}\left\{v_{k,t}\geq (1+\xi^*_k)d^O_t\right\} \left(v_{k,t}-d^O_t\right)\right]. 
\end{aligned}
\end{equation}
We show $\EXPSumCost(\bmu)$ and $\EXPSumUtility(\bmu)$ have the same form as \eqref{eqn: symmetric CP}, when bidders are homogeneous. Then by the uniqueness of $\bxi^*$, we have $\bxi^*=\bmu^{**}$. Moreover, we can conclude that CP maximizes the sum of bidders' utilities in expectation. 

To show the two expected expenditures per auction and the total expected utilities  have the same form, it suffices to show that the sets where the statements of the indicator functions hold are equal. Specifically, notice that the set $\left\{k \in \argmax_{k'\in \calK}\left(v_{k',t}-\left(1+\mu_{k'}\right)\right)\right\}$ can be rewritten as $\cap_{k'\in\calK}\left\{v_{k,t}-(1+\mu_k)d^O_t\geq v_{k',t}-(1+\mu_{k'})d^O_t\right\}$. Moreover, when $\mu_k=\mu_1$,

\[v_{k,t}-(1+\mu_k)d^O_t\geq v_{k',t}-(1+\mu_{k'})d^O_t \Leftrightarrow v_{k,t}\geq v_{k',t}.\]
This completes the proof for CP. To show the results of HP, just notice that at equilibrium, the expected expenditures per auction and the total expected utilities of HP has the same form as those of CP. 
\end{proof}

\subsection{Proof of \Cref{thm:misreport-budget-value-cp}}

\thmmisreportbudgetvaluecp*

\begin{proof}
Let the expenditure rate of a member be $\rho$. By symmetry, all members' equilibrium parameters are the same: $\bxi^*=\xi^*\bm{e}$ for some $\xi^*\in[0,\bar{\xi}]$. When bidder $k$ misreports a rate $\rho'<\rho$, denote her corresponding equilibrium parameter to be $\xi^*_k$ and that of other member $i$ to be $\xi^*_1$. By a slight abuse of notation, we let $G_i^\rmCP$ be the function of $(\xi_1,\xi_k)$, the first parameter for other bidders, and the second for bidder $k$. Without loss of generality, let $k\ne 1$, and we only consider the truthful bidder $1$ as other truthful bidders behave the same.  We  write down the form of $G^{\rmCP}_i$ and $U^{\rmCP}_i$: 
\begin{align}
G^{\rmCP}_k(\xi^*,\xi^*) &= \bbE\left[\bm{1}\left\{v_k\geq \max_{i\ne k}v_i\right\}\bm{1}\left\{v_k\geq (1+\xi^*)d^O\right\}d^O\right] = G^{\rmCP}_1(\xi^*,\xi^*) \leq \rho  , \\
G^{\rmCP}_1(\xi^*_1,\xi^*_k) &=\bbE\left[\Pi_{\substack{i\ne 1,\\i\ne k}}\bm{1}\left\{v_1\geq v_i\right\}\bm{1}\left\{v_1\geq \frac{1+\xi^*_1}{1+\xi^*_k}v_k\right\}\bm{1}\left\{v_1 \geq (1+\xi^*_1)d^O\right\}d^O\right] \leq \rho,\\
G^{\rmCP}_k(\xi^*_1,\xi^*_k) &= \bbE\left[\bm{1}\left\{v_k\geq \frac{1+\xi^*_k}{1+\xi^*_1}\max_{i\ne k}v_i\right\}\bm{1}\left\{v_k \geq (1+\xi^*_k)d^O\right\}d^O\right]\leq\rho' <\rho,
\end{align}
and,
\begin{align}
U^{\rmCP}_k(\xi^*,\xi^*) &= \bbE\left[\bm{1}\left\{v_k\geq \max_{i\ne k}v_i\right\}\bm{1}\left\{v_k\geq (1+\xi^*)d^O\right\}\left(v_k-d^O\right)\right]=U^{\rmCP}_1(\xi^*,\xi^*), \\
U^{\rmCP}_1(\xi^*_1,\xi^*_k) &=\bbE\left[\Pi_{\substack{i\ne 1,\\i\ne k}}\bm{1}\left\{v_1\geq v_i\right\}\bm{1}\left\{v_1\geq \frac{1+\xi^*_1}{1+\xi^*_k}v_k\right\}\bm{1}\left\{v_1 \geq (1+\xi^*_1)d^O\right\}\left(v_1-d^O\right)\right],\\
U^{\rmCP}_k(\xi^*_1,\xi^*_k) &= \bbE\left[\bm{1}\left\{v_k\geq \frac{1+\xi^*_k}{1+\xi^*_1}\max_{i\ne k}v_i\right\}\bm{1}\left\{v_k \geq (1+\xi^*_k)d^O\right\}\left(v_k-d^O\right)\right].
\end{align}

We consider all the possible cases the solutions of NCPs may be. We show that for the only reasonable cases that do not lead to contradiction with the definition of NCP and strong monotonicity, we have
\begin{equation}
\label{eqn: condition 1}
\begin{aligned}
\left\{v_k\geq \frac{1+\xi^*_k}{1+\xi^*_1}\max_{i\ne k}v_i\right\} &\subseteq \left\{v_k\geq \max_{i\ne k}v_i\right\}, \\
\left\{v_k\geq (1+\xi^*_k)d^O\right\} &\subseteq \left\{v_k\geq (1+\xi^*)d^O\right\}, 
\end{aligned}
\end{equation}
therefore $U^{\rmCP}_k(\xi^*_1,\xi^*_k) \leq U^{\rmCP}_k(\xi^*,\xi^*)$. 

\paragraph{When $\xi^*=\xi^*_1=\xi^*_k=0$,} all the members have sufficient budgets, after misreporting, bidder $k$'s budget is still big enough. In this case, her obtained utility will not decrease. 

\paragraph{When $\xi^*\ne 0$,} then $\EXPcostCP(\xi^*,\xi^*)=\rho$. First we have $\xi^*_1\ne \xi^*_k$: 
\begin{itemize}
    \item for $\xi^*_1 = \xi^*_k > \xi^*$, we have $\rho = G^{\rmCP}_1(\xi^*_1,\xi^*_k)< \EXPcostCP(\xi^*,\xi^*) = \rho$, leading to a contradiction;
    \item for $\xi^*_1 = \xi^*_k < \xi^*$, we have $\rho' \geq G^{\rmCP}_k(\xi^*_1,\xi^*_k)=G^{\rmCP}_1(\xi^*_1,\xi^*_k)>\EXPcostCP(\xi^*,\xi^*) = \rho$, leading to a contradiction. 
\end{itemize}
Then we argue  that $\xi^*_k \ne 0$, as otherwise it must be the case that $\xi^*_1\ne 0$ and 
\begin{align*}
\EXPcostCP(\xi^*,\xi^*) = G^{\rmCP}_1(\xi^*_1,\xi^*_k)=\rho,\\
G^{\rmCP}_k(\xi^*_1,\xi^*_k)\leq \rho' < \rho. 
\end{align*}
By the strong monotonicity of $\bm{G}^{\rmCP}$, we have
\[
(\xi^*_k-\xi^*)(\EXPcostCP(\xi^*,\xi^*)-G^{\rmCP}_k(\xi^*_1,\xi^*_k))> 0, 
\]
and
\[\xi^*_k > \xi^*,\]
leading to a contradiction. 

When $\xi^*_1 = 0$, it must be the case that  
\[
\xi^*_k > \xi^*_1=0, \text{~and~} \xi^* \geq \xi^*_1. 
\]
and, 
\begin{align*}
G^{\rmCP}_1(\xi^*_1,\xi^*_k)\leq \EXPcostCP(\xi^*,\xi^*) =\rho,\\
G^{\rmCP}_k(\xi^*_1,\xi^*_k) = \rho' < \rho. 
\end{align*}
By the strong monotonicity of $\bm{G}^{\rmCP}$, 
\[
(\xi^*_k - \xi^*)(G^{\rmCP}_k(\xi^*,\xi^*)-G^{\rmCP}_k(\xi^*_1,\xi^*_k))> (K-1)(\xi^*-\xi^*_1)(G^{\rmCP}_1(\xi^*,\xi^*)-G_1(\xi^*_1,\xi^*_k))\geq 0,
\]
which means $\xi^*_k > \xi^*$. 
In this case, 
\begin{align*}
    \left\{v_k\geq \frac{1+\xi^*_k}{1+\xi^*_1}\max_{i\ne k}v_i\right\} &\subseteq \left\{v_k\geq \max_{i\ne k}v_i\right\},\\ 
    \left\{v_k\geq (1+\xi^*_k)d^O\right\} &\subseteq \left\{v_k\geq (1+\xi^*)d^O\right\}. 
\end{align*}
\eqref{eqn: condition 1} holds. 

Finally we discuss the case when $\xi^*,\xi^*_1,\xi^*_k >0$, and 
\begin{align}
G^{\rmCP}_1(\xi^*_1,\xi^*_k) &= \EXPcostCP(\xi^*,\xi^*) =\rho,\\
G^{\rmCP}_k(\xi^*_1,\xi^*_k) &= \rho' < \rho. 
\end{align}

First by the strong monotonicity of $\bm{G}^{\rmCP}$, we have $\xi^*_k > \xi^*$. Then consider the following possible cases:
\begin{itemize}
    \item $\xi^*_1 \geq \xi^*_k > \xi^*$: $(1+\xi^*_1)/(1+\xi^*_k) > 1$, and 
\begin{equation}
\label{eqn: contradiction 2}
\begin{aligned}
    \left\{v_1\geq \frac{1+\xi^*_1}{1+\xi^*_k}v_k\right\} &\subseteq \left\{v_1\geq v_k\right\},\\ 
    \left\{v_1\geq (1+\xi^*_1)d^O\right\} &\subseteq \left\{v_1\geq (1+\xi^*)d^O\right\},
\end{aligned}
\end{equation}
which means that $G_1^{\rm{CP}}(\xi^*_1,\xi^*_k)\leq G_1^{\rm{CP}}(\xi^*,\xi^*)$. 
We state the following lemma:
\begin{restatable}{lemma}{lmmcontradictionone}
\label{lmm: contradiction 1}
When $\bm{G}^{\rmCP}$ is strongly monotone, then for any $\xi^*_1 > \xi^*$ and $(1+\xi^*_1)/(1+\xi^*_k)>1$,
we have  
\[
G^{\rmCP}_1(\xi^*_1,\xi^*_k) < G^{\rmCP}_1(\xi^*,\xi^*). 
\]
\end{restatable}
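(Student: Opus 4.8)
The plan is to upgrade the inequality $G^{\rmCP}_1(\xi^*_1,\xi^*_k)\le G^{\rmCP}_1(\xi^*,\xi^*)$ of \eqref{eqn: contradiction 2} to a strict one by writing the gap as a sum of two nonnegative expectations and recognizing the first of them as the (strictly positive) gap of the symmetric expenditure function between $\xi^*$ and $\xi^*_1$, which is exactly what \Cref{lmm: strictly decreasing value} controls.

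Concretely, I would start from the explicit forms $G^{\rmCP}_1(\xi^*_1,\xi^*_k)=\bbE\big[\prod_{i\ne 1,i\ne k}\bm{1}\{v_1\ge v_i\}\,\bm{1}\{v_1\ge\tfrac{1+\xi^*_1}{1+\xi^*_k}v_k\}\,\bm{1}\{v_1\ge(1+\xi^*_1)d^O\}\,d^O\big]$ and $G^{\rmCP}_1(\xi^*,\xi^*)=\bbE\big[\prod_{i\ne 1}\bm{1}\{v_1\ge v_i\}\,\bm{1}\{v_1\ge(1+\xi^*)d^O\}\,d^O\big]$. Writing $p=\bm{1}\{v_1\ge v_k\}$, $p'=\bm{1}\{v_1\ge\tfrac{1+\xi^*_1}{1+\xi^*_k}v_k\}$, $q=\bm{1}\{v_1\ge(1+\xi^*)d^O\}$, $q'=\bm{1}\{v_1\ge(1+\xi^*_1)d^O\}$, the hypotheses $\xi^*_1>\xi^*$ and $(1+\xi^*_1)/(1+\xi^*_k)>1$ give $p\ge p'$ and $q\ge q'$ pointwise (using $v_k,d^O\ge0$), which is \eqref{eqn: contradiction 2}. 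Then I apply the elementary identity $pq-p'q'=p(q-q')+q'(p-p')$ valid for $\{0,1\}$-valued $p\ge p'$, $q\ge q'$, multiply through by $\prod_{i\ne 1,i\ne k}\bm{1}\{v_1\ge v_i\}\cdot d^O\ge0$, and take expectations, obtaining $G^{\rmCP}_1(\xi^*,\xi^*)-G^{\rmCP}_1(\xi^*_1,\xi^*_k)=(\mathrm{I})+(\mathrm{II})$, where $(\mathrm{II})\ge0$ trivially, and, after merging $p$ back into the product over $i\ne1$, $(\mathrm{I})=\bbE\big[\prod_{i\ne 1}\bm{1}\{v_1\ge v_i\}\,\bm{1}\{(1+\xi^*_1)d^O>v_1\ge(1+\xi^*)d^O\}\,d^O\big]$.

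The key observation is that $(\mathrm{I})$ does not involve $\xi^*_k$ at all. By symmetry of the value distributions and independence of $d^O$, $(\mathrm{I})$ equals $\bbE\big[\bm{1}\{v_k\ge\max_{i\ne k}v_i\}\,\bm{1}\{(1+\xi^*_1)d^O>v_k\ge(1+\xi^*)d^O\}\,d^O\big]$, which is precisely $\EXPcostCP(\xi^*)-\EXPcostCP(\xi^*_1)$ for the symmetric-diagonal expenditure function of \Cref{lmm: strictly decreasing value}. Since that lemma (where strong monotonicity of $\bm{G}^{\rmCP}$ is used) shows this function is strictly decreasing on $[0,\bar\xi]$ and $\xi^*<\xi^*_1$, we get $(\mathrm{I})>0$, hence $G^{\rmCP}_1(\xi^*_1,\xi^*_k)\le G^{\rmCP}_1(\xi^*,\xi^*)-(\mathrm{I})<G^{\rmCP}_1(\xi^*,\xi^*)$, which is the claim.

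The one step to carry out with care is the symmetry relabeling identifying $(\mathrm{I})$ with the diagonal expenditure gap: it is valid because all $F_i$ are identical, the $v_i$ are i.i.d., and $d^O$ is independent of them, so $(v_1,(v_i)_{i\ne1},d^O)\stackrel{d}{=}(v_k,(v_i)_{i\ne k},d^O)$. Everything else is routine — $B\subseteq A$ and $(\mathrm{II})\ge0$ are immediate, and the strict positivity of $(\mathrm{I})$ is imported wholesale from \Cref{lmm: strictly decreasing value}. I would also remark that strictness in fact only needs $\xi^*<\xi^*_1$; the assumption $(1+\xi^*_1)/(1+\xi^*_k)>1$ is used merely to guarantee $(\mathrm{II})\ge0$, so no lower bound relating $\xi^*_k$ and $\xi^*$ is required.
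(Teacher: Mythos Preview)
Your argument is correct, but it follows a different route from the paper's. The paper argues by contradiction: assuming $G^{\rmCP}_1(\xi^*_1,\xi^*_k)=G^{\rmCP}_1(\xi^*,\xi^*)$, it writes the gap as an expectation over a set which must then have measure zero, notes that for every $\xi\in(\xi^*,\xi^*_1)$ the analogous set is a subset and hence also null, deduces $G^{\rmCP}_1(\xi,\xi)=G^{\rmCP}_1(\xi^*,\xi^*)$ on the whole interval, and only then invokes strong monotonicity (on the diagonal, using symmetry) for the contradiction. Your proof instead is direct: the telescoping identity $pq-p'q'=p(q-q')+q'(p-p')$ splits the gap into $(\mathrm{I})+(\mathrm{II})\ge 0$, and after the symmetry relabeling you recognize $(\mathrm{I})$ as exactly $\EXPcostCP(\xi^*)-\EXPcostCP(\xi^*_1)$, whose strict positivity is precisely \Cref{lmm: strictly decreasing value}. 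Both proofs ultimately rest on the same fact (strict diagonal decrease of $\EXPcostCP$ under strong monotonicity in the symmetric setting), but yours packages it more transparently and yields the quantitative lower bound $G^{\rmCP}_1(\xi^*,\xi^*)-G^{\rmCP}_1(\xi^*_1,\xi^*_k)\ge \EXPcostCP(\xi^*)-\EXPcostCP(\xi^*_1)$, whereas the paper's measure-zero subset argument gives only the strict inequality. Your closing remark that $(1+\xi^*_1)/(1+\xi^*_k)>1$ is used only to secure $(\mathrm{II})\ge 0$ is also accurate.
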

By \Cref{lmm: contradiction 1}, $G^{\rmCP}_1(\xi^*_1,\xi^*_k) < G^{\rmCP}_1(\xi^*,\xi^*)$, leading to a contradiction. 
    \item $\xi^*_k > \xi^*_1$: $(1+\xi^*_k)/(1+\xi^*_1) > 1$, and
\begin{equation} 
\label{eqn: reasonable 1}
\begin{aligned}
    \left\{v_k\geq \frac{1+\xi^*_k}{1+\xi^*_1}\max_{i\ne k}v_i\right\} &\subseteq \left\{v_k\geq \max_{\substack{i\ne k}}v_i\right\},\\ 
    \left\{v_k\geq (1+\xi^*_k)d^O\right\} &\subseteq \left\{v_k\geq (1+\xi^*)d^O\right\}.
\end{aligned}
\end{equation}
\eqref{eqn: condition 1} holds. 
\end{itemize}

\paragraph{When $\xi^*=0$, and one of $\xi^*_1, \xi^*_k$ is non-zero,} first we argue that $\xi^*_1 \ne \xi^*_k$, as it contradicts that $G^{\rmCP}_1(\xi^*_1,\xi^*_k) \ne G^{\rmCP}_k(\xi^*_1,\xi^*_k)$. 

If $\xi^*_k = 0$, then $G^{\rmCP}_1(\xi^*_1,\xi^*_k)=\rho \geq \EXPcostCP(\xi^*,\xi^*)$, and $\xi^*_1 > \xi^* =\xi^*_k =0$, we have $(1+\xi^*_1)/(1+\xi^*_k) > 1$, and \eqref{eqn: contradiction 2} holds. 
By \Cref{lmm: contradiction 1}, $G^{\rmCP}_1(\xi^*_1,\xi^*_k) < G^{\rmCP}_1(\xi^*,\xi^*)$, leading to a contradiction. 

If $\xi^*_1 = 0$, then $\EXPcostCP(\xi^*_1,\xi^*_k)=\rho'$, and $\xi^*_k > \xi^*_1 = \xi^*$. In this case, $(1+\xi^*_1)/(1+\xi^*_k) < 1$, \eqref{eqn: reasonable 1} and therefore \eqref{eqn: condition 1} holds.

When $\xi^*_1, \xi^*_k \ne 0$, consider the two cases:
\begin{itemize}
    \item if $\xi^*_1 > \xi^*_k > \xi^* = 0$, then $(1+\xi^*_1)/(1+\xi^*_k) > 1$, \eqref{eqn: contradiction 2} holds. 
By \Cref{lmm: contradiction 1},  $G^{\rmCP}_1(\xi^*_1,\xi^*_k) < G^{\rmCP}_1(\xi^*,\xi^*) \leq \rho$, leading to a contradiction. 
    \item if $\xi^*_k > \xi^*_1 >\xi^* = 0$, then $(1+\xi^*_1)/(1+\xi^*_k) < 1$, \eqref{eqn: reasonable 1} and therefore \eqref{eqn: condition 1} holds. 
\end{itemize}

\end{proof}

\subsection{Proof of \Cref{thm:misreport-budget-value-ip}}

\thmmisreportbudgetvalueip*

\begin{proof}
Let the expenditure rate of a member be $\rho$. By symmetry, all members' equilibrium parameters are the same: $\blambda^*=\lambda^*\bm{e}$ for some $\lambda^*\in[0,\bar{\lambda}]$. When bidder $k$ misreports a rate $\rho'<\rho$, denote her corresponding equilibrium parameter to be $\lambda^*_k$ and that of another member $i$ to be $\lambda^*_1$. By a slight abuse of notation, we let $G_i^\rmCP$ be the function of $(\lambda_1,\lambda_k)$, the first parameter for other bidders, and the second for bidder $k$. Without loss of generality, let $k\ne 1$, and we only consider the truthful bidder $1$ as other truthful bidders behave the same.  We  write down the form of $G_i$ and $V_i$: 
\begin{align*}
G_k(\lambda^*,\lambda^*) &= \bbE\left[\bm{1}\left\{v_k\geq \max_{i\ne k}v_i\right\}\bm{1}\left\{v_k\geq (1+\lambda^*)d^O\right\}\max \left\{\frac{\max_{i\ne k}v_i}{1+\lambda^*},d^O\right\}\right]=G_1(\lambda^*,\lambda^*) \leq \rho, \\
G_1(\lambda^*_1,\lambda^*_k) &=\bbE\left[\Pi_{\substack{i\ne 1,\\i\ne k}}\bm{1}\left\{v_1\geq v_i\right\}\bm{1}\left\{v_1\geq \frac{1+\lambda^*_1}{1+\lambda^*_k}v_k\right\}\bm{1}\left\{v_1 \geq (1+\lambda^*_1)d^O\right\} \max \left\{\frac{\max_{\substack{i\ne 1\\i\ne k}}v_i}{1+\lambda^*_1}, \frac{v_k}{1+\lambda^*_k}, d^O\right\}\right] \leq \rho,\\
G_k(\lambda^*_1,\lambda^*_k) &= \bbE\left[\bm{1}\left\{v_k\geq \frac{1+\lambda^*_k}{1+\lambda^*_1}\max_{i\ne k}v_i\right\}\bm{1}\left\{v_k \geq (1+\lambda^*_k)d^O\right\}\max \left\{\frac{\max_{i\ne k}v_i}{1+\lambda^*_1},d^O\right\}\right]=\rho' <\rho,
\end{align*}
and,
\begin{align*}
V_k(\lambda^*,\lambda^*) &= \bbE\left[\bm{1}\left\{v_k\geq \max_{i\ne k}v_i\right\}\bm{1}\left\{v_k\geq (1+\lambda^*)d^O\right\}v_k\right]=V_1(\lambda^*,\lambda^*), \\
V_1(\lambda^*_1,\lambda^*_k) &=\bbE\left[\Pi_{\substack{i\ne 1,\\i\ne k}}\bm{1}\left\{v_1\geq v_i\right\}\bm{1}\left\{v_1\geq \frac{1+\lambda^*_1}{1+\lambda^*_k}v_k\right\}\bm{1}\left\{v_1 \geq (1+\lambda^*_1)d^O\right\} v_1\right],\\
V_k(\lambda^*_1,\lambda^*_k) &= \bbE\left[\bm{1}\left\{v_k\geq \frac{1+\lambda^*_k}{1+\lambda^*_1}\max_{i\ne k}v_i\right\}\bm{1}\left\{v_k \geq (1+\lambda^*_k)d^O\right\}v_k\right].
\end{align*}

We consider all the possible cases the solutions of NCPs may be. We show that for the only reasonable cases that do not contradict the definition of NCP and strong monotonicity, we have
\begin{equation}
\label{eqn: condition 2}
\begin{aligned}
\left\{v_k\geq \frac{1+\lambda^*_k}{1+\lambda^*_1}\max_{i\ne k}v_i\right\} &\subseteq \left\{v_k\geq \max_{i\ne k}v_i\right\}, \\
\left\{v_k\geq (1+\lambda^*_k)d^O\right\} &\subseteq \left\{v_k\geq (1+\lambda^*)d^O\right\}, 
\end{aligned}
\end{equation}
therefore $V_k(\lambda^*_1,\lambda^*_k) \leq V_k(\lambda^*,\lambda^*)$. 

\paragraph{When $\lambda^*=\lambda^*_1=\lambda^*_k=0$,} all the members have sufficient budgets, after misreporting, bidder $k$'s budget is still big enough. In this case, her obtained value will not decrease. 

\paragraph{When $\lambda^*\ne 0$,} then $\EXPcost(\lambda^*,\lambda^*)=\rho$. First we have $\lambda^*_1\ne \lambda^*_k$: 
\begin{itemize}
    \item for $\lambda^*_1 = \lambda^*_k > \lambda^*$, we have $\rho = G_1(\lambda^*_1,\lambda^*_k)< \EXPcost(\lambda^*,\lambda^*) = \rho$, leading to a contradiction;
    \item for $\lambda^*_1 = \lambda^*_k < \lambda^*$, we have $\rho' \geq G_k(\lambda^*_1,\lambda^*_k)=G_1(\lambda^*_1,\lambda^*_k)>\EXPcost(\lambda^*,\lambda^*) = \rho$, leading to a contradiction. 
\end{itemize}
Then we argue  that $\lambda^*_k \ne 0$, as otherwise it must be the case that $\lambda^*_1\ne 0$ and 
\begin{align*}
\EXPcost(\lambda^*,\lambda^*) = G_1(\lambda^*_1,\lambda^*_k)=\rho,\\
G_k(\lambda^*_1,\lambda^*_k)\leq \rho' < \rho. 
\end{align*}
By the strong monotonicity of $\bm{G}$, we have
\[
(\lambda^*_k-\lambda^*)(\EXPcost(\lambda^*,\lambda^*)-G_k(\lambda^*_1,\lambda^*_k))> 0, 
\]
and
\[\lambda^*_k > \lambda^*,\]
leading to a contradiction. 

When $\lambda^*_1 = 0$, it must be the case that  
\[
\lambda^*_k > \lambda^*_1=0, \text{~and~} \lambda^* \geq \lambda^*_1. 
\]
By the strong monotonicity of $\bm{G}$, 
\[
(\lambda^*_k - \lambda^*)(G_k(\lambda^*,\lambda^*)-G_k(\lambda^*_1,\lambda^*_k))> (\lambda^*-\lambda^*_1)(G_1(\lambda^*,\lambda^*)-G_1(\lambda^*_1,\lambda^*_k))\geq 0,
\]
which means $\lambda^*_k > \lambda^*$. 
In this case, 
\begin{align*}
    \left\{v_k\geq \frac{1+\lambda^*_k}{1+\lambda^*_1}\max_{i\ne k}v_i\right\} &\subseteq \left\{v_k\geq \max_{i\ne k}v_i\right\},\\ 
    \left\{v_k\geq (1+\lambda^*_k)d^O\right\} &\subseteq \left\{v_k\geq (1+\lambda^*)d^O\right\}. 
\end{align*}
\eqref{eqn: condition 2} holds. 

Finally we discuss the case when $\lambda^*,\lambda^*_1,\lambda^*_k >0$, and 
\begin{align}
G_1(\lambda^*_1,\lambda^*_k) &= \EXPcost(\lambda^*,\lambda^*) =\rho,\\
G_k(\lambda^*_1,\lambda^*_k) &= \rho' < \rho. 
\end{align}

First by strong monotonicity of $\bm{G}$, we have $\lambda^*_k > \lambda^*$. Then consider the following possible cases:
\begin{itemize}
    \item $\lambda^*_1 \geq \lambda^*_k > \lambda^*$: $(1+\lambda^*_1)/(1+\lambda^*_k) > 1$, and 
\begin{equation}
\label{eqn: contradiction 3}
\begin{aligned}
    \left\{v_1\geq \frac{1+\lambda^*_1}{1+\lambda^*_k}v_k\right\} &\subseteq \left\{v_1\geq v_k\right\},\\ 
    \left\{v_1\geq (1+\lambda^*_1)d^O\right\} &\subseteq \left\{v_1\geq (1+\lambda^*)d^O\right\},\\
    \frac{\max_{\substack{i\ne 1\\i\ne k}}v_i}{1+\lambda^*_1} &< \frac{\max_{\substack{i\ne 1\\i\ne k}}v_i}{1+\lambda^*},\\
    \frac{v_k}{1+\lambda^*_k} &< \frac{v_k}{1+\lambda^*}.
\end{aligned}
\end{equation}
$G_1(\lambda^*_1,\lambda^*_k) \leq G_1(\lambda^*,\lambda^*)$. We prove a similar lemma as \Cref{lmm: contradiction 1},
\begin{lemma}
\label{lmm: contradiction 2}
When $\bm{G}$ is strongly monotone, then for any $\lambda^*_1 > \lambda^*$ and $(1+\lambda^*_1)/(1+\lambda^*_k)>1$, we have 
\[
G_1(\lambda^*_1,\lambda^*_k) < G_1(\lambda^*,\lambda^*)
\]
\end{lemma}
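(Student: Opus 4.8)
The plan is to prove the lemma in exact parallel with its CP counterpart, Lemma~\ref{lmm: contradiction 1}, in two moves: first establish a pointwise domination of the integrand defining $G_1(\lambda^*_1,\lambda^*_k)$ by the one defining $G_1(\lambda^*,\lambda^*)$, giving $G_1(\lambda^*_1,\lambda^*_k)\le G_1(\lambda^*,\lambda^*)$, and then upgrade this to a strict inequality. Writing $\max_{i\ne k}v_i=\max\{\max_{i\ne 1,k}v_i,v_k\}$ (a maximum over the empty index set read as $0$, to cover $K=2$),
\[
G_1(\lambda^*,\lambda^*)=\bbE\Big[\prod_{i\ne 1,k}\bm 1\{v_1\ge v_i\}\,\bm 1\{v_1\ge v_k\}\,\bm 1\{v_1\ge(1+\lambda^*)d^O\}\cdot\max\Big\{\tfrac{\max_{i\ne 1,k}v_i}{1+\lambda^*},\tfrac{v_k}{1+\lambda^*},\,d^O\Big\}\Big],
\]
and $G_1(\lambda^*_1,\lambda^*_k)$ is the same expression with $\lambda^*$ replaced by $\lambda^*_1$ in the ``$v_i$''-indicators and in the first and third arguments of the maximum, with $v_k$ replaced by $\tfrac{1+\lambda^*_1}{1+\lambda^*_k}v_k$ inside $\bm 1\{v_1\ge v_k\}$, and with $\lambda^*$ replaced by $\lambda^*_k$ in the middle argument of the maximum.

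For the pointwise step I would invoke $\lambda^*_1>\lambda^*$, $(1+\lambda^*_1)/(1+\lambda^*_k)>1$ and $\lambda^*_k>\lambda^*$ — the last being a relation that holds whenever this lemma is applied and is one of those collected in \eqref{eqn: contradiction 3}. Together they give that each indicator event of $G_1(\lambda^*_1,\lambda^*_k)$ is contained in the corresponding event of $G_1(\lambda^*,\lambda^*)$, and that the three-way maximum of $G_1(\lambda^*_1,\lambda^*_k)$ is everywhere termwise $\le$ that of $G_1(\lambda^*,\lambda^*)$; hence the $G_1(\lambda^*_1,\lambda^*_k)$-integrand never exceeds the $G_1(\lambda^*,\lambda^*)$-integrand, so $G_1(\lambda^*_1,\lambda^*_k)\le G_1(\lambda^*,\lambda^*)$.

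For strictness I would argue by contradiction. If $G_1(\lambda^*_1,\lambda^*_k)=G_1(\lambda^*,\lambda^*)$ the two integrands agree a.e., so both the ``$d^O$-gap'' set $A=\{v_1\ge\max_{i\ne 1}v_i,\ d^O>0,\ (1+\lambda^*)d^O\le v_1<(1+\lambda^*_1)d^O\}$ and the ``internal-pivotal'' set $P=\{v_1\ge\max_{i\ne 1}v_i,\ \max_{i\ne 1}v_i/(1+\lambda^*)>d^O\}$ must be null: on $A$ the $G_1(\lambda^*_1,\lambda^*_k)$-integrand is $0$ while the $G_1(\lambda^*,\lambda^*)$-integrand is $\ge d^O>0$, and on $P$ the $G_1(\lambda^*,\lambda^*)$-integrand strictly exceeds the $G_1(\lambda^*_1,\lambda^*_k)$-integrand (either bidder $1$ no longer wins, or each of her three candidate payments under $(\lambda^*_1,\lambda^*_k)$ is a.s. strictly below the internal bid that is pivotal under $(\lambda^*,\lambda^*)$, using $\lambda^*_1,\lambda^*_k>\lambda^*$ and $v_i>0$ a.s.). But $g(\lambda^*)-g(\lambda^*_1)$, for the common-multiplier expenditure $g(\lambda)\coloneqq G_1(\lambda,\lambda)$, is supported up to null sets on $A\cup P$ and nonnegative there, so $A,P$ null would force $g(\lambda^*)=g(\lambda^*_1)$, contradicting the strict monotonicity of $g$ — which is exactly what strong monotonicity of $\bm G$ delivers along the diagonal $\blambda=\lambda\bm{e}$. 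Hence the inequality is strict.

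The one genuinely new point relative to CP is the three-way maximum in the IP payment. For CP the shortcut $G^{\rmCP}_1(\xi^*_1,\xi^*_k)\le G^{\rmCP}_1(\xi^*_1,\xi^*_1)<G^{\rmCP}_1(\xi^*,\xi^*)$ works — the first inequality because $G^{\rmCP}_1$ is monotone in its second argument, the second by strong monotonicity on the diagonal — but here $G_1$ is \emph{not} monotone in its second argument, since lowering a competitor's multiplier can raise bidder $1$'s payment on rounds she keeps winning. That is precisely why $\lambda^*_k>\lambda^*$ must be carried through the pointwise comparison and why strictness must be extracted from the explicit sets $A,P$; establishing — via strong monotonicity of $\bm G$ — that at least one of them has positive mass is the crux of the argument. (One may equally replace $A$ by the ``valuation-gap'' set $\{v_1\ge\max_{i\ne 1,k}v_i,\ v_k\le v_1<\tfrac{1+\lambda^*_1}{1+\lambda^*_k}v_k,\ v_1\ge(1+\lambda^*)d^O\}$, on which the second indicator of $G_1(\lambda^*_1,\lambda^*_k)$ vanishes; the contradiction argument is unchanged.)
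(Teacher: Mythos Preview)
Your outline matches what the paper has in mind: the paper only says the proof of \Cref{lmm: contradiction 2} ``is similar'' to that of \Cref{lmm: contradiction 1}, and both routes assume equality and then contradict strong monotonicity of $\bm G$ along the diagonal $\blambda=\lambda\bm e$. The pointwise-domination step and the decomposition into gap events are the right ideas, and you are right that the three-way maximum in the IP payment is the new wrinkle relative to CP.

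There is, however, a real gap in your strictness step. You import the strict relation $\lambda^*_k>\lambda^*$ from context, but it is not a hypothesis of the lemma, and your claim that it ``holds whenever this lemma is applied'' is inaccurate: in the case ``$\lambda^*=0$, $\lambda^*_k=0$'' in the proof of \Cref{thm:misreport-budget-value-ip}, the lemma is invoked with $\lambda^*_k=\lambda^*=0$. With only $\lambda^*_k=\lambda^*$, your assertion that the two integrands differ strictly on all of $P$ fails. On the slice of $P$ where bidder~$1$ still wins under $(\lambda^*_1,\lambda^*_k)$ and bidder~$k$ is the pivotal internal competitor (i.e.\ $v_k\ge\max_{i\ne1,k}v_i$), the new payment candidate $v_k/(1+\lambda^*_k)$ equals the old pivotal payment $v_k/(1+\lambda^*)$, so the $(\lambda^*_1,\lambda^*_k)$- and $(\lambda^*,\lambda^*)$-integrands coincide rather than differ; you therefore cannot conclude that $P$ is null from $G_1(\lambda^*_1,\lambda^*_k)=G_1(\lambda^*,\lambda^*)$. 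Yet the support of $g(\lambda^*)-g(\lambda^*_1)$ does meet this slice --- there the diagonal payment drops strictly from $v_k/(1+\lambda^*)$ to $\max\{v_k/(1+\lambda^*_1),d^O\}$ --- so the chain ``$A,P$ null $\Rightarrow g(\lambda^*)=g(\lambda^*_1)$'' is unavailable in that boundary case. (To be fair, the paper's ``similar'' proof is not spelled out and would face the same obstruction if transplanted verbatim from CP, since the IP payment, unlike the CP payment $d^O$, moves with the multipliers.)
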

By \Cref{eqn: contradiction 3} and \Cref{lmm: contradiction 2}, 
$G_1(\lambda^*_1,\lambda^*_k) < G_1(\lambda^*,\lambda^*)$, leading to a contradiction. 
    \item $\lambda^*_k > \lambda^*_1$: $(1+\lambda^*_k)/(1+\lambda^*_1) > 1$, and 
\begin{align*}
    \left\{v_k\geq \frac{1+\lambda^*_k}{1+\lambda^*_1}\max_{i\ne k}v_i\right\} &\subseteq \left\{v_k\geq \max_{\substack{i\ne k}}v_i\right\},\\ 
    \left\{v_k\geq (1+\lambda^*_k)d^O\right\} &\subseteq \left\{v_k\geq (1+\lambda^*)d^O\right\}.
\end{align*}
\eqref{eqn: condition 2} holds. 
\end{itemize}

\paragraph{When $\lambda^*=0$, and one of $\lambda^*_1, \lambda^*_k$ is non-zero,} first we argue that $\lambda^*_1 \ne \lambda^*_k$, as it contradicts that $G_1(\lambda^*_1,\lambda^*_k) \ne G_k(\lambda^*_1,\lambda^*_k)$. 

If $\lambda^*_k = 0$, then $G_1(\lambda^*_1,\lambda^*_k)=\rho \geq G_1(\lambda^*,\lambda^*)$, and $\lambda^*_1 > \lambda^* =\lambda^*_k =0$, we have $(1+\lambda^*_1)/(1+\lambda^*_k) > 1$, and
\begin{align*}
    \left\{v_1\geq \frac{1+\lambda^*_1}{1+\lambda^*_k}v_k\right\} &\subseteq \left\{v_1\geq v_k\right\},\\ 
    \left\{v_1\geq (1+\lambda^*_1)d^O\right\} &\subseteq \left\{v_1\geq (1+\lambda^*)d^O\right\},\\
    \frac{\max_{\substack{i\ne 1\\i\ne k}}v_i}{1+\lambda^*_1} &< \frac{\max_{\substack{i\ne 1\\i\ne k}}v_i}{1+\lambda^*},\\
    \frac{v_k}{1+\lambda^*_k} &\leq \frac{v_k}{1+\lambda^*}.
\end{align*}
By \Cref{lmm: contradiction 2}, 
$G_1(\lambda^*_1,\lambda^*_k) < G_1(\lambda^*,\lambda^*)$, leading to a contradiction. 

If $\lambda^*_1 = 0$, then $\EXPcost(\lambda^*_1,\lambda^*_k)=\rho'$, and $\lambda^*_k > \lambda^*_1 = \lambda^*$. In this case, $(1+\lambda^*_1)/(1+\lambda^*_k) < 1$, and \eqref{eqn: condition 2} holds.

When $\lambda^*_1, \lambda^*_k \ne 0$, consider the two cases:
\begin{itemize}
    \item if $\lambda^*_1 > \lambda^*_k > \lambda^* = 0$, then \eqref{eqn: contradiction 3} holds,   leading to a contradiction. 
    \item if $\lambda^*_k > \lambda^*_1 >\lambda^* = 0$, then $(1+\lambda^*_1)/(1+\lambda^*_k) < 1$, and \eqref{eqn: condition 2} holds. 
\end{itemize}

\end{proof}

\subsection{An Experimental Example that IP is Not Truthful on Utilities}

Consider a coalition of two bidders. Consider independent random varaibles $X_1,X_2$, distributed uniformly in $[0,1]$, and $Y_1,Y_2$, distributed uniformly in $[1,2]$, respectively. For $i\in{1,2}$, the random variable of bidder $i$'s value is $Z_i=1/3X_i+2/3Y_i$. The competing bid outside the coalition is uniformly distributed in $[0,1]$. We set the true expenditure ratio of each bidder to be $0.5$, and let bidder $1$ misreports a ratio of $0.49$. 

We run IP the same as we run the example in \Cref{sec: asymmetric example}. And we repeated the experiment $100$ times. For every experiment, all bidders have higher utilities when bidder $1$'s misreports: bidder $1$'s increases by about $0.003$, and bidder $2$'s by about $0.006$. 

As we further check the changes of bidders' obtained values and the equilibrium parameters, we can find out that bidder $1$'s obtained value decreases by about $0.007$. Denote the equilibrium parameter when bidder $1$ truthfully reports by $\mu^*$, the equilibrium parameter of bidder $1$'s by $\mu^*_1$ and of bidder $2$'s by $\mu^*_2$, respectively, when bidder $1$ misreports. It shows that 
$\mu^*_1>\mu^*_2>\mu^*$, which is consistent with the analysis above. 

\subsection{Proofs of Lemmas in Proving \Cref{thm:misreport-budget-value-cp} and \ref{thm:misreport-budget-value-ip}}

We give the proof of \Cref{lmm: contradiction 1}, as  the proof of \Cref{lmm: contradiction 2} is similar. 

\lmmcontradictionone*

\begin{proof}
We prove this by contradiction. Suppose $G^{\rmCP}_1(\xi^*_1,\xi^*_k)=G^{\rmCP}_1(\xi^*,\xi^*)$, then 
\begin{equation*}
\begin{aligned}
0=&G^{\rmCP}_1(\xi^*,\xi^*) - G^{\rmCP}_1(\xi^*_1,\xi^*_k) = \bbE\left[\prod_{\substack{i\in \calK, \\i\ne 1,i\ne k}}\mathbf{1}\left\{v_1\geq v_i\right\}\right.\\
\cdot&\left.\mathbf{1}\left\{\left(\left(v_1\geq v_k\right)\wedge\left((1+\xi^*)d^O\leq v_1 < (1+\xi^*_1)d^O\right)\right) \vee \left(\left(v_k\leq v_1 <\frac{1+\xi_1^*}{1+\xi^*_k}v_k\right)\wedge \left(v_1\geq (1+\xi^*)d^O\right)\right)\right\}\right],
\end{aligned}
\end{equation*}
which means 
\begin{equation}
\label{eqn: universe set}
\begin{aligned}
\bigcap_{\substack{i\in \calK, \\i\ne 1,i\ne k}}\left\{v_1\geq v_i\right\}\bigcap&
\left(
\left(
\left\{v_1\geq v_k\right\}\bigcap\left\{(1+\xi^*)d^O\leq v_1 < (1+\xi^*_1)d^O\right\}\right)\right.\\
& \phantom{((}\left.\bigcup  \left(\left\{v_k\leq v_1 <\frac{1+\xi_1^*}{1+\xi^*_k}v_k\right\}\bigcap \left\{v_1\geq (1+\xi^*)d^O\right\}\right)\right)
\end{aligned}
\end{equation}
has zero measure, then for $\xi\in (\xi^*,\xi^*_1)$, 
the set
\begin{equation*}
\begin{aligned}
\bigcap_{\substack{i\in \calK, \\i\ne 1,i\ne k}}\left\{v_1\geq v_i\right\}\bigcap&
\left(
\left(
\left\{v_1\geq v_k\right\}\bigcap\left\{(1+\xi)d^O\leq v_1 < (1+\xi^*_1)d^O\right\}\right)\right.\\
& \phantom{((}\left.\bigcup  \left(\left\{v_k\leq v_1 <\frac{1+\xi_1^*}{1+\xi^*_k}v_k\right\}\bigcap \left\{v_1\geq (1+\xi)d^O\right\}\right)\right)
\end{aligned}
\end{equation*}
is a subset of \eqref{eqn: universe set}, and therefore has measure zero. This means that
\[
G^{\rmCP}_1(\xi^*,\xi^*)-G^{\rmCP}_1(\xi^*_1,\xi^*_k) = 0=G^{\rmCP}_1(\xi,\xi)-G^{\rmCP}_1(\xi^*_1,\xi^*_k), 
\]
equivalent to
\[
G^{\rmCP}_1(\xi^*,\xi^*) = G^{\rmCP}_1(\xi,\xi). 
\]

This contradicts that $\bm{G}^{\rmCP}$ is strongly monotone. 
\end{proof}

\section{Auxiliary Results}

\begin{lemma}
\label{lem: lipschitz cp}
$\EXPcost(\blambda)$, $\EXPvalue(\blambda)$, $\EXPcostCP(\bxi)$ are Lipschitz continuous. 
\end{lemma}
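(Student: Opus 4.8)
The plan is to show each of the three functions is Lipschitz continuous on its (compact) domain — a box of the form $\prod_k[0,\bar v_k/\rho_k]$ or $\prod_k[0,\bar\xi_k]$ — by proving a Lipschitz bound in each coordinate separately, with a constant uniform over the box, and then summing the coordinate-wise estimates. The starting point is that all three functions have the common form $\bbE_{\bv,d^O}\!\big[\bm{1}\{v_k\ge(1+t_k)d_k\}\cdot W\big]$, where $t_k$ denotes the $k$-th parameter (a component of $\blambda$ or $\bxi$), the competing bid $d_k=\max\{\max_{i\ne k}v_i/(1+t_i),\,d^O\}$ depends only on the \emph{other} parameters and on $(\bv_{-k},d^O)$, and the weight $W$ equals $d_k$ for $\EXPcost$, $v_k$ for $\EXPvalue$, and $d^O$ for $\EXPcostCP$. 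On the event $\{v_k\ge(1+t_k)d_k\}$ one has $d^O\le d_k\le v_k/(1+t_k)\le\bar v_k$, so in every case $W$ is bounded by $\bar v:=\max_k\bar v_k$ on the region that contributes to the integral. I will also use that $d_k$ is unchanged when only $t_k$ is perturbed and, being a maximum of maps $t_i\mapsto v_i/(1+t_i)$ that are $\bar v_i$-Lipschitz, satisfies $|d_k-d_k'|\le\bar v_j\,|t_j-t_j'|$ when only a coordinate $j\ne k$ is perturbed.

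First I would perturb the own coordinate $t_k\to t_k'$. Conditioning on $(\bv_{-k},d^O)$ freezes $d_k$, and the two indicators $\bm{1}\{v_k\ge(1+t_k)d_k\}$ and $\bm{1}\{v_k\ge(1+t_k')d_k\}$ disagree only for $v_k$ in an interval of length $|t_k-t_k'|\,d_k\le\bar v\,|t_k-t_k'|$; since $f_k$ is bounded by some $\bar f$, this event has conditional probability at most $\bar f\,\bar v\,|t_k-t_k'|$. For $\EXPcost$ and $\EXPcostCP$ the weight does not depend on $t_k$, so the difference of the two expectations is at most $\bar v\cdot\bar f\bar v\,|t_k-t_k'|$; for $\EXPvalue$ the same bound holds verbatim with $W=v_k\le\bar v_k$.

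Next I would perturb a coordinate $t_j$ with $j\ne k$, so that $d_k$ itself moves. Writing the difference of expectations as $\bbE\big[\bm{1}\{A\}(W-W')\big]+\bbE\big[(\bm{1}\{A\}-\bm{1}\{A'\})W'\big]$, with $A=\{v_k\ge(1+t_k)d_k\}$ and $A'=\{v_k\ge(1+t_k)d_k'\}$, the first term vanishes when $W\in\{v_k,d^O\}$ and is at most $\bbE\big[\bm{1}\{A\}|d_k-d_k'|\big]\le\bar v_j\,|t_j-t_j'|$ when $W=d_k$. For the second term, on $A\triangle A'$ at least one of $A,A'$ holds, so $W'\le\bar v_k+\bar v_j\,|t_j-t_j'|$ is bounded there; and conditioning on $(\bv_{-k},d^O)$ freezes both $d_k$ and $d_k'$, so the indicators disagree only for $v_k$ between $(1+t_k)d_k$ and $(1+t_k)d_k'$, an interval of length $(1+t_k)|d_k-d_k'|\le(1+\bar t_k)\bar v_j\,|t_j-t_j'|$, whence $\bbP[A\triangle A']\le\bar f(1+\bar t_k)\bar v_j\,|t_j-t_j'|$ (with $\bar t_k$ the upper endpoint of the $k$-th box). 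Summing the resulting $O(|t_j-t_j'|)$ bounds over all coordinates $j$ (including $j=k$), and using that the boundary events $\{v_k=(1+t_k)d_k\}$ have probability zero so the indicators are defined almost everywhere, gives joint Lipschitz continuity with an explicit constant depending only on $\bar f$, the $\bar v_j$'s, and the box sizes.

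The step I expect to need the most care is the second-term bound in the cross-coordinate case: one must use that the relevant weight is bounded \emph{on the event where either indicator is active} — it need not be bounded globally, e.g. $d^O$ has only a bounded density — and this is exactly where the winning condition $d_k\le v_k/(1+t_k)\le\bar v_k$ enters. With the conditioning order fixed (condition on $(\bv_{-k},d^O)$, integrate $v_k$ last), everything reduces to "a $v_k$-interval of length $O(|\Delta t|)$ against a density bounded by $\bar f$," which is routine, and compactness of the domain makes the final constant finite. The same template, with an extra inner indicator $\bm{1}\{v_k/(1+\lambda_k)\ge d^I_k\}$ treated identically, also yields the analogous statement for $\EXPcostHP$ and $\EXPvalueHP$ used elsewhere.
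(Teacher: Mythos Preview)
Your proposal is correct and takes a route that differs in presentation from the paper's. The paper writes each of the three functions as an explicit one- or two-dimensional integral (for instance, $\EXPcost(\blambda)=\int x\,\bar F_k((1+\lambda_k)x)\,d(L_k(x)H(x))$ with $L_k$ the cdf of $d^I_k$), differentiates under the integral sign in each coordinate, and then bounds the resulting partial derivatives by constants such as $\bar v\bar f$ or $\bar v^2\bar f$. You instead bound finite differences directly: freeze the relevant conditioning variables, note that the two indicators disagree only on a $v_k$-interval of length $O(|\Delta t|)$, and use the density bound $\bar f$ together with the fact that the weight is $\le\bar v$ on the winning event. Both arguments hinge on exactly the same two ingredients (bounded densities, and the weight being bounded on $\{v_k\ge(1+t_k)d_k\}$), so they are close in spirit; your version is a bit more elementary in that it never needs differentiability and handles all three functions with a single template, while the paper's version yields explicit derivative formulas that it reuses elsewhere (e.g., for the strict-negativity sufficient condition on $\partial\EXPcostHP/\partial\mu_k$ in the companion lemma).
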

\begin{proof}

We denote by $L_k$ the cumulative distribution function of $d^I_k$:
\begin{align}
\label{eqn: cdf of d i}
L_k(x; \blambda) = \prod_{i\in \calK: i\ne k}F_i((1+\lambda_i)x).
\end{align}
We omit $\blambda$ when there is no confusion. In what follows, we denote by $\overline{f}$  the upper bound on $f_k$ and by $\overline{h}$  the upper bound on $h_k$. We use $\bar{F}_k=1-F_k$ to denote the complementary cumulative distribution function of $v_k$.

\begin{enumerate}
    \item We write down two expressions of $\EXPcost(\blambda)$ and calculate its derivatives with respect to $\lambda_k$ and $\lambda_i$ for $i\ne k$.
    \begin{align}
        \EXPcost(\blambda) &= \int x \bar{F}_k((1+\lambda_k)x) \,d\left(L_k(x)H(x)\right) \label{eqn: G first expression} \\
        &=\int L_k(x)H(x)\left((1+\lambda_k)x f_k((1+\lambda_k)x)-\bar{F}((1+\lambda_k)x)\right)\,dx. \label{eqn: G second expression} 
    \end{align}
    Using the first expression \eqref{eqn: G first expression}, we have
    \begin{align*}
    \frac{\partial \EXPcost(\blambda)}{\partial \lambda_k} = -\int x^2 f((1+\mu_k)x)\,d(L_k(x)H(x)),
    \end{align*}
    and thus,
    \begin{align*}
        \left|\frac{\partial \EXPcost(\blambda)}{\partial \lambda_k}\right|\leq \bar{v}\bar{f}. 
    \end{align*}
    Using the second expression \eqref{eqn: G second expression}, we have for $i\neq k$, 
    \begin{align*}
        \frac{\partial \EXPcost(\blambda)}{\partial \lambda_i} = \int \frac{t}{\left(1+\lambda_k\right)^2}f_i\left(\frac{1+\lambda_i}{1+\lambda_k}t\right)\prod_{j\ne i,k} F_j\left(\frac{1+\lambda_j}{1+\lambda_k}t\right)H\left(\frac{t}{1+\lambda_k}\right)\left(tf_{k}(t)-\bar{F}_k(t)\right)\,dt,
    \end{align*}
    and thus,
    \begin{align*}
        \left|\frac{\partial \EXPcost(\blambda)}{\partial \lambda_i}\right|\leq \bar{v}\bar{f}(\bar{v}\bar{f}+1).  
    \end{align*}

    \item We write down the expression of $\EXPvalue(\blambda)$ and calculate its derivatives with respect to $\lambda_k$ and $\lambda_i$ for $i\ne k$.
    \begin{align*}
        \EXPvalue(\blambda)=\int x H\left(\frac{x}{1+\lambda_k}\right)L_k\left(\frac{x}{1+\lambda_k}\right)\,dF_k(x).
    \end{align*}
    Therefore
    \begin{align*}
        \frac{\partial \EXPvalue(\blambda)}{\partial \lambda_k}=-\int\left(\frac{x}{1+\lambda_k}\right)^2\left(h\left(\frac{x}{1+\lambda_k}\right)L_k\left(\frac{x}{1+\lambda_k}\right)+H\left(\frac{x}{1+\lambda_k}\right)l_k\left(\frac{x}{1+\lambda_k}\right)\right)\,dF_k(x),
    \end{align*}
    where $l_k$ is the density function of $L_k$. 
    Since 
    \[
    l_k\left(x\right) = \sum_{i\ne k}(1+\lambda_i) f_i\left((1+\lambda_i)x\right)\prod_{j\ne i,k}F_j((1+\lambda_i)x), 
    \]
    we have
    \[
    \left|\frac{\partial \EXPvalue(\blambda)}{\partial \lambda_k}\right| \leq \bar{v}^2 \bar{h} + (K-1)\bar{v}^2\bar{f}.
    \]
    And for $i\neq k$, 
    \begin{align*}
        \frac{\partial \EXPvalue(\blambda)}{\partial \lambda_i} = \int \frac{x^2}{1+\lambda_k} H\left(\frac{x}{1+\lambda_k}\right)f_i\left(\frac{1+\lambda_i}{1+\lambda_k}x\right)\prod_{j\ne i,k}F_j\left(\frac{1+\lambda_j}{1+\lambda_k}x\right)\, dF_k(x),
    \end{align*}
    \[
    \left| \frac{\partial \EXPvalue(\blambda)}{\partial \lambda_k}\right| \leq \bar{v}^2 \bar{f}. 
    \]

    \item We write down the expression of $\EXPcostCP(\bxi)$ and calculate its derivatives with respect to $\xi_k$ and $\xi_i$ for $i\ne k$.
    \begin{align*}
    \EXPcostCP(\bxi) &= \int \int_{x\geq y} \bar{F}_k((1+\xi_k)x)y\,dL_k(x)\,dH(y)+\int \bar{F}_k((1+\xi_k)y)L_k(y)y\,dH(y)\\
    &=\int y \int_{x\geq y} (1+\xi_k)L_k(x) f_k((1+\xi_k)x)\,dx\,dH(y), 
    \end{align*}Therefore, we have
    \begin{align*}
        \frac{\partial \EXPcostCP(\bxi)}{\partial \xi_k} = -\int \int_{x \geq y}x f_k((1+\xi_k)x)y \, dL_k(x)\,dH(y) - \int y^2 f\left((1+\xi_k)y\right)L_k(y)\,dH(y).
    \end{align*}
    \[
    \left|\frac{\partial \EXPcostCP(\bv)}{\partial \xi_k} \right| \leq 2 \bar{v}^2\bar{f}.  
    \]
    And for $i \neq k$, 
\begin{align*}
    \frac{\partial \EXPcostCP(\bxi)}{\partial \xi_i} = \int y \int_{t \geq \left(1+\xi_k\right)y}t f_i\left(\frac{1+\xi_i}{1+\xi_k}t\right)f_k\left(t\right) \prod_{j\ne i,k}F_j\left(\frac{1+\xi_j}{1+\xi_k}t\right)\,dx \,dH(y),
\end{align*}
\[
    \left|\frac{\partial \EXPcostCP(\bxi)}{\partial \xi_i} \right| \leq  \bar{v}^2\bar{f}^2.  
\]
    
\end{enumerate}
\end{proof}

\begin{lemma}
\label{lem: lipschitz hp}
 $\EXPcostHP(\mu_k,\blambda)$ and $\EXPvalueHP(\mu_k,\blambda)$ are Lipschitz continuous. Moreover, if for all $k\in \calK$, bidder $k$'s density function has a lower bound $\underline{f} > 0$ over the interval $[0, \bar{v}_k]$, and density function $h$ also has a lower bound $\underline{h} > 0$ over the interval $[0, \bar{v}]$, then there exists constants $\underline{G}' > 0$ such that $$\frac{\partial \EXPcostHP(\mu_k, \blambda)}{\partial \mu_k} \leq - \underline{G}' < 0.$$
\end{lemma}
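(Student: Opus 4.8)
The plan is to reduce both $\EXPcostHP$ and $\EXPvalueHP$ to explicit one‑dimensional integrals and then differentiate under the integral sign, exactly along the lines of the proof of \Cref{lem: lipschitz cp}. Conditioning on $v_k$, the internal event $\{v_k/(1+\lambda_k)\ge d^I_k\}$ and the external event $\{v_k/(1+\mu_k)\ge d^O\}$ are independent, the former having probability $L_k\!\big(v_k/(1+\lambda_k);\blambda\big)=\prod_{i\in\calK: i\ne k}F_i\!\big((1+\lambda_i)v_k/(1+\lambda_k)\big)$, the cdf of $d^I_k$ (cf.\ \eqref{eqn: cdf of d i}). Writing $\Psi(t)\coloneqq\int_0^t y\,h(y)\,dy$ and $H$ for the cdf of $d^O$, this gives
\[
\EXPcostHP(\mu_k,\blambda)=\int_0^{\bar v_k} f_k(v)\,L_k\!\Big(\tfrac{v}{1+\lambda_k}\Big)\,\Psi\!\Big(\tfrac{v}{1+\mu_k}\Big)\,dv,\qquad
\EXPvalueHP(\mu_k,\blambda)=\int_0^{\bar v_k} v\,f_k(v)\,L_k\!\Big(\tfrac{v}{1+\lambda_k}\Big)\,H\!\Big(\tfrac{v}{1+\mu_k}\Big)\,dv .
\]

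For Lipschitz continuity I would differentiate these with respect to $\mu_k$, $\lambda_k$, and each $\lambda_i$ ($i\ne k$), using Leibniz's rule and the substitution $t=(1+\lambda_j)v/(1+\lambda_k)$ to pull the $\lambda_j$‑dependence inside $L_k$, precisely as in \Cref{lem: lipschitz cp}. Every resulting integrand is a product of cdfs (bounded by $1$), densities (bounded by $\overline f$ or $\overline h$), and powers of quantities of the form $v/(1+\cdot)\le\bar v$, so each partial derivative is bounded by a constant depending only on $\bar v$, $\overline f$, $\overline h$, and $K$; this yields joint Lipschitz continuity over $\prod_k[0,\bar\mu_k]$. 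For the monotonicity claim, Leibniz's rule applied to the first integral gives
\[
\frac{\partial \EXPcostHP(\mu_k,\blambda)}{\partial\mu_k}
=-\int_0^{\bar v_k} f_k(v)\,L_k\!\Big(\tfrac{v}{1+\lambda_k}\Big)\,\frac{v^2}{(1+\mu_k)^3}\,h\!\Big(\tfrac{v}{1+\mu_k}\Big)\,dv\ \le\ 0 ,
\]
which is manifestly nonpositive; to make it bounded away from $0$ I would restrict the integral to $v\in[\bar v_k/2,\bar v_k]$. On this range $f_k(v)\ge\underline f$, and $v/(1+\mu_k)\le\bar v_k\le\bar v$ forces $h(v/(1+\mu_k))\ge\underline h$, while $v^2/(1+\mu_k)^3\ge(\bar v_k/2)^2/(1+\bar\mu_k)^3$ since $\mu_k\le\bar\mu_k$. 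Finally, because $\lambda_i,\lambda_k\in[0,\bar\mu_k]$ the argument $(1+\lambda_i)v/(1+\lambda_k)$ is at least $c_0\coloneqq(\bar v_k/2)/(1+\bar\mu_k)>0$, and since each $F_i$ is strictly increasing with density at least $\underline f$ near the origin, $F_i\!\big((1+\lambda_i)v/(1+\lambda_k)\big)\ge F_i(c_0)\ge\underline f\min(c_0,\bar v_i)>0$; hence $L_k(v/(1+\lambda_k))\ge c_1\coloneqq\prod_{i\ne k}\underline f\min(c_0,\bar v_i)>0$. Integrating the product of these lower bounds over an interval of length $\bar v_k/2$ produces an explicit constant $\underline G'>0$ with $\partial\EXPcostHP/\partial\mu_k\le-\underline G'$, as required.

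The hard part will be the bookkeeping around the support boundaries in the last step: one must verify that the arguments handed to $f_k$, $h$, and the $F_i$'s genuinely land inside the intervals where the assumed density lower bounds hold, and that the resulting lower bound on $L_k$ is uniform over the whole admissible region for $\blambda$ (which is where $\lambda_i,\lambda_k\le\bar\mu$ is used). Everything else is a routine differentiation‑under‑the‑integral computation mirroring \Cref{lem: lipschitz cp}.
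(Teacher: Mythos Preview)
Your proposal is correct and follows essentially the same approach as the paper: both express $\EXPcostHP$ and $\EXPvalueHP$ as one--dimensional integrals involving $L_k$, $H$, and $f_k$, differentiate under the integral sign to obtain uniform bounds on all partial derivatives, and then lower--bound the integrand of $\partial\EXPcostHP/\partial\mu_k$ on a subinterval to extract the constant $\underline{G}'$. The only cosmetic differences are the choice of integration variable (you integrate over $v=v_k$, the paper over $y=d^O$; the two derivative formulas are related by the substitution $v=(1+\mu_k)y$) and the subinterval used for the lower bound (you take $v\in[\bar v_k/2,\bar v_k]$ and bound $L_k$ from below via $F_i(c_0)\ge\underline f\min(c_0,\bar v_i)$, whereas the paper takes $y$ small and bounds $L_k$ via a power of $y$); both choices work and yield different but equally valid explicit constants.
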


\begin{proof}
In hybrid coordinated pacing algorithms, we still denote by $L_k$ the cumulative distribution function of $d^I_k$:
\begin{align}
\label{eqn: cdf of d i hp}
L_k(x; \mu_k,\blambda) = \prod_{i\in \calK: i\ne k}F_i((1+\lambda_i)x).
\end{align}
Note that $L_k$ is independent of $\mu_k$. We omit $\mu_k$ and $\blambda$ when there is no confusion. In what follows, we denote by $\overline{f}$ the upper bound on $f_k$ and by $\overline{h}$ the upper bound on $h_k$. We use $\bar{F}_k$ to denote the omplementary cumulative distribution function of $v_k$.

We first consider bidder $k$'s expected expenditure under \Cref{alg: hybrid coordinated pacing} and calculate its derivatives with respect to $\mu_k$, $\lambda_k$ and $\lambda_i$ for $i\ne k$.
\begin{align}
\EXPcostHP(\mu_k,\blambda)& =\int_y y \int_{x\geq \frac{1+\mu_k}{1+\lambda_k}y}\bar{F}_k((1+\lambda_k)x)\,dL_k(x)\,dH(y) \notag \\
& +\int_y y \bar{F}_k((1+\mu_k)y)L_k\left(\frac{1+\mu_k}{1+\lambda_k}y\right)\,dH(y) \label{eqn: G hp first expression}\\
& = \int_y y\int_{x\geq \frac{1+\mu_k}{1+\lambda_k}y}(1+\lambda_k)f_k((1+\lambda_k)x)L_k(x)\,dx \,dH(y). \label{eqn: G hp second expression}
\end{align}
\begin{enumerate}
    \item Using the second expression \eqref{eqn: G hp second expression} we obtain that
    \[
    \frac{\partial \EXPcostHP(\mu_k, \blambda)}{\partial \mu_k} = -\int_y y^2 f_k\left(\left(1+\mu_k\right)y\right)L_k\left(\frac{1+\mu_k}{1+\lambda_k}y\right)\,dH(y),
    \]
    and thus,
    \[
    \left|\frac{\partial \EXPcostHP(\mu_k, \blambda)}{\partial \mu_k}\right|\leq \bar{v}_k^2 \bar{f}. 
    \]
    Let $\underline{v} = \min_{i\in \calK} \bar{v}_i$ and $\bar{\mu} = \max_{i\in \calK} \bar{\mu}_i$. When $y \leq \underline{v}/(1+\bar{\mu})$ one has $$\frac{(1+\lambda_i)(1+\mu_k)}{1+\lambda_k}y \leq (1+\lambda_i)y \leq \underline{v} \leq \bar{v}_i, $$ and consequently, 
    \begin{align}
        L_k\left(\frac{1+\mu_k}{1+\lambda_k}y\right) = & \prod_{i\in \calK: i\ne k}F_i\left(\frac{(1+\lambda_i)(1+\mu_k)}{1+\lambda_k}y\right) \notag \\
        \geq & \left(\frac{(1+\lambda_i)(1+\mu_k)}{1+\lambda_k}y\underline{f}\right)^{K-1} \geq \left(\frac{y\underline{f}}{1+\bar{\mu}}\right)^{K-1}. \label{eqn: L hp bound}
    \end{align}
    We can upper bound the derivative $\partial \EXPcostHP(\mu_k, \blambda)/\partial \mu_k$ as follows:
    \begin{align*}
    \label{eqn: G derivative in mu}
        \frac{\partial \EXPcostHP(\mu_k, \blambda)}{\partial \mu_k} \overset{(\text a)}{\leq} & - \underline{f}\underline{h}\int_{0}^{\bar{v}_k/(1+\mu_k)} y^2L_k(\frac{1+\mu_k}{1+\lambda_k}y)\,dy \\
        \overset{(\text b)}{\leq} & - \underline{f}\underline{h}\int_{0}^{\underline{v}/(1+\bar{\mu})} y^2\left(\frac{y\underline{f}}{1+\bar{\mu}}\right)^{K-1}\,dy \\
        = & - \frac{\underline{v}^{K+2}\underline{f}^{K}\underline{h}}{(K+2)(1+\bar{\mu})^{2K}},
    \end{align*}
    where (a) follows $f_k \geq \underline{f}$ and $h_k \geq \underline{h}$ when $(1+\mu_k)y \leq \bar{v}_k$; (b) follows from \eqref{eqn: L hp bound} and $\bar{v}_k/(1+\mu_k) \geq \underline{v}/(1+\bar{\mu})$.
    \item Using the first expression \eqref{eqn: G hp first expression} we obtain that
    \begin{align*}
        \frac{\partial \EXPcostHP(\mu_k, \blambda)}{\partial \lambda_k}= - \int_y y \int_{x \geq \frac{1+\mu_k}{1+\lambda_k}y}xf_k\left((1+\lambda_x) x\right) \, dL_k(x)\, dH(y), 
    \end{align*}
    and thus,
    \[
    \left|\frac{\partial \EXPcostHP(\mu_k, \blambda)}{\partial \lambda_k}\right|\leq \bar{v}_k^2 \bar{f}. 
    \]
    \item Using the second expression \eqref{eqn: G hp second expression} we obtain that for $i\neq k$, 
    \begin{align*}
        \frac{\partial \EXPcostHP(\mu_k, \blambda)}{\partial \lambda_i}=\int_y y\int_{x\geq \frac{1+\mu_k}{1+\lambda_k}y}(1+\lambda_k)f_k((1+\lambda_k)x)\frac{\partial L_k(x)}{\partial \lambda_i}\,dx \,dH(y),
    \end{align*}
    where the derivative of $L_k(x)$ satisfies
    \begin{align*}
    \frac{\partial L_k(x; \mu_k,\blambda)}{\partial \lambda_i} = xf_i((1+\lambda_i)x)\prod_{i\in \calK: j\ne i, k}F_j((1+\lambda_j)x) \leq \bar{v}_k\bar{f}.
    \end{align*}
    Therefore, we have
    \[
    \left|\frac{\partial \EXPcostHP(\mu_k, \blambda)}{\partial \lambda_k}\right|\leq \bar{v}_k^2\bar{f}. 
    \]
\end{enumerate}

We next consider $\EXPvalueHP(\mu_k, \blambda)$ and its derivatives.

\begin{align}
    \EXPvalueHP(\mu_k,\blambda) = \int_x x L_k\left(\frac{x}{1+\lambda_k}\right)H\left(\frac{x}{1+\mu_k}\right)\,dF_k(x). \label{eqn: V hp expression}
\end{align}

\begin{enumerate}
    \item We have that 
    \begin{align*}
        \frac{\partial \EXPvalueHP(\mu_k, \blambda)}{\partial \mu_k} = - \int_x \left(\frac{x}{1+\mu_k}\right)^2 L_k\left(\frac{x}{1+\lambda_k}\right)h\left(\frac{x}{1+\mu_k}\right)\,dF_k(x), 
    \end{align*}
    and thus, 
    \[
    \left|\frac{\partial \EXPvalueHP(\mu_k, \blambda)}{\partial \mu_k}\right| \leq \bar{v}_k^2\bar{h}. 
    \]
    \item We have
    \begin{align*}
        \frac{\partial \EXPvalueHP(\mu_k, \blambda)}{\partial \lambda_k} = - \int_x \left(\frac{x}{1+\lambda_k}\right)^2 l_k\left(\frac{x}{1+\lambda_k}\right)H\left(\frac{x}{1+\mu_k}\right)\,dF_k(x), 
    \end{align*}
    and thus, 
    \[
    \left|\frac{\partial \EXPvalueHP(\mu_k, \blambda)}{\partial \lambda_k}\right| \leq (K-1)\bar{v}_k^2\bar{f}. 
    \]
    \item For $i\neq k$, the derivative with respect to $\lambda_i$ is 
    \begin{align*}
        \frac{\partial \EXPvalueHP(\mu_k, \blambda)}{\partial \lambda_i}=\int \frac{x^2}{1+\lambda_k}f_i\left(\frac{1+\lambda_i}{1+\lambda_k}x\right)\prod_{j\ne i,k} F_j \left(\frac{1+\lambda_j}{1+\lambda_k}x\right)H\left(\frac{x}{1+\mu_k}x\right)\, dF_k(x),
    \end{align*}
    and thus,
    \[
    \left|\frac{\partial \EXPvalueHP(\mu_k, \blambda)}{\partial \lambda_i}\right|\leq \bar{v}_k^2\bar{f}. 
    \]
\end{enumerate}

\end{proof}

\begin{corollary}
\label{cor: lipschitz}
$\EXPutility(\blambda)$, $\EXPutilityCP(\blambda)$ and $\EXPutilityHP(\mu_k,\blambda)$ are Lipschitz continuous.
\end{corollary}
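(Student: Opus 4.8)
The plan is to reduce \Cref{cor: lipschitz} directly to the two preceding Lipschitz lemmas, observing that every expected-utility functional is, up to the harmless multiplicative constant $T$, the difference of an expected-value functional and an expected-cost functional. Unwinding the definitions, the winning event of bidder $k$ inside the indicator is the same in the value term, the cost term, and the utility term, and $v_k-(\text{payment})$ splits linearly, so pointwise
\[
\EXPutility(\blambda)=T\bigl(\EXPvalue(\blambda)-\EXPcost(\blambda)\bigr),
\qquad
\EXPutilityHP(\mu_k,\blambda)=T\bigl(\EXPvalueHP(\mu_k,\blambda)-\EXPcostHP(\mu_k,\blambda)\bigr).
\]
For the CP functional, the event $\{v_k\geq(1+\xi_k)d_k\}$ with $d_k=\max\{\max_{i\neq k}v_i/(1+\xi_i),\,d^O\}$ is exactly the event defining $\EXPvalue$ and $\EXPcostCP$ at the point $\bxi$, so $\EXPutilityCP(\bxi)=T\bigl(\EXPvalue(\bxi)-\EXPcostCP(\bxi)\bigr)$; in particular the ``CP value function'' is not a new object but equals $\EXPvalue$.

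Given these identities, the corollary follows from closure of Lipschitz functions under finite linear combinations. \Cref{lem: lipschitz cp} provides Lipschitz continuity (with explicitly bounded partial derivatives) of $\EXPvalue$, $\EXPcost$, and $\EXPcostCP$; \Cref{lem: lipschitz hp} does the same for $\EXPvalueHP$ and $\EXPcostHP$. All of these are defined on a compact convex box ($\prod_k[0,\bar{\mu}_k]$ for the $\blambda$/$\mu_k$-arguments, $\prod_k[0,\bar{\xi}_k]$ for the $\bxi$-argument), so uniformly bounded gradients already yield a global Lipschitz constant by the mean value theorem, and summing the per-coordinate bounds from the lemmas gives an explicit Lipschitz modulus for $\EXPutility$, $\EXPutilityCP$, and $\EXPutilityHP$.

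I do not expect a genuine obstacle: all the analytic work sits in \Cref{lem: lipschitz cp} and \Cref{lem: lipschitz hp}, and the corollary is bookkeeping. The only points that need a moment of care are (i) recognizing that the CP value term is $\EXPvalue$ rather than a separately-bounded function, and (ii) tracking the factor $T$, which merely scales the Lipschitz constant. One could additionally note $\EXPvalueHP(\lambda_k,\blambda)=\EXPvalue(\blambda)$, which is the same bookkeeping used later in the proof of \Cref{thm: dominance of hp}.
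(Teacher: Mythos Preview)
Your approach is correct and essentially identical to the paper's: decompose each $\EXPutility$-type functional as $T\cdot(\text{value}-\text{cost})$ and invoke \Cref{lem: lipschitz cp} and \Cref{lem: lipschitz hp}. Your explicit identification that the CP value term coincides with $\EXPvalue(\bxi)$ is a useful clarification, since the paper writes $\EXPvalueCP(\bxi)$ in the corollary proof but only established Lipschitz continuity for $\EXPvalue$ in \Cref{lem: lipschitz cp}; you are right that these are the same function, which is what makes the paper's argument go through.
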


\begin{proof}
The results directly follow from \Cref{lem: lipschitz cp} and \Cref{lem: lipschitz hp} since 
\begin{align*}
    \frac{1}{T}\EXPutility(\blambda) &= \EXPvalue(\blambda)-\EXPcost(\blambda);\\
    \frac{1}{T}\EXPutilityCP(\bxi) &= \EXPvalueCP(\bxi)-\EXPcostCP(\bxi);\\
    \frac{1}{T}\EXPutilityHP(\mu_k,\blambda) &=\EXPvalueHP(\mu_k,\blambda) - \EXPcostHP(\mu_k,\blambda). 
\end{align*}
\end{proof}

\begin{lemma}
\label{lem: strictly pareto}
Suppose that \Cref{asm: hp} holds. We have
\begin{align*}
    \EXPcostHP(\lambda^*_k, \blambda^*) \leq \EXPcost(\blambda^*), \forall k,
\end{align*}
and the equality strictly holds for at most one bidder.
\end{lemma}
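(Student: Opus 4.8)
The plan is to establish the inequality by a pointwise comparison of integrands, and then the rigidity ("equality for at most one bidder") by a contradiction argument that terminates in a failure of strong monotonicity of $\bm{G}$.

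First I would rewrite $\EXPcostHP(\lambda^*_k, \blambda^*)$. Writing $a_i \coloneqq v_i/(1+\lambda^*_i)$ and $d^I_k = \max_{i\ne k} a_i$, the two events inside the indicator of \eqref{eqn: hp exp cost definition} intersect to $\{a_k \ge \max(d^I_k, d^O)\} = \{v_k \ge (1+\lambda^*_k) d_k\}$ with $d_k = \max(d^I_k, d^O)$, so $\EXPcostHP(\lambda^*_k, \blambda^*) = \bbE[\bm{1}\{v_k \ge (1+\lambda^*_k) d_k\}\,d^O]$ whereas $\EXPcost(\blambda^*) = \bbE[\bm{1}\{v_k \ge (1+\lambda^*_k) d_k\}\,d_k]$. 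Since $d^O \le d_k$ holds surely, the inequality follows, and the gap equals $\bbE[\bm{1}\{d^I_k > d^O\}\,\bm{1}\{a_k \ge d^I_k\}\,(d^I_k - d^O)]$ (using that $d_k = d^I_k$ on $\{d^I_k > d^O\}$). Hence equality holds for bidder $k$ if and only if $\bbP[a_k \ge \max_{i\ne k} a_i > d^O] = 0$, i.e., whenever bidder $k$ wins the internal election the runner-up's internal bid is almost surely at most $d^O$.

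Next, suppose for contradiction that equality holds for two bidders $j \ne k$. Restricting the equality event for $j$ to the sub-event where bidder $k$ is the runner-up gives $\bbP[a_j > a_k > c] = 0$, where $c \coloneqq \max(d^O, \max_{l \ne j,k} a_l)$ (indeed, on $\{a_j > a_k > c\}$ one has $\max_{i\ne j}a_i = a_k > c \ge d^O$); symmetrically $\bbP[a_k > a_j > c] = 0$. Adding these and discarding the null tie set $\{a_j=a_k\}$ yields $\bbP[\min(a_j, a_k) > c] = 0$, and since $(a_j,a_k)$ is independent of $c$ this reads $\bbE[\bbP[a_j > c \mid c]\,\bbP[a_k > c \mid c]] = 0$. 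Because $F_i$ has support $[0,\bar v_i]$, we have $\bbP[a_i > x] > 0$ for every $x < m_i \coloneqq \bar v_i/(1+\lambda^*_i)$, so this forces $c \ge \min(m_j, m_k)$ almost surely; without loss of generality $m_j \le m_k$, so $c \ge m_j$ almost surely.

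The final — and most delicate — step turns $c \ge m_j$ a.s. into a violation of the first part of \Cref{asm: hp}. Since $a_j \le m_j \le c = \max(d^O, \max_{l\ne j,k} a_l)$, on the event that bidder $j$ is the internal winner (where in particular $a_j \ge \max_{l\ne j,k}a_l$) the maximum defining $c$ must a.s. be attained by $d^O$, giving $d^O \ge a_j$ a.s. there; together with the external-winning condition $a_j \ge d^O$ this leaves only the null set $\{a_j = d^O\}$, so bidder $j$ wins nothing in IP at $\blambda^*$. Thus $G_j(\blambda^*) = 0 < \rho_j$, and the complementarity \eqref{eqn: lambda star definition} forces $\lambda^*_j = 0$, whence $m_j = \bar v_j$. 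Repeating the same estimate with $a_j$ replaced by $v_j/(1+\lambda_j)$ for an arbitrary $\lambda_j \in [0,\bar\mu_j]$ while keeping the other coordinates at $\blambda^*$ (so $c$ is unchanged and $v_j/(1+\lambda_j) \le \bar v_j \le c$ a.s.) shows $G_j(\lambda_j, \blambda^*_{-j}) = 0$ for every $\lambda_j \in [0,\bar\mu_j]$. Taking $\lambda_j \ne \lambda'_j$ in $[0,\bar\mu_j]$ and all other coordinates equal to $\blambda^* \in \prod_l [0,\bar\mu_l]$ then gives $(\blambda - \blambda')^\top(\bm{G}(\blambda') - \bm{G}(\blambda)) = 0 < \gamma\|\blambda - \blambda'\|_2^2$, contradicting the $\gamma$-strong monotonicity of $\bm{G}$. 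Therefore equality can hold for at most one bidder. I expect the bookkeeping of the measure-zero tie events, and in particular the upgrade of "bidder $j$ wins nothing" from the single point $\blambda^*$ to the entire segment $\{(\lambda_j, \blambda^*_{-j}) : \lambda_j \in [0,\bar\mu_j]\}$ that is needed to invoke strong monotonicity, to be the main obstacle.
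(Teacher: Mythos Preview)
Your argument is correct. The route differs from the paper's in both halves, though both proofs ultimately terminate in the same contradiction with the strong monotonicity of $\bm G$.

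For the inequality, you observe that when $\mu_k=\lambda^*_k$ the indicator in $\EXPcostHP$ coincides with that in $\EXPcost$, so the difference is exactly $\bbE[\bm 1\{a_k\ge d_k\}(d_k-d^O)]\ge 0$ pointwise. The paper instead derives an explicit integral expression $\Delta_k(\blambda)=\int \bar F_k((1+\lambda_k)x)\,l_k(x)\bigl(\int_{y\le x}H(y)\,dy\bigr)\,dx$ via integration by parts. Your pointwise comparison is cleaner; the paper's formula has the mild advantage that it names the gap in terms of the primitives $F_k,L_k,H$, which they then use in the rigidity step.

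For the rigidity, the paper argues constructively: it first shows $H(\hat v_k/(1+\lambda^*_k))>0$ for every $k$, then singles out $k'=\arg\min_k \hat v_k/(1+\lambda^*_k)$ and proves $\Delta_k(\blambda^*)>0$ for every $k\ne k'$ by showing that otherwise $l_k$ would vanish on an interval on which bidder $k'$'s bid lands with positive probability, forcing $G_{k'}\equiv 0$ on a segment. You instead assume two bidders $j,k$ have zero gap, restrict each null event to the sub-event where the other is the runner-up, combine them by independence of $(a_j,a_k)$ from $c=\max(d^O,\max_{l\ne j,k}a_l)$ to get $c\ge\min(m_j,m_k)$ almost surely, and from there force $G_j(\lambda_j,\blambda^*_{-j})\equiv 0$ on $[0,\bar\mu_j]$. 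Both routes reach the same contradiction; yours is symmetric in the two offenders and avoids naming the exceptional bidder up front, while the paper's identifies explicitly \emph{which} bidder can be the exception (the one with the smallest $m_k$).

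One cosmetic point: you invoke $\bbP[a_i>x]>0$ for all $x<m_i=\bar v_i/(1+\lambda^*_i)$, which presumes $\bar v_i$ is the essential supremum of $v_i$. The paper hedges by defining $\hat v_k=\sup\{v:\bar F_k(v)>0\}$ and working with $\hat v_k/(1+\lambda^*_k)$; your argument goes through verbatim with $m_i$ redefined this way.
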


\begin{proof}

In what follows, we use $L_k(x)$ to denote the cumulative distribution function of $d^I_k$ defined by \eqref{eqn: cdf of d i hp} and $l_k(x)$ is the corresponding density function. We analyze how much $\EXPcostHP(\lambda_k, \blambda)$ and $\EXPcost(\blambda)$ differ by their expressions.
\begin{align*}
    \EXPcostHP(\lambda_k, \blambda) & \overset{(\text a)}{=} \int_x (1+\lambda_k)f_k((1+\lambda_k)x)L_k(x) \left(\int_{y\leq x}y\,dH(y)\right) \,dx \\
    & \overset{(\text b)}{=} \int_x (1+\lambda_k)xf_k((1+\lambda_k)x)L_k(x)H(x)\,dx \\
    & - \int_x (1+\lambda_k)f_k((1+\lambda_k)x)L_k(x) \left(\int_{y\leq x}H(y)\,dy\right) \,dx \\
    & \overset{(\text c)}{=} \int_x \left((1+\lambda_k)xf_k((1+\lambda_k)x) - \bar{F}_k((1+\lambda_k)x)\right)L_k(x)H(x)\,dx \\
    & - \int_x \bar{F}_k((1+\lambda_k)x)l_k(x) \left(\int_{y\leq x}H(y)\,dy\right) \,dx \\
    & \overset{(\text d)}{=} \EXPcost(\blambda) - \underbrace{\int_x \bar{F}_k((1+\lambda_k)x)l_k(x) \left(\int_{y\leq x}H(y)\,dy\right) \,dx}_{\Delta_k(\blambda)}.
\end{align*}
where (a) follows from \eqref{eqn: G hp second expression} and changes the order of integration; (b) and (c) use integration by parts; and (d) follows from \eqref{eqn: G second expression}. Observe that $\Delta_k(\blambda)$ is always non-negative. Moreover, we claim that when \Cref{asm: hp} holds, $\Delta_k(\blambda^*)$ must be strictly positive for all but one bidder, by which the proof can be completed. 

To prove the claim, we first show that for every bidder $k\in \calK$, 
\begin{align}
\label{eqn: everyone can beat outside with positive prob}
    H\left(\frac{\hat{v}_k}{1+\lambda^*_k}\right) > 0,
\end{align}
where $\hat{v}_k \coloneqq\sup\{v_k : \bar{F}_k(v_k) > 0\}$. Otherwise, $H(\hat{v}_k/(1+\lambda^*_k)) = 0$ implies that bidder $k$ can, almost surely, never win the auction by bidding $v_k/(1+\lambda^*_k)$ or lower when bidders are independently bidding. Note that $\lambda^*_k < \bar{\mu}_k$ because
\begin{align*}
    \EXPcost\left(\bar{\mu}_k, \blambda^*_{-k}\right) = \bbE \left[\bm{1}\left\{\frac{v_k}{1+\bar{\mu}_k}\geq d_k\right\}d_k\right] \overset{(\text a)}{\leq} \bbE \left[\bm{1}\left\{\frac{v_k}{1+\bar{v}_k/\rho_k}\geq d_k\right\}d_k\right] \leq \frac{v_k}{1+\bar{v}_k/\rho_k} < \rho_k,
\end{align*}
where (a) follows from $\bar{\mu}_k \geq \bar{v}_k/\rho_k$. Thus, $\EXPcost((\lambda_k, \blambda^*_{-k})) \equiv 0$ on $[\lambda^*_k, \bar{\mu}_k]$ contradicts the strong monotonicity assumption. 

Let $k'=\arg\min_k \hat{v}_k/(1+\lambda^*_k)$ and let $\underline{d}^O \coloneqq \inf\{d^O: H(d^O) > 0\}$. Note that by \eqref{eqn: everyone can beat outside with positive prob} and the continuity of $H$, we must have $\underline{d}^O < \hat{v}_{k'}/(1+\lambda^*_{k'})$. For every $k \neq k'$, when $x \in (\underline{d}^O, \hat{v}_{k'}/(1+\lambda^*_{k'}))$, we have \begin{enumerate}
    \item $\bar{F}_k((1+\lambda^*_k)x) > 0$;
    \item $\int_{y\leq x}H(y)\,dy > 0$.
\end{enumerate}
Suppose for contradiction that for $k \neq k'$, $\Delta_k(\blambda^*) = 0$. Then $l_k(x) = 0$ holds almost everywhere on $[\underline{d}^O, \hat{v}_{k'}/(1+\lambda^*_{k'})]$, i.e., the competing bid $d^I_k$ lies in this interval with zero probability. However, $b_{k'} =  v_{k'}/(1+\lambda^*_{k'})$ belongs to $[\underline{d}^O, \hat{v}_{k'}/(1+\lambda^*_{k'})]$ with positive probability by the definition of $\hat{v}_{k'}$. This implies there is almost always another bid higher than $b_{k'}$ so that $b_{k'}$ almost never becomes $d^I_k$, which further implies that bidder $k'$ actually can, almost surely, never win in IP by bidding $v_{k'}/(1+\lambda^*_{k'})$ or lower. Again by a similar argument, this contradicts the strong monotonicity assumption. Therefore, $\Delta_k(\blambda^*) > 0$ holds at least for all bidders except bidder $k'$.
\end{proof}
\section{More Experiment Results on Synthetic Data}
\label{sec: app-exp}

\begin{figure*}[t]
\begin{subfigure}{\textwidth}
    \centering
    \includegraphics[width=\linewidth]{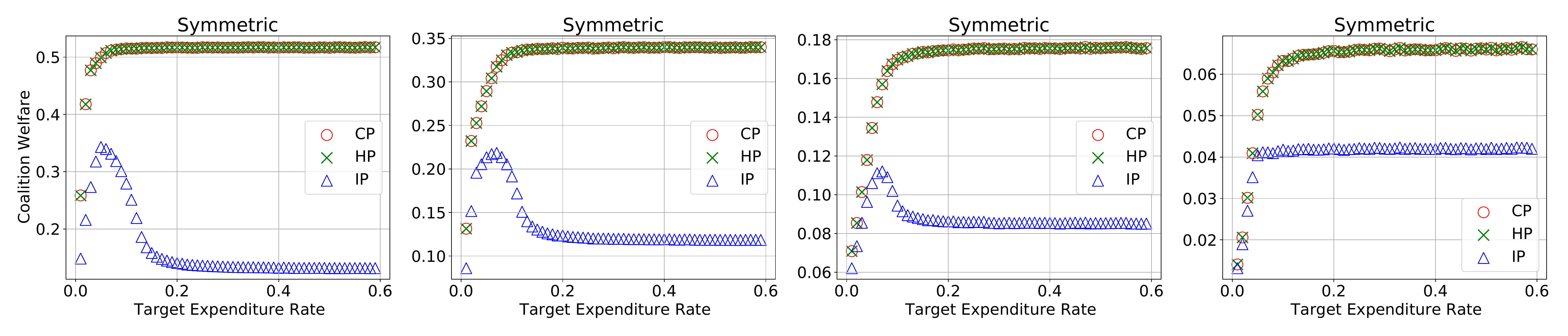}
    \caption{Symmetric cases: coalition welfare with different means of the highest bid outside the coalition: $0.2,0.4,0.6,0.8$.}
    \label{fig:Appendix.Symmetric}
\end{subfigure}
\begin{subfigure}{\textwidth}
    \centering
    \includegraphics[width=\linewidth]{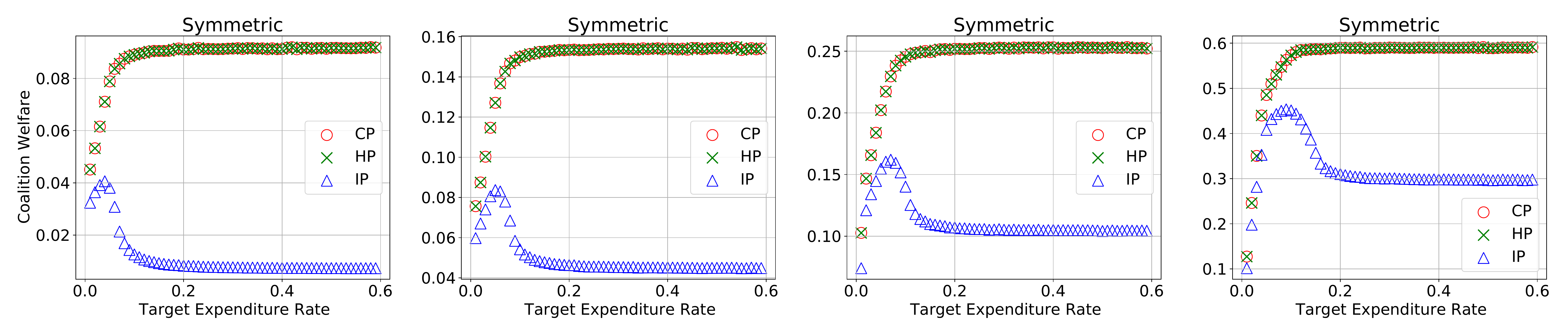}
    \caption{Symmetric cases: coalition welfare with different variances of bidders' values: $0.02, 0.1, 0.2, 0.5$. }
    \label{fig:Appendix.Symmetricvariance}
\end{subfigure}
\begin{subfigure}{\textwidth}
    \centering
    \includegraphics[width=1\linewidth]{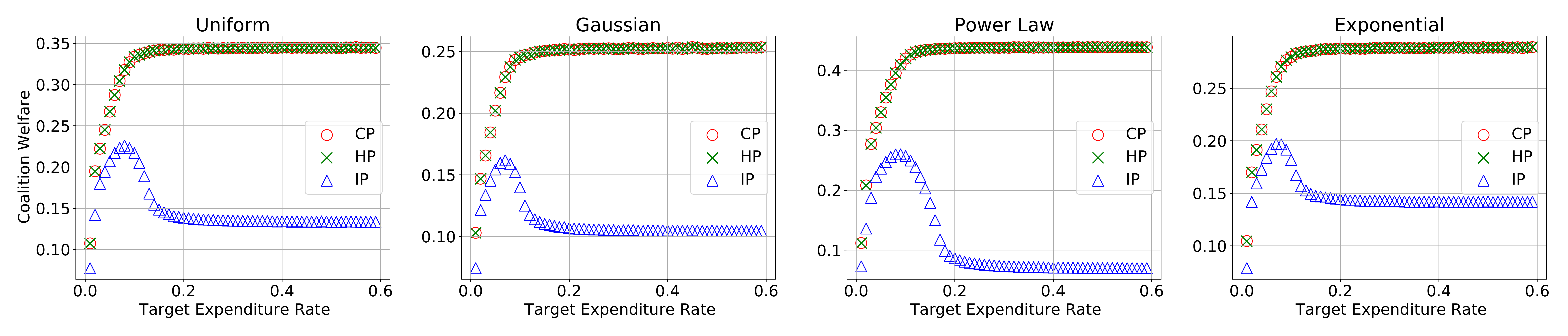}
    \caption{Symmetric cases: coalition welfare with different members' value distributions: Uniform, Gaussian, power law distribution, and exponential.}
    \label{fig:Appendix.Symmetricdistri1}
\end{subfigure}
\caption{Experiment Results on Symmetric Synthetic Data}
\label{fig:symmetric}
\end{figure*}

We present more experiment results on synthetic data here. All the following cases have the same number of coalition members: $K=5$.

\Cref{fig:symmetric} show the results on symmetric cases, where all $5$ members have the same value distribution and the same budget. Specifically, \Cref{fig:Appendix.Symmetric} shows the coalition welfare when the distribution of the highest bid outside the coalition $d^O$ varies. From left to right, $d^O$ is sampled from Gaussian distribution with $\mu$ equals to $0.2$, $0.4$, $0.6$, $0.8$ respectively. With the mean increases, bidders' power outside the coalition becomes stronger. 

In \Cref{fig:Appendix.Symmetricvariance}, the coalition members' value distributions are all Gaussian distributions with $\mu = 0.5$ but they have different standard deviations $\sigma$. From left to right, $\sigma$ equals to $0.02$, $0.1$, $0.2$, $0.5$ respectively.

The four subfigures in \Cref{fig:Appendix.Symmetricdistri1} compare the coalition welfare when members' values are sampled from different distributions: uniform distribution, Gaussian distribution, power law distribution, and exponential distribution.

\Cref{fig:Appendix.Asymmetric} shows four asymmetric cases. Different from above, the target expenditure rates for coalition members have the same expectation rather than being completely the same. 
From left to right, coalition member's values are sampled from totally the same distributions, same distributions with different means, and completely different distributions, respectively. 

In \Cref{fig:Appendix.Asymetric1}, each member's value is sampled from Gaussian distribution with $\mu=0.5$, $\sigma=0.2$. 
In \Cref{fig:Appendix.Asymetric2}, the five members' value are respectively sampled from Gaussian distributions with $\mu$ equals to $0.1$, $0.3$, $0.5$, $0.7$, $0.9$.
In \Cref{fig:Appendix.Asymetric3}, the five members' value are respectively sampled from uniform distributions on $[0, 0.2]$, $[0.2, 0.4]$, $[0.4, 0.6]$, $[0.6, 0.8]$, $[0.8, 1]$.
In \Cref{fig:Appendix.Asymetric4}, the value functions are more complicated. We choose one uniform distribution, one Gaussian distribution, two power law distributions, and one exponential distribution. We adjust the parameters so that the value expectation of all members is $0.5$, which is the same as the others in \Cref{fig:Appendix.Asymmetric}.

To verify if the averages of the multipliers indeed converge to the profile defined by the non-linear complementarity conditions \eqref{eqn: lambda star definition}, \eqref{eqn: xi definition} and \eqref{eqn: mu star definition}, we check the variances of the averages and the empirical value of $(\sum^T_{t=1}\lambda_{k,t}/T)(\rho_k-\EXPcost(\sum^T_{t=1}\blambda_{t}/T))$ ($(\sum^T_{t=1}\xi_{k,t}/T)(\rho_k-\EXPcostCP(\sum^T_{t=1}\bxi_{t}/T))$ and $(\sum^T_{t=1}\mu_{k,t}/T)(\rho_k-\EXPcostHP(\sum^T_{t=1}\blambda_{t}/T, \sum^T_{t=1}\mu_{k,t}/T))$).  To approximate $\EXPcost(\sum^T_{t=1}\blambda_{t})$, we use the empirical average expenditure $\sum^T_{t=1}z_{k,t}/T$. The same goes for $\EXPcost(\sum^T_{t=1}\bxi_{t})$ and $\EXPcost(\sum^T_{t=1}\bxi_{t})$. In the experiments, the variances of the last 100 rounds of $\sum\lambda_{k,t}/T$ ($\sum\xi_{k,t}/T$ and $\sum\mu_{k,t}/T$) are around $10^{-4}$, indicating the convergence of multipliers. The values of the last 100 rounds of $(\sum\lambda_{k,t}/T)(\rho_k-\sum z_{k,t}/T)$ ($(\sum\xi_{k,t}/T)(\rho_k-\sum z_{k,t}/T)$ and $(\sum\mu_{k,t}/T)(\rho_k-\sum z_{k,t}/T)$) are less than $10^{-3}$, indicating the convergence to the point defined by \eqref{eqn: lambda star definition}, \eqref{eqn: xi definition} and \eqref{eqn: mu star definition}. 

\begin{figure*}[t]
    \begin{subfigure}{0.246\linewidth}
        \centering
        \includegraphics[width = \linewidth]{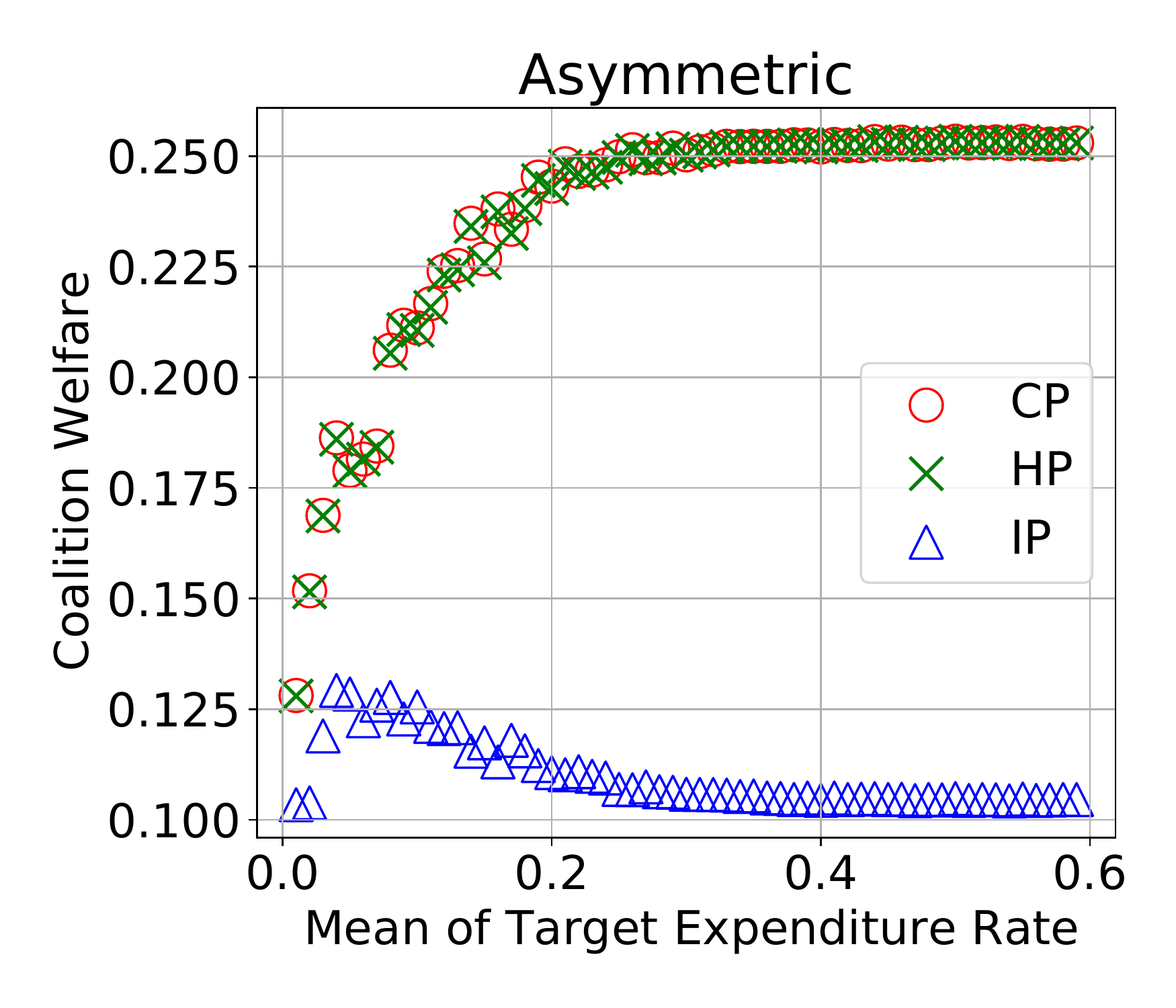}
        \caption{Gaussian with same means. }
        \label{fig:Appendix.Asymetric1}
    \end{subfigure}
    \begin{subfigure}{0.246\linewidth}
        \centering
        \includegraphics[width = \linewidth]{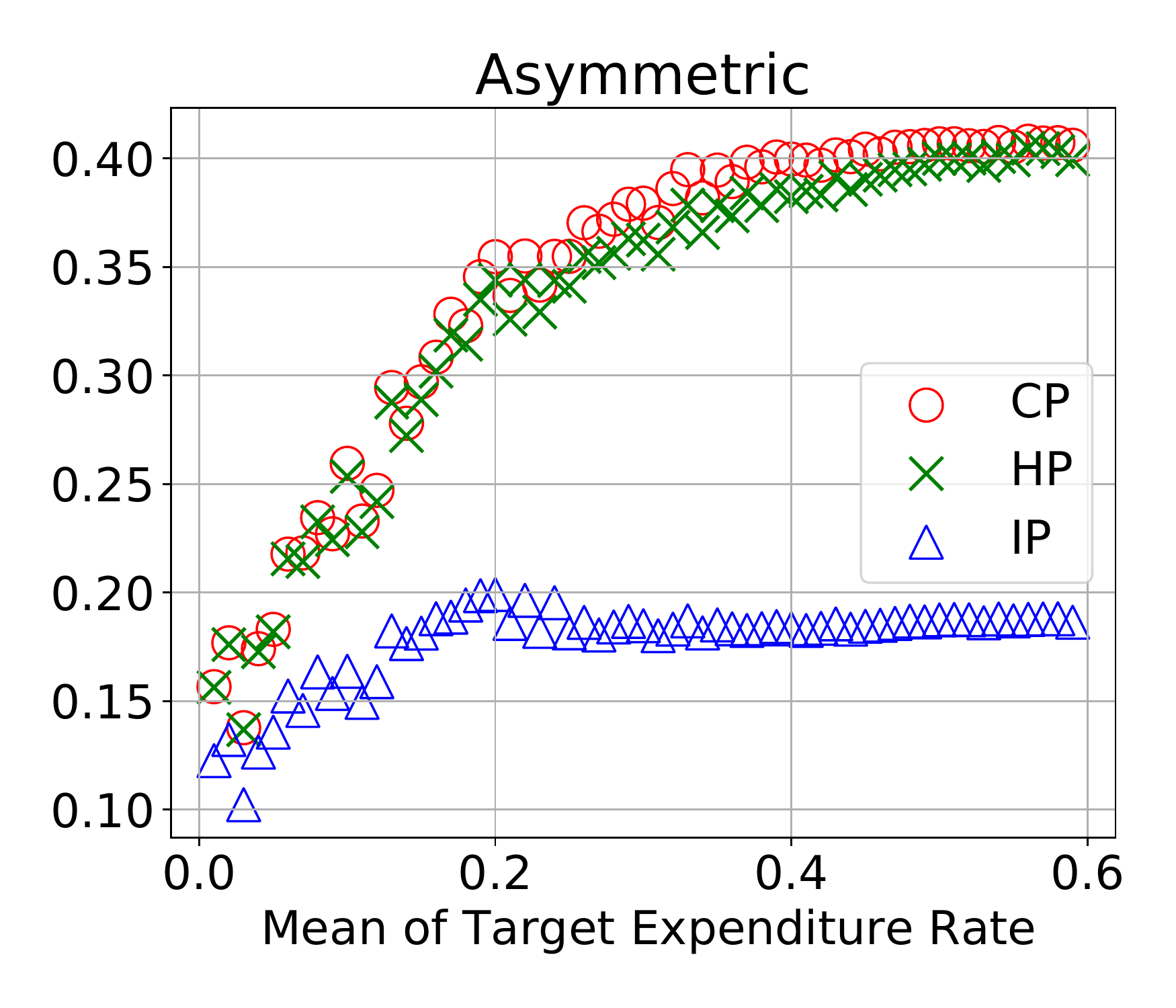}
        \caption{Gaussian with different means. }
        \label{fig:Appendix.Asymetric2}
    \end{subfigure}
    \begin{subfigure}{0.246\linewidth}
        \centering
        \includegraphics[width = \linewidth]{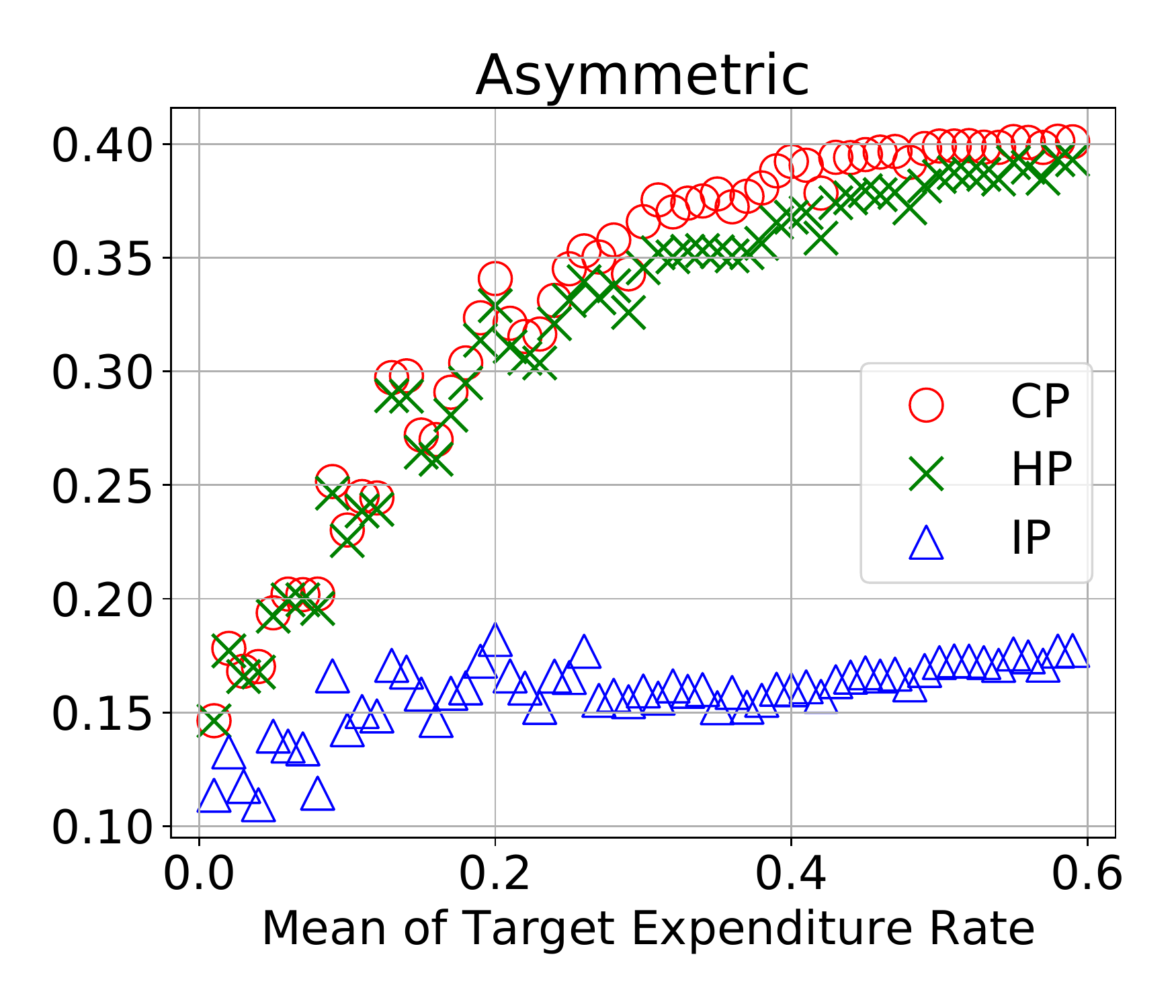}
        \caption{Uniform with different means. }
        \label{fig:Appendix.Asymetric3}
    \end{subfigure}
    \begin{subfigure}{0.246\linewidth}
        \centering
        \includegraphics[width = \linewidth]{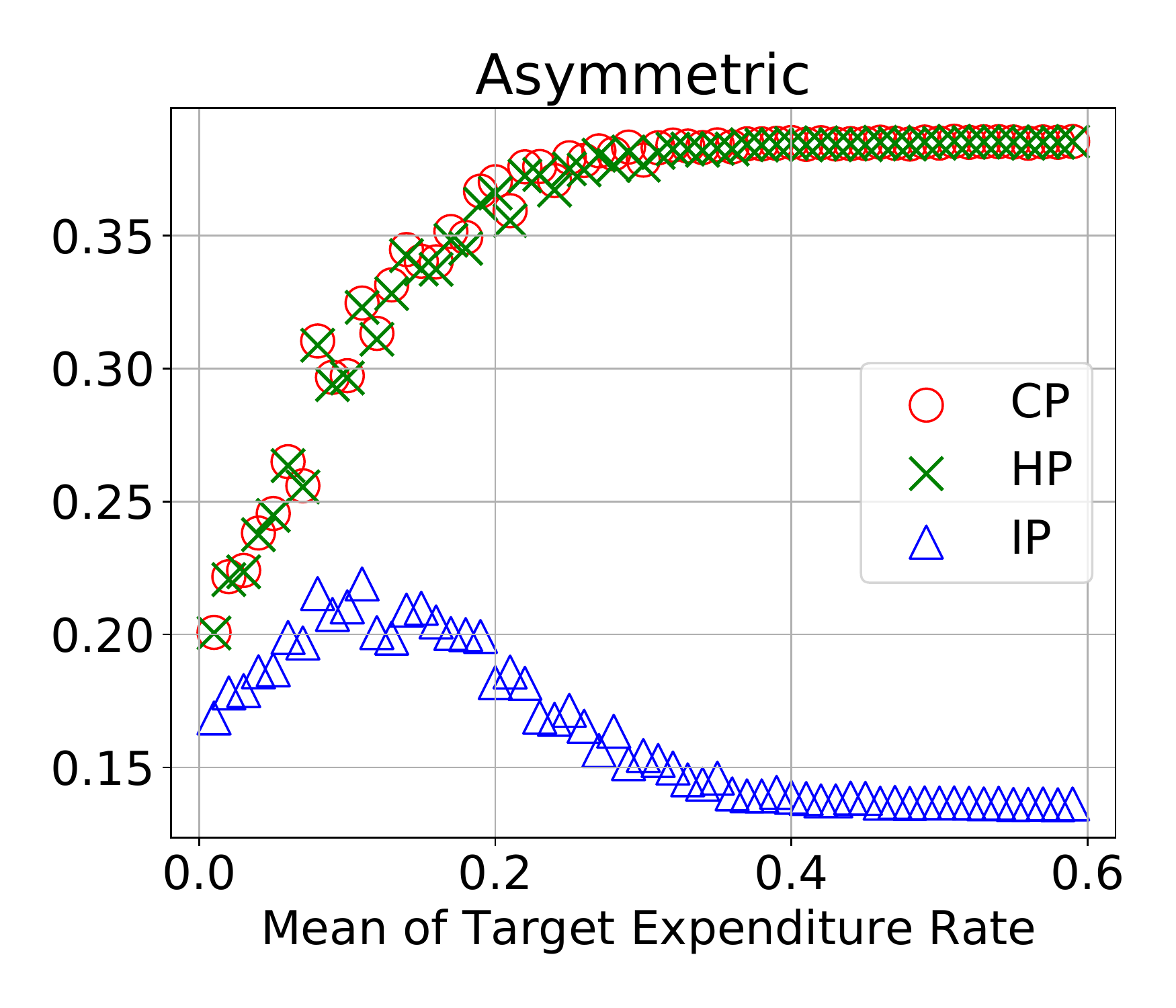}
        \caption{Different value distributions. }
        \label{fig:Appendix.Asymetric4}
    \end{subfigure}
    \caption{Experiment Results on Asymmetric Synthetic Data}
    \label{fig:Appendix.Asymmetric}
\end{figure*}

%%%%%%%%%%%%%%%%%%%%%%%%%%%%%%%%%%%%%%%%%%%%%%%%%%%%%%%%%%%%%%%%%%%%%%%%%%%%%%%
%%%%%%%%%%%%%%%%%%%%%%%%%%%%%%%%%%%%%%%%%%%%%%%%%%%%%%%%%%%%%%%%%%%%%%%%%%%%%%%

\end{document}